\newtheorem{theorem}{Theorem}[section]
\newtheorem{lemma}[theorem]{Lemma}
\newtheorem{definition}[theorem]{Definition}
\newtheorem{corollary}[theorem]{Corollary}
\newtheorem{claim}[theorem]{Claim}
\newtheorem{fact}[theorem]{Fact}
\newtheorem{remark}[theorem]{Remark}
\newtheorem{problem}[theorem]{Problem}
\newtheorem{hypothesis}[theorem]{Hypothesis}
\newtheorem{observation}[theorem]{Observation}
\newcommand{\sq}{\hbox{\rlap{$\sqcap$}$\sqcup$}}
\newcommand{\qed}{\hspace*{\fill}\sq}
\newenvironment{proof}{\noindent {\bf Proof.}\ }{\qed\par\vskip 4mm\par}
\newenvironment{proofof}[1]{\medskip \noindent {\bf Proof of #1:}\quad }
{\qed\par\vskip 4mm\par}
\begin{document}

\title{Integrality Gaps and Approximation Algorithms for Dispersers and Bipartite Expanders}

\author[1]{Xue Chen\thanks{\texttt{xchen@cs.utexas.edu}}}
%\author[1]{David Zuckerman\thanks{\texttt{diz@cs.utexas.edu}}}
\affil[1]{Department of Computer Science, University of Texas at Austin}

\maketitle

\thispagestyle{empty}

\begin{abstract}
We study the problem of approximating the quality of a disperser. A bipartite graph $G$ on $([N],[M])$ is a $(\rho N,(1-\delta)M)$-disperser if for any subset $S\subseteq [N]$ of size $\rho N$, the neighbor set $\Gamma(S)$ contains at least $(1-\delta)M$ distinct vertices. Our main results are strong integrality gaps in the Lasserre hierarchy and an approximation algorithm for dispersers.
\begin{enumerate}
\item For any $\alpha>0$, $\delta>0$, and a random bipartite graph $G$ with left degree $D=O(\log N)$, we prove that the Lasserre hierarchy cannot distinguish whether $G$ is an $(N^{\alpha},(1-\delta)M)$-disperser or not an $(N^{1-\alpha},\delta M)$-disperser. 
\item For any $\rho>0$, we prove that there exist infinitely many constants $d$ such that the Lasserre hierarchy cannot distinguish whether a random bipartite graph $G$ with right degree $d$ is a $(\rho N, (1-(1-\rho)^d)M)$-disperser or not a $(\rho N, (1-\Omega(\frac{1-\rho}{\rho d + 1-\rho}))M)$-disperser. We also provide an efficient algorithm to find a subset of size exact $\rho N$ that has an approximation ratio matching the integrality gap within an extra loss of $\frac{\min\{\frac{\rho}{1-\rho},\frac{1-\rho}{\rho}\}}{\log d}$. 
\end{enumerate}
Our method gives an integrality gap in the Lasserre hierarchy for bipartite expanders with left degree~$D$. $G$ on $([N],[M])$ is a $(\rho N,a)$-expander if for any subset $S\subseteq [N]$ of size $\rho N$, the neighbor set $\Gamma(S)$ contains at least $a \cdot \rho N$ distinct vertices. We prove that for any constant $\epsilon>0$, there exist constants $\epsilon'<\epsilon,\rho,$ and $D$ such that the Lasserre hierarchy cannot distinguish whether a bipartite graph on $([N],[M])$ with left degree $D$ is a $(\rho N, (1-\epsilon')D)$-expander or not a $(\rho N, (1-\epsilon)D)$-expander.
\end{abstract}

%\thispagestyle{empty}
%\setcounter{page}{0}
%\newpage

\newpage 
\setcounter{page}{1}

\section{Introduction}
In this work, we study the vertex expansion of bipartite graphs. For convenience, we always use $G$ to denote a bipartite graph and $[N]\cup [M]$ to denote the vertex set of $G$. Let $D$ and $d$ denote the maximal degree of vertices in $[N]$ and $[M]$, respectively. For a subset $S$ in $[N] \cup [M]$ of a bipartite graph $G=([N],[M],E)$, we use $\Gamma(S)$ to denote its neighbor set $\{j|\exists i \in S, (i,j) \in E\}$. We consider the following two useful concepts in bipartite graphs:
\begin{definition}
A bipartite graph $G=([N],[M],E)$ is a $(k,s)$-disperser if for any subset $S\subseteq [N]$ of size~$k$, the neighbor set $\Gamma(S)$ contains at least $s$ distinct vertices. 
\end{definition}
\begin{definition}
A bipartite graph $G=([N],[M],E)$ is a $(k,a)$-expander if for any subset $S\subseteq [N]$ of size~$k$, the neighbor set $\Gamma(S)$ contains at least $a \cdot k$ distinct vertices. It is a $(\le K, a)$ expander if it is a $(k,a)$-expander for all $k \le K$. 
\end{definition}
Because dispersers focus on hitting most vertices in $[M]$, and expanders emphasize that the expansion is in proportion of the degree $D$ , it is often more convenient to use parameters $\rho,\delta$, and $\epsilon$ for $k=\rho N, s=(1-\delta)M,$ and $a=(1-\epsilon)D$ for dispersers and expanders.

These two combinatorial objects have wide applications in computer science. Dispersers are well known for obtaining non-trivial derandomization results, e.g., for derandomization of inapproximability results for MAX Clique and other NP-Complete problems \cite{Zuc96,TZ04,Zuc07}, deterministic amplicifation \cite{Sip}, and oblivious sampling \cite{Zuc96b}. Dispersers are also closely related to other combinatorial constructions such as randomness extractors, and some constructions of dispersers follow the constructions of randomness extractors directly\cite{TZ04,BKSSW, Zuc07}. Explicit constructions achieving almost optimal degree have been designed by Ta-Shma\cite{TaS02} and Zuckerman\cite{Zuc07}, respectively, in different important parameter regimes. 

For bipartite expanders, it is well known that the probabilistic method provides very good expanders, and some applications depend on the existence of such bipartite expanders, e.g., proofs of lower bounds in different computation models \cite{Grig,BOT}. Expanders also constitute an important part in other pseudorandom constructions, such as expander codes \cite{SS96} and randomness extractors \cite{TZ04,CRVW,GUV07}. A beautiful application of bipartite expanders was given by Buhrman et.al.\cite{BMRV} in the static menbership problem (see \cite{CRVW} for more applications and the reference therein). Explicit constructions for expansion $a=(1-\epsilon)D$ with almost-optimal parameters have been designed in \cite{CRVW} and \cite{TZ04,GUV07} for constant degree and super constant degree respectively.

We consider the natural problem of how to approximate the vertex expansion of $\rho N$-subsets in a bipartite graph $G$ on $[N] \cup [M]$ in terms of the degrees $D,d,$ and the parameter $\rho$. More precisely, given a parameter $\rho$ such that $k=\rho N$, it is natural to ask what is the size of the smallest neighbor set over all $\rho N$-subsets in~$[N]$. To the best of our knowledge, this question has only been studied in the context of expander graphs  when $G$ is $d$-regular with $M=N$ and $D=d$ by bounding the second eigenvalue. In \cite{Kahale}, Kahale proved that the second eigenvalue can be used to show the  graph $G$ is a $(\le \rho N, \frac{D}{2})$-expander for a $\rho \ll poly(\frac{1}{D})$. Moreover, Kahale showed that some Ramanujan graphs do no expand by more than $D/2$ among small subsets, which indicates $D/2$ is the best parameter for expanders using the eigenvalue method. In another work \cite{WZ99}, Wigderson and Zuckerman pointed out that the expander mixing lemma only helps us determine whether the bipartite graph $G$ is a $(\rho N, (1-\frac{4}{D\rho})N)$-disperser or not, which is not helpful if $D\rho\le 4$. Even if $D\rho =\Omega(1)$, the expander mixing lemma is unsatisfactory because a random bipartite graph on $[N] \cup [M]$ with right degree $d$ is an $(N,(1-O((1-\rho)^d))M)$-disperser with high probability. Therefore Wigderson and Zuckerman provided an explicit construction for the case $d\rho=\Omega(1)$ when $d=N^{1-\delta+o(1)}$ and $\rho=N^{-(1-\delta)}$ for any $\delta\in (0,1)$. However, there exist graphs such that the second eigenvalue is close to 1 but the graph has very good expansion property among small subsets \cite{KV05,BGHMPD}. Therefore the study of the eigenvalue is not enough to fully characterize the vertex expansion. On the other hand, it is well known that a random regular bipartite graph is a  good disperser and a  good expander simultaneously, it is therefore natural to ask how to certify a random bipartite graph is a good disperser or a good expander.
%The main difference between dispersers and expander graphs is that dispersers focus on the property that every $\rho N$-subset in $[N]$ is connected to a large fraction of vertices in $[M]$, but expander graphs requires every subset has relatively many neighbors.

Our main results are strong integrality gaps and an approximation algorithm for the vertex expansion problem in bipartite graphs. We prove the integrality gaps in the Lasserre hierarchy, which is a strong algorithmic tool in approximation algorithm design such that most currently known semidefinite programming based algorithms can be derived by a constant number of levels in this hierarchy. 

We first provide integrality gaps for dispersers in the Lasserre hierarchy. It is well known that a random bipartite graph on $[N] \cup [M]$ is an $(N^{\alpha},(1-\delta)M)$-disperser with very high probability when $N$ is large enough and left degree $D=\Theta_{\alpha,\delta}(\log N)$, and these dispersers have wide applications in theoretical computer science \cite{Sha02,Zuc07}. We show an average-case complexity of the disperser problem that given a random bipartite graph, the Lasserre hierarchy cannot approximate the size of the subset in $[N]$ (equivalently the min-entropy of the disperser) required to hit at least $0.01$ fraction of vertices in $[M]$ as its neighbors. The second result is an integrality gap for any constant $\rho>0$ and random bipartite graphs with constant right degree $d$ (the formal statements are in section \ref{int_gap_dis}).
\begin{theorem}(Informal Statement)\label{superdegree}
For any $\alpha \in (0,1)$ and any $\delta \in (0,1)$, the $N^{\Omega(1)}$-level Lasserre hierarchy cannot distinguish whether, for a random bipartite graph $G$ on $[N]\cup [M]$ with left degree $D=O(\log N)$:
\begin{enumerate}
\item $G$ is an $(N^{\alpha},(1-\delta)M)$-disperser,
\item $G$ is not an $(N^{1-\alpha},\delta M)$-disperser.
\end{enumerate}
\end{theorem}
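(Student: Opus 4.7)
The statement has two halves: first, that a random bipartite graph $G$ with left degree $D=O(\log N)$ is an $(N^{\alpha},(1-\delta)M)$-disperser with high probability; second, that the Lasserre hierarchy at level $N^{\Omega(1)}$ admits a pseudo-expectation pretending there exists $S\subseteq [N]$ with $|S|=N^{1-\alpha}$ and $|\Gamma(S)|\le\delta M$.

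For the first half I would apply the first moment method. Fix $S\subseteq [N]$ of size $N^{\alpha}$ and $U\subseteq [M]$ of size $\delta M$. In the random model where each left vertex picks $D$ neighbors uniformly, the $D\cdot N^{\alpha}$ edges from $S$ independently miss $U$ with probability $1-\delta$ each, so $\Pr[\Gamma(S)\cap U=\emptyset]\le(1-\delta)^{DN^{\alpha}}$. A union bound over choices of $S$ and $U$ yields failure probability at most $\binom{N}{N^{\alpha}}\binom{M}{\delta M}(1-\delta)^{DN^{\alpha}}$, which is $o(1)$ provided $D\ge C\log N$ for a sufficiently large constant $C=C(\alpha,\delta)$.

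For the second half I would build the pseudo-expectation from a planted distribution. Choose $D=\alpha\log N/\log(1/\delta)$ so that $\delta^{D}=N^{-\alpha}$, sample $T\subseteq [M]$ uniformly of size $\delta M$, and let $S(T)=\{i\in[N]:\Gamma(i)\subseteq T\}$. Define
\[
\tilde{\mathbb{E}}[x_A\, y_B]\;:=\;\Pr_T\bigl[A\subseteq S(T)\ \text{and}\ B\subseteq T\bigr]
\]
for subsets $A\subseteq[N]$, $B\subseteq[M]$ with $|A|+|B|\le 2t$. The moment matrix is then the Gram matrix of the random vectors $f_{(A,B)}(T)=\mathbf{1}[A\subseteq S(T)]\cdot\mathbf{1}[B\subseteq T]$ under $\mathbb{E}_T$, so PSD-ness of the moment matrix is automatic. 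The edge constraint $x_i(1-y_j)=0$ for $(i,j)\in E$ holds pointwise: $i\in S(T)$ forces $\Gamma(i)\subseteq T$ and in particular $j\in T$. Finally $\tilde{\mathbb{E}}[\sum_i x_i]=N\delta^{D}=N^{1-\alpha}$ and $\tilde{\mathbb{E}}[\sum_j y_j]=\delta M$, matching the targets.

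The main obstacle is enforcing $\sum_i x_i=N^{1-\alpha}$ as an exact polynomial identity on the pseudo-expectation, not merely in expectation: since $|S(T)|$ fluctuates around its mean, $\tilde{\mathbb{E}}[q(x,y)(\sum_i x_i-N^{1-\alpha})]$ need not vanish for general $q$. I would repair this by passing to a reweighted distribution over $T$ (or a mixture of planted distributions at slightly different values of $|T|/M$) chosen so that the equality constraint holds exactly while preserving the Gram structure that yielded PSD-ness. The feasibility of this reweighting at level $t=N^{\Omega(1)}$ should rest on the small-scale vertex expansion of a random graph: for every $A$ with $|A|\le t$ one has $|\Gamma(A)|=(1-o(1))D|A|$ with high probability, whence $\Pr_T[A\subseteq S(T)]$ factorizes as $\prod_{i\in A}\delta^{D}$ up to a small multiplicative error. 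This approximate product structure keeps the corrective reweighting local and unable to destroy PSD-ness, and it is also the reason the level cannot be pushed beyond $N^{\Omega(1)}$, since above that scale neighborhood collisions become non-negligible and the product structure breaks down.
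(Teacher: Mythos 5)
Your planted-distribution approach cannot produce the required integrality gap, and the parameter choice in your second half is inconsistent with your first half. With $D = \alpha\log N/\log(1/\delta)$, for any fixed $D$-left-regular graph $G$ and a uniformly random size-$\delta M$ set $T$ one has $\mathbb{E}_T|S(T)| \approx N\delta^D = N^{1-\alpha}$, so by averaging some concrete $T$ satisfies $|S(T)| \geq N^{1-\alpha}$. For $\alpha < 1/2$ this exceeds $N^{\alpha}$, and any $N^{\alpha}$-subset of such an $S(T)$ has its entire neighborhood inside $T$, hence neighborhood of size at most $\delta M < (1-\delta)M$ --- so $G$ fails to be an $(N^{\alpha}, (1-\delta)M)$-disperser and property (1) is contradicted. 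The degree that gives your planted distribution the right expected size thus rules out the disperser you need. Conversely, if $D$ is enlarged so that property (1) holds, then necessarily $N\delta^D < N^{\alpha}$, hence $|S(T)| < N^{\alpha}$ for every admissible $T$, and no reweighting over the pairs $(S(T),T)$ can make $\sum_i x_i$ pseudo-equal to $N^{1-\alpha}$: the support simply has no mass at the required subset size. This is the structural obstacle, not a technicality --- $\tilde{\mathbb{E}}$ as you define it is a genuine expectation over actual set pairs and can only report what exists, whereas a Lasserre gap needs moments that look feasible but correspond to no assignment. (Your union bound over $(S,U)$ pairs for property (1) is also too lossy: $\binom{M}{\delta M}=e^{\Theta(M)}$ swamps $(1-\delta)^{DN^{\alpha}}=e^{-O(N^{\alpha}\log N)}$ once $M$ is polynomially larger than $N^{\alpha}\log N$, which it must be at these parameters; the paper instead applies Chernoff to $|\Gamma(S)|$ and unions over $S$ alone, as in Lemma~\ref{integral}.)

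The paper's argument is entirely different. It views ``$j\notin\Gamma(S)$'' for $j\in[M]$ as the width-$d$ conjunction $\wedge_{i\in\Gamma(j)}\bar{x}_i$, where $d=O(\log N)$ is the \emph{right} degree, and applies the Lasserre lower bound for random CSPs whose predicate is the dual of a length-$d$, rate-$(1-\epsilon)$, constant-relative-distance linear code (the restatement of Tulsiani's Theorem~4.3 just before Lemma~\ref{Expans_Int}). That result produces a pseudo-distribution under which every constraint of a highly \emph{un}satisfiable random instance is ``satisfied'' with probability $1/|C|=q^{-\epsilon d}=N^{-\Omega(1)}$ --- the fake witness the SDP needs. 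The global cardinality constraint is then enforced exactly by identifying $[N]$ with $[n]\times F_q$ so that $\sum_{(i,\alpha)}\vec{v}_{(i,\alpha)}(1)=k\vec{v}_{\emptyset}$ holds by symmetry, after which Lemma~\ref{ingre} (Guruswami--Sinop--Zhou) gives positive semidefiniteness of the cardinality-constraint matrix; Lemma~\ref{Las_val_phi} (the list-CSP reduction) supplies the flexibility in $\rho$. The final statement is reached via the complementary-set equivalence between Problem~\ref{prob1} and Problem~\ref{prob2} and a swap of the two sides of the bipartite graph. The two ingredients your sketch does not supply --- moments for an infeasible instance, and an exact Lasserre cardinality identity --- are precisely what this pipeline provides.
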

\begin{theorem}(Informal Statement)\label{i_int_gap}
For any $\rho>0$, there exist infinitely many $d$ such that the $\Omega(N)$-level Lasserre hierarchy cannot distinguish whether, for a random bipartite graph $G$ on $[N]\cup [M]$ with right degree $d$:
\begin{enumerate}
\item $G$ is a $(\rho N, (1-(1-\rho)^d)M)$-disperser,
\item $G$ is not a $\big( \rho N, (1- C_0 \cdot \frac{1-\rho}{\rho d + 1-\rho}) M \big)$-disperser for a universal constant $C_0>0.1$.
\end{enumerate}
\end{theorem}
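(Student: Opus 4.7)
The plan is to handle the YES and NO directions of the indistinguishability claim separately, on the same random graph $G$, drawn by choosing for each $j \in [M]$ an independent uniform $d$-subset of $[N]$ as $\Gamma(j)$. The free parameter $M$ will be taken proportional to $N/(1-\rho)^d$, which is the regime in which the theorem is nontrivial, and the ``infinitely many $d$'' will correspond to those values where the belief-propagation fixed point below has a clean solution.

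\textbf{YES side.} For a fixed $S \subseteq [N]$ of size $\rho N$ and a fixed $j \in [M]$, the probability that $\Gamma(j) \cap S = \emptyset$ equals $\binom{(1-\rho)N}{d}/\binom{N}{d} = (1-\rho)^d(1+o(1))$. Because the neighborhoods $\Gamma(j)$ are independent across $j \in [M]$, a Chernoff bound gives $\Pr[|\Gamma(S)| < (1-(1-\rho)^d)M] \leq \exp(-\Omega((1-\rho)^d M))$, and a union bound over the $\binom{N}{\rho N} \leq 2^{N H(\rho)}$ choices of $S$ yields the disperser claim as soon as $M \gtrsim N H(\rho)/(1-\rho)^d$, consistent with our parameter choice.

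\textbf{NO side.} Formulate the Lasserre relaxation as maximizing $\sum_{j \in [M]} \prod_{i \in \Gamma(j)}(1-x_i)$ subject to $x_i^2 = x_i$ and $\sum_i x_i = \rho N$. To exhibit a high-level pseudo-expectation $\tilde{\mathbb{E}}$, I would mirror the Gibbs measure
\[ \Pr[x] \;\propto\; \mu^{\sum_i x_i} \prod_{j \in [M]} \bigl[1 + (\lambda-1)\,\mathbf{1}[\Gamma(j) \cap \{i : x_i = 1\} = \emptyset]\bigr] \]
on the infinite $(D,d)$-biregular tree, solving the belief-propagation fixed-point equations for $(\lambda,\mu)$ so that the resulting marginals satisfy $\Pr[x_i = 1] = \rho$ and $\Pr[\Gamma(j) \subseteq \{i : x_i = 0\}] = C_0 (1-\rho)/(\rho d + 1-\rho)$. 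Direct calculation of the BP equations (in the limit $D \to \infty$ if needed) shows this pair of marginals is realizable. Defining $\tilde{\mathbb{E}}[\prod_{i \in I} x_i]$ as the marginal of the tree Gibbs measure on the induced neighborhood of $I$ then gives pseudo-moments whose objective is the desired $C_0(1-\rho)/(\rho d + 1-\rho)\cdot M$.

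\textbf{Main obstacle.} The naive tree-based construction is automatically consistent and PSD only up to level equal to half the girth, which is merely $\Theta(\log N)$ for a random bipartite graph of constant degree. Pushing the argument to $\Omega(N)$ levels requires exploiting the stronger bipartite vertex expansion of $G$: following the strategy of Schoenebeck's Lasserre lower bound for random 3XOR, one shows that any ``short'' certificate of PSD-violation in the pseudo-moment matrix must involve pairwise-disjoint tree neighborhoods of $G$, and the expansion of $G$ rules out such certificates up to level $\Omega(N)$. A matrix-completion style argument then extends the local tree marginals to a globally PSD moment matrix. The additional subtlety relative to standard random-CSP Lasserre gaps is the global cardinality constraint $\sum_i x_i = \rho N$, which is inherently non-local: I plan to enforce it approximately (with an $O(\sqrt{N})$ slack absorbed into the constant $C_0$) by tilting the BP fixed point appropriately, or by conditioning the pseudo-distribution via a Laplace-transform correction. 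Carrying out the expansion-based completion while respecting this cardinality constraint is the main technical challenge I anticipate.
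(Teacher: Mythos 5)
Your YES side is the same as the paper's (Lemma~3.7), so no issue there. The NO side, however, proposes a route that is genuinely different from the paper's and, as sketched, has gaps that are unlikely to close.

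The paper does not go through belief propagation or tree Gibbs marginals at all. Instead it reduces to known Lasserre lower bounds for random MAX-CSPs with pairwise-independent predicates (Grigoriev/Schoenebeck/Tulsiani, as packaged in Chan's Theorem~G.8), and then introduces a new object, \emph{list CSPs}, to cover arbitrary $\rho$. Concretely, $[N]$ is identified with $[n]\times F_q$, each constraint $j\in[M]$ becomes a width-$d$ constraint over $F_q$, and a pseudo-distribution for a standard CSP with a pairwise-independent subspace predicate $C\subseteq F_q^d$ is lifted to a list CSP (each variable takes $k$ values) via the ``$\oplus Q$'' operation of Lemma~3.4, where $Q\subseteq F_q$ is a $k$-subset and $C$ ``stays in $Q$'' with probability $p_0$. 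The soundness $1-C_0\frac{1-\rho}{\rho d+1-\rho}$ is \emph{exactly} the quantity $1-p_0$ coming from explicit pairwise-independent subspaces (dual Hamming codes for $k=1$, a quadratic construction for $k=q-1$; Lemmas~2.5 and~2.6), and this is also where the ``infinitely many $d$'' comes from — these codes exist only for specific block lengths. The global cardinality constraint is handled cleanly and exactly via the Guruswami--Sinop--Zhou Lemma~2.3: the lifted vectors satisfy $\sum_{(i,\alpha)}\vec u_{(i,\alpha)}(1)=k n\,\vec v_\emptyset$, hence after complementation $\sum_{i\in[N]}\vec u_i(1)=\rho N\,\vec v_\emptyset$, which makes the constraint matrix PSD by the lemma.

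There are three concrete problems with your sketch. First, the tree-marginal construction is PSD-consistent only up to roughly half the girth ($\Theta(\log N)$ levels), and you propose to push it to $\Omega(N)$ via a Schoenebeck-style expansion and ``matrix completion.'' But the expansion-based Lasserre completions known for random CSPs are tied to the algebraic structure of the predicate (resolution width over $F_q$, or XOR closure), not to generic local marginals; the paper explicitly remarks that the resolution-proof machinery does \emph{not} carry over once each variable takes $k>1$ values, which is precisely why the list-CSP lift is introduced. Your plan neither supplies that algebraic structure nor replaces it, so the extension from local to global PSD is a real gap, not a routine step. Second, the global constraint $\sum_i x_i=\rho N$ must hold exactly in the Lasserre relaxation (the matrix in~\eqref{eq:a2} must be PSD); absorbing $O(\sqrt N)$ slack into $C_0$ is not an option without a separate argument that the relaxed constraint implies the exact one at the SDP level, and ``tilting the BP fixed point'' or a ``Laplace-transform correction'' is not spelled out enough to see that the corrected object remains a valid pseudo-expectation. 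Third, your claim that a BP fixed point realizes marginals with $\Pr[\Gamma(j)\subseteq\{x_i=0\}]=C_0\frac{1-\rho}{\rho d+1-\rho}$ is asserted, not derived; in the paper this constant is a coding-theoretic fact about how well a pairwise-independent subspace can concentrate on a sub-cube $Q^d$ (Benjamini--Gurel-Gurevich--Peled), and it is not obvious that a BP fixed point achieves the same optimum, nor that it does so for infinitely many $d$.

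In short: the approach you propose is a different method in kind, and the hardest part — turning local tree marginals into a globally PSD moment matrix at linear level while exactly satisfying $\sum_i x_i = \rho N$ — is exactly where the paper brings in machinery (list CSPs, pairwise-independent subspaces, and the GSZ PSD lemma) that your plan does not replace.
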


We also provide an approximation algorithm to find a subset of size exact $\rho N$ with a relatively small neighbor set when the graph is not a good disperser. For a balanced constant $\rho$ like $\rho\in [1/3,2/3]$, $\frac{\rho}{1-\rho}$ and $\frac{1-\rho}{\rho}$ are just constants, and the approximation ratio of our algorithm is close to the integrality gap in Theorem \ref{i_int_gap} within an extra loss of $\log d$.
\begin{theorem}\label{i_approx}
Given a bipartite graph $([N],[M])$ that is not a $(\rho N, (1-\Delta)M)$-disperser with right degree $d$, there exists a polynomial time algorithm that returns a $\rho N$-subset in $[N]$ with a neighbor set of size at most $\big( 1 - \Omega( \frac{\min\{(\frac{\rho}{1-\rho})^2,1\}}{\log d} \cdot d (1-\rho)^d)\Delta \big)M$.
\end{theorem}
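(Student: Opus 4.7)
The plan is to design an LP-based rounding algorithm. Introduce the natural relaxation with variables $\bar x_i\in[0,1]$ for $i\in[N]$ representing fractional membership of $i$ in $\bar S=[N]\setminus S$, and $y_j\in[0,1]$ for $j\in[M]$ representing that $j$ is missed by $\Gamma(S)$, subject to $\sum_i \bar x_i=(1-\rho)N$ and $y_j\le \bar x_i$ for every edge $(i,j)\in E$; maximize $\sum_j y_j$. Since the promised $\bar S^{*}=[N]\setminus S^{*}$ gives an integer feasible solution of value at least $\Delta M$, the LP optimum $L^{*}$ is at least $\Delta M$.

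Then I would decompose by dyadic levels on $y_j$: set $B_k=\{j:y_j\in(2^{-k-1},2^{-k}]\}$ for $k=0,1,\ldots,K$ with $K=O(\log d)$, after truncating the levels with $y_j$ below roughly $1/d$ (whose total contribution to $L^{*}$ is absorbed into the target bound). By pigeonhole, some level $k^{*}$ contributes $\Omega(\Delta M/\log d)$ to $L^{*}$, so $|B_{k^{*}}|\ge \Omega(\Delta M\cdot 2^{k^{*}}/\log d)$. Let $T=\{i:\bar x_i\ge 2^{-k^{*}-1}\}$; every $j\in B_{k^{*}}$ has all $d$ neighbors in $T$, and the budget constraint $\sum \bar x_i=(1-\rho)N$ forces $|T|\le 2^{k^{*}+1}(1-\rho)N$.

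The rounding biases $\bar S$ toward $T$. If $|T|\le (1-\rho)N$, include $T$ deterministically and pad arbitrarily, which covers all of $B_{k^{*}}$. Otherwise take a uniformly random $(1-\rho)N$-subset of $T$, so each $j\in B_{k^{*}}$ is covered with probability roughly $\bigl((1-\rho)N/|T|\bigr)^{d}\ge 2^{-(k^{*}+1)d}$. To recover the $d(1-\rho)^{d}$ factor rather than the weaker $(2^{-k^{*}})^{d}$, I would combine this targeted candidate with a baseline uniformly random $(1-\rho)N$-subset of $[N]$, whose expected coverage is $\approx (1-\rho)^{d}M$ independently of $\Delta$, and output the better of the two; balancing the two bounds across levels gives the claimed $d(1-\rho)^{d}\Delta/\log d$ factor.

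The hard part will be matching the $d$-factor in $d(1-\rho)^{d}$ together with the asymmetric factor $\min\{(\rho/(1-\rho))^{2},1\}$. The factor of $d$ does not come out of the naive "multiply $d$ marginals" rounding, so the analysis must exploit edge-level structure, most plausibly by summing the contribution of each of the $d$ neighbors of a vertex $j\in B_{k^{*}}$ (which is what produces the derivative-like $d(1-\rho)^{d-1}$) and then carefully combining the targeted and baseline samples so that one extra factor of $(1-\rho)$ is paid. The split $\rho<1/2$ versus $\rho\ge 1/2$ corresponds to which side of $N/2$ the set $(1-\rho)N$ lies on, and therefore whether the padding step or the random subsampling of $T$ is the bottleneck; I expect these two regimes to require separate analyses, combining into the $\min\{(\rho/(1-\rho))^{2},1\}$ factor.
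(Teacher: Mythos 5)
Your LP relaxation is too weak to drive the claimed guarantee: since the all-$(1-\rho)$ vector is always feasible, $\bar x_i = 1-\rho$ for every $i$ yields $y_j = 1-\rho$ for every $j$ and hence LP value $(1-\rho)M$ regardless of the graph. So the LP value is \emph{not} tied to $\Delta$; it is at least $(1-\rho)M$ even when $\Delta = (1-\rho)^d$, and for bi-regular graphs the uniform vector is in fact an LP optimum by symmetry. When the LP solver returns this uninformative solution, every $y_j$ sits in the same dyadic level, $T$ becomes essentially all of $[N]$, and your rounding degenerates to a uniformly random $(1-\rho)N$-subset, which leaves only $(1-\rho)^d M$ vertices of $[M]$ uncovered. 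That falls short of the target $\Omega\bigl(\tfrac{\min\{(\rho/(1-\rho))^2,1\}}{\log d}\, d(1-\rho)^d \Delta\bigr)M$ whenever $\Delta \gtrsim \log d / d$ (for instance $\rho=1/2$, $\Delta$ a small constant, $d$ large), so the "best of targeted and baseline" cannot be balanced into the stated bound. A secondary, independent issue is that even when the LP solution is structured, your threshold-and-subsample step gives probability roughly $2^{-(k^*+1)d}$ per $j\in B_{k^*}$, which is $4^{-d}$ already at $k^*=0$; this is worse than $(1-\rho)^d$ for $\rho<3/4$ and has no mechanism to produce the crucial extra factor of $d$ that you correctly identify as the hard part.

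The paper avoids both problems by relaxing to an SDP whose value is genuinely controlled by $\Delta$: it maximizes $\sum_{j}\bigl\|\frac{1}{d}\sum_{i\in\Gamma(j)}\vec v_i\bigr\|_2^2$ subject to $\|\vec v_i\|\le 1$ and the global constraint $\sum_i\vec v_i=\vec 0$. The zero-sum constraint kills the uniform solution that plagues your LP, and a direct construction shows the SDP value is at least $\min\{(\rho/(1-\rho))^2,1\}\Delta M$. The rounding is also qualitatively different from dyadic thresholding: first Gaussian projection with truncation at $t=\Theta(\sqrt{\log d})$ (in the style of Alon--Naor) produces biases $z_i\in[-1,1]$ with $|\sum_i z_i|=O(N/d)$ while losing only a $1/\log d$ factor; then a Charikar--Makarychev--Makarychev-style rounding sets $x_i=1$ with probability $1-\rho\pm\delta z_i$ (sign chosen globally at random with $\delta=(1-\rho)\sqrt{2/d}$), and the probability that all $d$ neighbors of $j$ are selected works out to $(1-\rho)^d(1-\delta^2/(1-\rho)^2)^{d/2}\cosh\bigl(y_j\cdot\frac{d}{2}\ln\frac{1-\rho+\delta}{1-\rho-\delta}\bigr)$. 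Expanding $\cosh$ to second order around its minimum is exactly what produces the missing factor of $d$ (since $(d/2)^2\cdot(2/d)=d/2$); a final size adjustment handles the $\pm O(N/d^{1.5})$ slack in $|\bar S|$. If you want to salvage an LP/combinatorial approach you would need both a relaxation that rules out the trivial uniform solution and a rounding step that captures this quadratic (second-order) gain; neither is present in the current proposal.
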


For expanders, we will show that for any constant $\epsilon>0$, there is another constant $\epsilon'<\epsilon$ such that the Lasserre hierarchy cannot distinguish the bipartite graph is a $(\rho N, (1-\epsilon')D)$ expander or not a $(\rho N, (1-\epsilon)D)$ expander for small $\rho$ (the formal statement is in Section \ref{int_gap_exp}). To the best knowledge, this is the first hardness result for such an expansion property. For example, it indicates that the Lasserre hierarchy cannot distinguish between a $(\rho N, 0.6322 D)$-expander or not a $(\rho N, 0.499 D)$-expander.
\begin{theorem}\label{i_exp}
For any $\epsilon>0$ and $\epsilon'<\frac{e^{-2\epsilon}-(1-2\epsilon)}{2\epsilon}$, there exist constants $\rho$ and $D$ such that the $\Omega(N)$-level Lasserre hierarchy cannot distinguish whether, a bipartite graph $G$ on $[N] \cup [M]$ with left degree $D$:
\begin{enumerate}
\item $G$ is an $(\rho N,(1-\epsilon')D)$-expander,
\item $G$ is not an $(\rho N, (1-\epsilon)D)$-expander.
\end{enumerate}\end{theorem}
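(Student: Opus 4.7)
The plan is to construct the hard instance as a random bipartite graph $G$ on $[N]\cup[N]$ with left degree $D$, where $D$ is a large constant and $\rho$ is chosen so that $\rho D$ is slightly below $2\epsilon$. The key heuristic is $(1-\rho)^D\approx e^{-2\epsilon}$, so the expected neighbor-set size of a uniformly random $\rho N$-subset is $N(1-(1-\rho)^D)\approx f(2\epsilon)\cdot\rho N D$ where $f(c):=(1-e^{-c})/c$. The theorem's hypothesis $\epsilon'<(e^{-2\epsilon}-(1-2\epsilon))/(2\epsilon)$ is equivalent to $1-\epsilon'>f(2\epsilon)$, and since one can check $f(2\epsilon)>1-\epsilon$ always (from the Taylor expansion of $e^{-2\epsilon}$), the theorem places the YES threshold $(1-\epsilon')D$ strictly above $f(2\epsilon)D$, which is strictly above the NO threshold $(1-\epsilon)D$, creating the necessary room for an integrality gap.

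To establish condition~(1), I would show that in this random $G$ every $\rho N$-subset $S$ satisfies $|\Gamma(S)|\ge(1-\epsilon')\rho N D$ with high probability. Taking $\rho D$ slightly below $2\epsilon$, the expectation $f(\rho D)\rho N D$ exceeds $(1-\epsilon')\rho N D$; a Chernoff bound gives concentration of $|\Gamma(S)|$ within $o(\rho N D)$ of its expectation, and a union bound over the $\binom{N}{\rho N}=\exp(H(\rho)N)$ subsets still leaves overall failure probability $\exp(-\Omega(N))$ for $D$ large enough.

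For condition~(2), I would construct an $\Omega(N)$-level Lasserre pseudo-expectation $\tilde{\mathbb{E}}$ with objective $\tilde{\mathbb{E}}[\sum_j y_j]<(1-\epsilon)\rho N D$. The natural i.i.d.\ Bernoulli pseudo-distribution only attains objective $\approx f(\rho D)\rho N D>(1-\epsilon)\rho N D$ and hence does not witness (2); instead, the construction must introduce positive pairwise correlations $\tilde{\mathbb{E}}[x_{i_1}x_{i_2}]>\rho^2$ for pairs $(i_1,i_2)\in\Gamma(j)$ sharing a right-neighbor. Such correlations increase $\tilde{\mathbb{E}}[\prod_{i\in\Gamma(j)}(1-x_i)]$ beyond the Bernoulli value $(1-\rho)^D$ and correspondingly decrease $\tilde{\mathbb{E}}[y_j]=1-\tilde{\mathbb{E}}[\prod(1-x_i)]$; tuning the correlation magnitude produces a pseudo-expectation with objective strictly below $(1-\epsilon)\rho N D$.

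The main obstacle will be to verify that the correlated pseudo-expectation remains valid at Lasserre level $\Omega(N)$: positive semi-definiteness of the moment matrix up to size $\Omega(N)\times\Omega(N)$, together with compatibility with the constraint $\sum_i x_i=\rho N$, which forces the average pairwise correlation to essentially vanish so that the positive correlations on $\Gamma(j)$-pairs must be compensated by small negative correlations elsewhere. For random $D$-regular bipartite graphs with girth $\Omega(\log N)$ and constant $D$, the locally tree-like structure should enable the local pseudo-distributions to be extended consistently to high-order moments via standard Lasserre lower bound techniques for random constraint satisfaction problems, paralleling the construction used for the disperser gap in Theorem~\ref{i_int_gap} and yielding the claimed $\Omega(N)$-level integrality gap.
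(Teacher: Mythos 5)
Your high-level framing of the problem is correct: the YES side is established by concentration plus union bound over a random sparse bipartite graph, the arithmetic identity $1-\epsilon' > \frac{1-e^{-2\epsilon}}{2\epsilon} > 1-\epsilon$ is the right gap, and the NO side requires a Lasserre pseudo-distribution that makes $\tilde{\mathbb{E}}[\prod_{i\in\Gamma(j)}(1-x_i)]$ exceed the independent value $(1-\rho)^d$. But the proposal has a genuine gap exactly at the heart of the matter: you never construct the pseudo-distribution, and the claim that ``standard Lasserre lower bound techniques for random CSPs'' handle it does not hold up. The Grigoriev--Schoenebeck machinery (used in Theorem~\ref{Las_value_m}) produces a pseudo-distribution that is uniform over the pairwise-independent predicate $C\subset F_q^d$ on each constraint; this automatically fixes $\tilde{\mathbb{E}}[x_{i,\alpha}]=1/q$ for the indicator of each label, hence after complementation the only $\rho$ one gets ``for free'' is $\rho=1-1/q$, which is close to $1$ for large $q$. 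To reach the regime $\rho d<1$ (and in particular $\rho=1/q$ small, which is what the expander corollary needs) the paper had to invent the list-CSP framework and the $\oplus Q$ lift (Definition of list CSP, Lemma~\ref{Las_val_phi}, Lemmas~\ref{k=1} and~\ref{k=m-1}), compatibility with the cardinality constraint being enforced through the $\sum_i\vec{v}_i=\rho N\,\vec{v}_\emptyset$ identity and Lemma~\ref{ingre}. Your ``tune the pairwise correlations and hope the moment matrix stays PSD'' step is precisely the missing lemma, and it is not filled by locally tree-like structure: tree-like neighborhoods give consistency at depth $O(\log N)$, while the $\Omega(N)$-level lower bound needs the constraint-hypergraph expansion argument applied to a predicate that is a linear \emph{subspace}, which is why the specific algebraic constructions in Lemmas~\ref{k=1} and~\ref{k=m-1} are needed (not just any positively correlated distribution).

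A secondary but real divergence: you set $M=N$ and work with a $D$-regular graph with $D$ large, so your constraints would have width $D$. The paper instead takes $M=cN$ with $c$ large, uses a \emph{small} constant right degree $d$ as the CSP width, and recovers the large left degree $D\approx dM/N$ by restricting $[N]$ to the subset $L$ of nearly average degree. This is not cosmetic: the pairwise-independent subspace constructions live in $F_q^d$ and the gap $\epsilon'-\epsilon$ is exactly the third-and-higher order terms $\sum_{i\ge 3}(-1)^{i-1}\binom{d}{i}\rho^i$, which vanish if $d$ is not kept a small constant independent of $D$. If you want to stay with $M=N$ you would still need a pairwise-independent subspace of $F_q^D$ staying in a $(q-1)$-subset with the right probability, and you would have to redo the degree-regularity argument on the right side instead of the left; the paper's asymmetric choice avoids both.
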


We study the vertex expansion for a bipartite graph $G$ on $[N] \cup [M]$ with the parameter $\rho$ as a Constraint Satisfaction Problem (CSP) with a global constraint as follows: For $i \in [N]$, let $x_i \in \{0,1\}$ denote whether vertex $i$ is in the subset or not. For $j \in [M]$, $j$ is a neighbor of the subset iff the disjunction function on $j$'s neighbors $OR_{i \in \Gamma(j)}x_i$ is true. Then finding a $\rho N$-subset with the fewest neighbors is the same as assigning $\rho N$ variables of $\{x_1,\cdots,x_N\}$ to be 1 such that the assignment minimizes the number of satisfied constraints from $[M]$. Hence our results of Theorem \ref{i_int_gap} and Theorem \ref{i_approx} provide an almost tight pair of an integrality gap and an approximation algorithm for a CSP with a global constraint. We also introduce list Constraint Satisfaction Problems (list CSP) for the construction of integrality gaps for any $\rho \in (0,1)$, which allow every variable to take $k$ values  in the alphabet instead of 1 value in the classical CSPs and relax the value of each constraint from $\{0,1\}$ to $N$.

Constraint Satisfaction Problems is a class of fundamental optimization problems that has been studied in approximation algorithms and hardness of approximation for the last twenty years. For most natural CSPs, it is NP-hard to find an optimal assignment. Actually, it is even NP-hard to find an assignment that is better than a random assignment for many CSPs \cite{Chan}. In a surprising development, under the Unique Game Conjecture (UGC) \cite{Khot} tight hardness results matching integrality gaps of simple semidefinite programmings have been shown for many CSPs. Khot et.al.\cite{KKMO} showed dictatorship tests can be converted to UGC hardness results for CSPs. In a seminal work \cite{Rag08}, Raghavendra proved that any integrality gap of a simple semidefinite programming for a CSP can be translated to a dictatorship test with the corresponding completeness and soundness, which implies a UGC hardness result for the CSP according to \cite{KKMO}. Raghavendra also provided a generic algorithm for any CSP with an approximation ratio matching the integrality gap, which unifies the theory of approximation algorithms, integrality gaps, and hardness of approximation on CSPs based on UGC.

A CSP with a global constraint, which is a CSP concerning assignments restricted by an extra global cardinality constraint such as fixing the number of a given element in the assignment, is a natural generalization of CSPs but not well understood in general compared to the extensive studies in CSPs. Several important problems such as Small-Set Expansion\cite{RS10} and Max Bisection can be formulated as a CSP with a global constraint. Small-Set Expansion hypothesis (SSE) was proposed by Raghavendra and Steurer\cite{RS10} as a natural extension of UGC with more structures. Before stating SSE, we define the edge expansion of a subset~$S$ in a $d$-regular graph $H=(V,E)$ to be $\frac{E(S,V \setminus S)}{d \cdot \min \{|S|,|V \setminus {S}|\}}$.
\begin{hypothesis}(Small-Set Expansion Hypothesis \cite{RS10})
For every constant $\eta >0$, there exists a small $\delta >0$ such that given a graph $H=(V,E)$ it is NP-hard to distinguish whether:
\begin{enumerate}
\item There exists a vertex set $S$ of size $\delta |V|$ such that the edge expansion of $S$ is at most $\eta$.
\item Every vertex sets $S$ of size $\delta |V|$ has edge expansion at least $1-\eta$.
\end{enumerate} 
\end{hypothesis}
%Small-Set Expansion can be restated as follows: Given a parameter $\delta$, let $x_i=\{0,1\}$ for every vertex $i$ in the graph $H=(V,E)$, the objective is to find an assignment with $\delta |V|$ variables of $\{x_1,\cdots,x_n\}$ equalling 1 that minimizes $\sum_{(u,v)\in E}|x_u-x_v|$. 
Raghavendra, Steurer and Tetali \cite{RST10} provided an efficient algorithm that given $\delta$ and $H=(V,E)$ with edge expansion at most $\epsilon$ among subsets of size at most $\delta |V|$, it finds a subset of size $O(\delta |V|)$ with edge expansion $O(\sqrt{\epsilon \log(1/\delta)})$. In a later work, Raghavendra, Steurer and Tulsiani\cite{RST12} proved a hardness result matching the approximation ratio for small enough $\epsilon$ that it is SSE-hard to distinguish whether the Min Bisection of $H$ is $O(\epsilon)$ or $\Omega(\sqrt{\epsilon})$. For other CSPs with a global constraints, even less is known. For example, Raghavendra provided a generic approximation algorithm for any integrality gaps of CSPs in \cite{Rag08}; to the best of our knowledge, there is no known generic approximation algorithm for any CSPs with a global constraint. For Max Bisection, partition the vertex set of a graph into two parts with the same size while maximizing the crossing edges, is a natural generalization of Max-Cut problem. It is known that the approximation of Max Bisection cannot be better than Max Cut (the reduction is to make two copies of the graph), however, the best approximation ratio of Max Bisection is 0.8776 \cite{ABG13} to our best knowledge, which is slightly smaller than the approximation ratio of Max Cut 0.8786 \cite{GW}.

In a graph $H=(V,E)$ that is not necessarily bipartite, it is more interesting to consider the vertices in $V \setminus S$ connected to $S$, which is $\Gamma(S) \setminus S$, and define the vertex expansion of $S$ to be $|V| \cdot \frac{|\Gamma(S) \setminus S|}{|S| \cdot |V \setminus S|}$ in $H$. Recently, Louis, Raghavendra and Vempala \cite{LRV13} showed that vertex expansion is much harder to approximate than edge expansion, which is easy to approximate by Cheeger's inequality from the second eigenvalue. They proved that it is SSE-hard to determine whether the vertex expansion of a given graph is at most $O(\epsilon)$ or at least $\Omega(\sqrt{\epsilon \log d})$ for small enough $\epsilon$. At the same time, they also provided an efficient algorithm based on semidefinite programmings with an asymptotic matching approximation ratio that given a graph with vertex expansion $\epsilon$ and bounded degree $d$, finds a subset with vertex expansion $O(\sqrt{\epsilon \log d})$. 

When the vertex expansion in expanders is independent of the left degree, we prove that it is SSE-hard to distinguish between good expanders and bad expanders when $\rho$ is small enough and degree is large enough by amplifying the gap in the hardness result of \cite{LRV13}(see Theorem \ref{amplify_gap} for a formal statement).
\begin{theorem}(Informal Statement)
For any small constant $\delta$ and any constant $\Delta>1+\delta$, given a bipartite graph $G$ on $[N] \cup [M]$ with $\rho$ small enough and left degree $D$ large enough, it is SSE-hard to distinguish between
\begin{enumerate}
\item There exists a $\rho N$ subset of $[N]$ with at most $(1+\delta) \cdot \rho N$ neighbors.
\item Every $\rho N$ subset of $[N]$ has at least $\Delta \cdot \rho N$ neighbors.
\end{enumerate}
\end{theorem}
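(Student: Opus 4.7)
The plan is to prove the stated SSE-hardness by a direct reduction from the vertex-expansion hardness of Louis--Raghavendra--Vempala \cite{LRV13}, using the natural ``bipartite double'' of a graph to translate vertex expansion into bipartite expansion, and then choosing $D$ large enough to amplify the $\sqrt{\epsilon\log d}$ gap of \cite{LRV13} into the desired gap $(1+\delta)$ vs.\ $\Delta$. No genuine amplification of SSE itself is needed: the quadratic $\epsilon\mapsto\sqrt{\epsilon\log d}$ improvement already present in \cite{LRV13} is enough to push the soundness arbitrarily high once $d$ is taken large compared to~$1/\delta$.

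Concretely, starting from an SSE-hard instance $H$ of vertex expansion on $n$ vertices with degree~$d$ from \cite{LRV13}, I would construct a bipartite graph $G$ on $[N]\cup[M]$ with $N=M=n$, identifying both sides with $V(H)$ and placing an edge between $i\in[N]$ and $j\in[M]$ iff $i=j$ or $\{i,j\}\in E(H)$. Then $G$ has left degree $D=d+1$, and for every $S\subseteq[N]$,
\[
\Gamma_G(S) \;=\; N_H[S], \qquad |\Gamma_G(S)| \;=\; |S| + |N_H(S)\setminus S|.
\]
Since the LRV normalization of vertex expansion is $\phi_V(S)=\tfrac{n}{n-|S|}\cdot\tfrac{|N_H(S)\setminus S|}{|S|}$, for $\rho=|S|/n$ small we have $|\Gamma_G(S)|/|S| = 1 + (1-\rho)\phi_V(S)$, so the bipartite expansion factor of a $\rho N$-subset in $G$ matches $1+\phi_V(S)$ up to a $(1+o_\rho(1))$ factor.

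To finish, I would choose $\epsilon=c\delta$ with $c$ a small absolute constant so that the LRV completeness guarantee $\phi_V(S)\le O(\epsilon)$ gives $|\Gamma_G(S)|\le (1+\delta)\rho N$ in the YES case, and then pick $d$ (hence $D=d+1$) so large that the LRV soundness guarantee $\phi_V(S)\ge C\sqrt{\epsilon\log d}$ forces $|\Gamma_G(S)|\ge \Delta\rho N$ in the NO case; this only requires $d\ge \exp\!\bigl(\Theta((\Delta-1)^2/\delta)\bigr)$, which is a constant once $\delta$ and $\Delta$ are fixed. Finally, $\rho$ is taken small enough to invoke the LRV hardness at subset-size scale $\rho n$.

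The main obstacle is simply parameter bookkeeping: $\delta$ must lie in the regime where the LRV hardness is stated (``small enough $\epsilon$''), and one must verify that the $(1-\rho)$ factor and any lower-order terms from the normalization do not degrade the gap below $(1+\delta)$ in the YES case or below $\Delta$ in the NO case. Both reduce to choosing $\rho$ smaller than a function of $\delta,\Delta$, which is permitted by the statement. Beyond these choices the reduction is syntactic, so SSE-hardness transfers immediately from $H$ to~$G$.
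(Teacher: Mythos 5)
Your bipartite double of $H$ is precisely the paper's intermediate graph $G_0$, so that part of the reduction is right. The genuine gap is that you stop there and claim that enlarging $d$ (hence $D$) suffices; that claim fails for two reasons rooted in Theorem~\ref{vertex_exp} itself. First, the soundness guarantee is $|\Gamma(S)\setminus S| \ge (\min\{10^{-10},\,C_0\sqrt{\epsilon\log d}\}-\eta)|S|$: the vertex expansion is capped at the absolute constant $10^{-10}$, so no choice of $d$ can push $\phi_V$ above that, let alone to $\Delta-1$ for a general $\Delta$. Second, $d$ is not a free parameter; the theorem gives degree $d=O(1/\epsilon)$, so making $d$ larger forces $\epsilon$ smaller, and then $\sqrt{\epsilon\log d}=O(\sqrt{\epsilon\log(1/\epsilon)})\to 0$. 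Your bound ``$d\ge\exp(\Theta((\Delta-1)^2/\delta))$'' treats $\sqrt{\epsilon\log d}$ as unbounded and $d$ as independent of $\epsilon$; neither holds.

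What the paper actually does---and what you declared unnecessary---is a $k$-fold graph-product amplification on top of $G_0$. It builds $G_1$ on $V_1^k \cup \bigl(V_2^k \cup V_1\times[W]\bigr)$, connecting $(a_1,\dots,a_k)$ to $(b_1,\dots,b_k)$ when each $(a_i,b_i)$ is an edge of $G_0$, and adding an auxiliary layer $V_1\times[W]$ with $W=\tfrac{\delta}{2}\rho^{k-1}n^{k-1}$ that penalizes sets whose coordinate projections saturate past $|V|/2$, the regime where the per-coordinate LRV bound no longer applies. Completeness behaves like $(1+\epsilon)^k$, and soundness like $(1+\sqrt{\epsilon})^k$, or else at least $\tfrac{n}{2}W/|T|=\tfrac{\delta}{4\rho}$ via the auxiliary layer in the saturated case. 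Choosing $k\approx\delta/\epsilon$ keeps the YES case near $1+\delta$, while $(1+\sqrt{\epsilon})^k\approx e^{\delta/\sqrt{\epsilon}}$ exceeds any fixed $\Delta$ once $\epsilon$ is small enough, and taking $\rho$ below $\delta/(4\Delta)$ handles the saturated branch. The large left degree $D$ in the theorem statement comes from the $k$-fold product, not from the degree of the LRV base instance. Without this amplification step your reduction only covers the narrow range where $\Delta-1$ is at most on the order of $10^{-10}$, not arbitrary $\Delta>1+\delta$.
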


%In another extreme case that the bipartite graph $G$ is a $(\rho N,(1+\epsilon)\rho M)$-disperser, we follow the work of \cite{LRV13} to show a hardness result on the disperser problem based on the Small-Set Expansion hypothesis. At the same time, we also use the method of finding the sparsest cut by \cite{CMM06,BFKMNNS,LM14} to show an efficient approximation algorithm.
In another extreme case that the bipartite graph $G$ has a $\rho N$-subset with at most $(1+\epsilon)\rho M$ neighbor, We provide an efficient algorithm with an asymptotic matching approximation ratio by following the previous work of\cite{LRV13,CMM06,BFKMNNS,LM14}.
\begin{theorem}(Informal Statement)
Given a regular bipartite graph on $[N] \cup [M]$ that is $D$-regular in $[M]$ and $d$-regular in $[N]$, suppose $d|D$ and the smallest neighbor set of $\rho N$-subsets in $[N]$ is $\rho(1+\epsilon)M$. There is a polynomial time algorithm that finds a subset $S$ of size $[0.99 \rho N, 1.01 \rho N]$ with at most $(1+\tilde{O}_{\rho}(\sqrt{\epsilon \log d}/\rho))|S|\frac{M}{N}$ neighbors.
\end{theorem}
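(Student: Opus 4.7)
The plan is to adapt the SDP-based vertex-expansion framework of~\cite{LRV13} together with the Gaussian threshold rounding analyses of~\cite{CMM06,BFKMNNS,LM14}, tailored to a bipartite graph with a global cardinality constraint.

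First, I would write an SDP relaxation. Introduce a unit vector $u_0$, a vector $u_i$ for each $i\in[N]$ satisfying $\langle u_0,u_i\rangle=\|u_i\|^2$ and $\langle u_i,u_{i'}\rangle\ge 0$ (so that $u_i$ behaves like a scalar indicator of $i\in S$), and a vector $w_j$ for each $j\in[M]$ required to satisfy $\langle w_j,u_i\rangle=\|u_i\|^2$ for every $i\in\Gamma(j)$ together with $\|w_j\|^2\le\langle u_0,w_j\rangle\le 1$. Minimize $\sum_{j\in[M]}\|w_j\|^2$ subject to the cardinality constraint $\sum_{i\in[N]}\|u_i\|^2=\rho N$. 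The hypothesized $\rho N$-subset $S^{\star}$ with $|\Gamma(S^{\star})|=\rho(1+\epsilon)M$ gives an integer feasible solution by setting $u_i=u_0$ for $i\in S^{\star}$ and $w_j=u_0$ for $j\in\Gamma(S^{\star})$ (remaining vectors set to $0$), so the SDP optimum is at most $\rho(1+\epsilon)M$.

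Second, I would apply a Gaussian threshold rounding. Let $\tilde{u}_i=u_i-\|u_i\|^2 u_0$ denote the fluctuation components orthogonal to $u_0$; draw a standard Gaussian $g$ in the span of the $\tilde{u}_i$'s and output $S=\{i\in[N]:\|u_i\|^2+\langle g,\tilde{u}_i\rangle\ge\tau\}$ with a threshold $\tau$ tuned so that $\mathbb{E}|S|=\rho N$. Since the marginal $\Pr[i\in S]\approx\|u_i\|^2$, the expected size matches $\rho N$ up to lower-order terms. For each $j\in[M]$, the SDP constraint $\langle w_j,u_i\rangle=\|u_i\|^2$ for $i\in\Gamma(j)$ forces the Gaussian scalars $\{\langle g,\tilde{u}_i\rangle\}_{i\in\Gamma(j)}$ to be positively correlated through $w_j$; I then bound $\Pr[j\in\Gamma(S)]$ via the tail of the maximum of these (up to) $d$ correlated Gaussians, as in~\cite{LRV13,LM14}. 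Summing over $j\in[M]$ yields $\mathbb{E}|\Gamma(S)|\le\bigl(1+\tilde{O}_{\rho}(\sqrt{\epsilon\log d}/\rho)\bigr)\rho M$. A standard shift of $\tau$ combined with concentration then forces $|S|\in[0.99\rho N,1.01\rho N]$ with high probability while preserving the bound on $|\Gamma(S)|$.

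The main obstacle is the tight rounding analysis to obtain the $\sqrt{\log d}$ factor, which matches the SSE hardness of~\cite{LRV13} and goes beyond what a plain hyperplane cut gives: it requires the refined level-set and orthogonal-separator techniques of~\cite{CMM06,LM14}, in particular careful use of the constraint $\langle w_j,u_i\rangle=\|u_i\|^2$ to control the covariance structure of the $d$ Gaussians attached to each right vertex $j$. The $1/\rho$ scaling arises because each $\tilde{u}_i$ has squared norm at most $\|u_i\|^2(1-\|u_i\|^2)$, so the Gaussian deviations scale with $\rho$, translating into a $1/\rho$ factor when we measure the excess neighbor count against the baseline $\rho M$. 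Handling the global cardinality constraint concurrently with the vertex-expansion objective only costs lower-order factors absorbed by the $\tilde{O}_\rho(\cdot)$ notation.
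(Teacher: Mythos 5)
Your route is genuinely different from the paper's, and as written it has two gaps.

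\textbf{What the paper actually does.} The paper never writes an SDP on the bipartite graph directly. Instead, exploiting regularity and $d\mid D$, it uses Hall's theorem to build a $k$-to-$1$ map $\psi:[M]\to[N]$ with $k=M/N=D/d$, and a companion (non-bipartite) graph $H$ on vertex set $[M]$ of maximum degree $O(d)$ so that $\Gamma_G(T)=\Gamma_H(\psi^{-1}(T))$ and $\psi^{-1}(T)\subseteq\Gamma_H(\psi^{-1}(T))$ for every $T\subseteq[N]$. This converts the disperser question into Small Set Vertex Expansion on $H$, for which the Louis--Makarychev algorithm is invoked essentially as a black box; the only modification is an added SDP equality $\|\vec v_j-\vec v_k\|_2^2=0$ whenever $\psi(j)=\psi(k)$, so that the orthogonal-separator rounding assigns $\psi$-fibers coherently. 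Because the final set must have size close to $\rho N$ but the black box only promises $|T_0|\le(1+\delta)\rho N$, the paper runs the algorithm iteratively on residual graphs, accumulating $T_0\cup T_1\cup\cdots$ until the target size is reached, using that removing a small sub-solution degrades the remaining optimum's expansion only mildly.

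\textbf{Gap 1: your SDP is too weak.} The SDP you display has constraints of the form $\langle u_0,u_i\rangle=\|u_i\|^2$, $\langle u_i,u_{i'}\rangle\ge 0$, $\langle w_j,u_i\rangle=\|u_i\|^2$, and a cardinality equality, but it contains no $\ell_2^2$ triangle inequalities. The orthogonal-separator technology of Chlamtac--Makarychev--Makarychev, Bansal et al., and Louis--Makarychev that you cite to obtain the $\sqrt{\log d}$ dependence is built specifically on the vectors forming a negative-type ($\ell_2^2$) metric and on the constraint $\sum_k\langle v_j,v_k\rangle\le \rho|V|\|v_j\|^2$. Without the $\ell_2^2$ inequalities, there is no license to use orthogonal separators, and a plain Gaussian threshold on your fluctuation vectors $\tilde u_i$ only controls $\Pr[j\in\Gamma(S)]$ via a union bound over $\Gamma(j)$; the covariance constraint $\langle w_j,u_i\rangle=\|u_i\|^2$ does not by itself replace the $\ell_2^2$ structure that lets the analysis be localized to the $d$ Gaussians of one constraint. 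Saying "this requires orthogonal separators" without an SDP that supports them leaves the $\sqrt{\log d}$ factor unproven.

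\textbf{Gap 2: the cardinality control.} You adjust the threshold $\tau$ so that $\mathbb E|S|=\rho N$ and then say "a standard shift of $\tau$ combined with concentration" forces $|S|\in[0.99\rho N,1.01\rho N]$. That step is not justified: all the indicators $\{i\in S\}$ share the single Gaussian $g$, and the fluctuations $\langle g,\tilde u_i\rangle$ can be highly positively correlated, so $|S|$ has no a priori concentration around its mean. This is precisely why the paper does not try to fix the size in one shot; instead it repeats the rounding on residual instances until the accumulated set reaches $(1-\delta)\rho N$, and then stops. If you want a one-shot argument you need an additional variance bound on $|S|$, which your SDP as stated does not provide.

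In short, the paper's reduction to a non-bipartite vertex-expansion instance (with the $k$-to-$1$ map and the $\vec v_j=\vec v_k$ constraint for $\psi(j)=\psi(k)$) is the load-bearing idea, together with the iterative size-fixing loop. Your direct bipartite SDP could plausibly be made to work, but it would need the $\ell_2^2$ constraints and the spreading constraint to invoke orthogonal separators, and a mechanism other than bare "concentration" to hit the cardinality window.
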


This paper is organized as follows. We will define some basic notations and provide some background for our problems, then we give a brief overview of our proof in Section 2. We prove the integrality gaps of Theorem \ref{i_exp}, Theorem \ref{superdegree}, and Theorem \ref{i_int_gap} in Section 3, and provide the approximation algorithm of Theorem \ref{i_approx} in Section 4. For bipartite graphs with a $\rho N$-subset of at most $(1+\epsilon)\rho M$ neighbors, we prove the hardness result and provide the approximation algorithm in Section 5.

\subsection{Discussion}
We study the vertex expansion of bipartite graphs as a list CSP with a global constraint and provide an integrality gap in Theorem \ref{i_int_gap} and an approximation algorithm in Theorem \ref{i_approx} that are almost tight to each other. It is therefore of great interest to prove a hardness result matching the integrality gap and the approximation ratio. Not only will this unify integrality gaps, hardness of approximation and approximation algorithms for CSPs with a global constraint, but also it will provide an explicit construction of a $(\rho N,1-(1-\rho)^d M)$-disperser, which beats all known constructions of dispersers and matches the parameters from the probabilistic method. The construction is simple: suppose there is a reduction from SSE (UGC) to the disperser problem with $d$ and $\rho$. Start with a known instance in the sound case of SSE (UGC) from\cite{KV05,BGHMPD} and follow the reduction to obtain a bipartite graph $G$ on $[N] \cup [M]$. $G$ is in the sound case from the property of the reduction, which demonstrates it is a $(\rho N, (1-(1-\rho)^d)M)$-disperser. 

It is known that UGC is not enough to prove a hardness result for vertex expansion or edge expansion\cite{RS10,RST12}, hence it is interesting to further investigate the Small Set Expansion Hypothesis. More precisely, a common way to prove the hardness of a CSP is to construct a dictatorship test corresponding to the CSP. The dictatorship test corresponding to the vertex expansion problem with completeness $1 - \frac{1-\rho}{\rho d+1-\rho}$ and soundness $1 - (1-\rho)^d$ for infinitely many $d$ is known from \cite{BGP,DM13}. The standard reduction from UGC to CSPs using dictatorship tests \cite{KKMO} always apply folding to balance each boolean cube. However, folding operation (negation) is not supported in expansion problems. 

To the best of our knowledge, all known reductions \cite{RST12,LRV13} from the Small Set Expansion hypothesis only work for dictatorship tests with small noise, but the dictatorship test mentioned above with completeness $1-\frac{1-\rho}{\rho d+1-\rho}$ and soundness $1- (1-\rho)^d$ requires pairwise independence. Because such a reduction from SSE to the disperser problem would provide an explicit construction matching the construction from the probabilistic method, it is interesting to discover more reductions from the Small Set Expansion hypothesis that support more dictatorship tests, which include tests with pairwise independence. Although we show a hardness result for the vertex expansion in bipartite graphs from previous work \cite{LRV13} based on SSE, the hardness result does not shed any light on the relations between $\rho$, the left and right degrees, and the expansion in the bipartite graph. Because the hardness result in \cite{LRV13} only works for very small $\epsilon$ like $<10^{-10}$, the parameters $\rho,D$ become exponential in $\epsilon$ after amplification.

It is of great interest to further study the hardness and integrality gaps of $(k,A)$-expanders in terms of the left degree $D$ and $\rho$. Observe that the integrality gap of Theorem \ref{i_exp} in terms of $\rho$ and $d$ matches the soundness and the completeness in the dictatorship test of \cite{BGP} for parameters $\rho d<1$. However, our estimation fails to provide an integrality gap for $(\rho N, D-1.1)$-expanders even for $(\rho N, D-\sqrt{D})$-expanders. It is well known that a balanced random bipartite graph ($N=\Theta(M)$ and $D=\Theta(d)$) is a $(\rho N,D-1.1)$-expander for $\rho \le \exp(-D)$ with high probability, and such a probabilistic construction plays an important role in the proofs of different lower bounds \cite{Grig,Tul,BOT}. Our estimation in the Lasserre hierarchy for $\rho N$-subsets is $(1-\epsilon)D \cdot \rho N$ for $\epsilon=O(\rho d)$. Because $\epsilon=O(\rho d)<1/D$ for $\rho \le \exp(-D)$, the estimation does not provide a meaningful integrality gap any more. It is natural to ask what is the integrality gap of $(\rho N, D-1.1)$-expanders and  what is the integrality gap in terms of $D$ and $\rho$; the first problem was already asked by Barak \cite{Barak}. Therefore it would be interesting to scale down the degree $D$ in the integrality gap and show more integrality gaps of expanders in terms of $\rho$ and $D$. 

It is interesting to design an algorithm with an approximation ratio matching the integrality gaps especially for integrality gap in Theorem \ref{i_exp} and Theorem \ref{i_int_gap}. Such an algorithm may have other applications in computer science like generating a random bipartite graph and verifying that it is a good expander/disperser. At the same time, it is even of great interest to provide a generic approximation algorithm matching the integrality gap and hardness for any CSP with a global constraint. 

\section{Prelimilaries}
%In this work, we are interested in the following two problems:\begin{problem}\label{prob1}
%Given a bipartite graph $([N],[M],E)$ and $\rho$, determine the size of the smallest neighbor set over all subsets of size at least $\rho N$ in $[N]$.\end{problem}
%\begin{problem}\label{prob2}Given a bipartite graph $([N],[M],E)$ and $\gamma$, determine the size of the largest subset in $[N]$ with a neighbor set of size $\le \gamma M$.\end{problem}

%First, we prove the equivalence of these two problems with parameters $\rho+\gamma=1$. For a bipartite graph $([N],[M],E)$ and a parameter $\gamma$, let $T$ be the largest subset in $[N]$ with $|\Gamma(T)| \le \gamma M$. Let $S=[M] \setminus \Gamma(T)$. Then $|S|\ge (1-\gamma)M$ and $\Gamma(S) \subseteq [N] \setminus T$.  Since $T$ is the largest subset with $|\Gamma(T)| \le \gamma M$, $S$ is the subset of size at least $(1-\gamma)M$ with the smallest neighbor set. The converse is similar, which shows the equivalence between these two problems. Because we are interested in both problems, we do not assume $N \ge M$ or $\rho<.5$ that happens in most constructions of dispersers. From now on, we focus on the first problem. 

For simplicity, we assume the bipartite graph is $d$-regular on $[M]$. The expected number of neighbors of a random subset $S\subseteq [N]$ of size $\rho N$ is $E[|\Gamma(S)|]=M \cdot \big(1- \frac{N-|S|}{N} \cdot \frac{N-1-|S|}{N-1} \cdots \frac{(N-d+1)-|S|}{N-d+1} \big)=(1-(1-\rho)^d+o(1))M$, which demonstrates there is a subset $S$ of size $\rho N$ with $|\Gamma(S)|$ at most $(1-(1-\rho)^d+o(1))M$. On the other hand, $|\Gamma(S)|/M \ge \rho$ for any subset $S$ of size $\rho N$ if the bipartite graph is $D$-regular bipartite on $[N]$. Sometime, it is more convenient to work with $\delta$ on a $(\rho N, (1-\delta)M)$-disperser and the approximation ratio of the algorithm in Section 4 is in terms of $\delta$.

For convenience, let ${[n] \choose k}$ denote the subsets in $[n]$ with size $k$ and ${[n] \choose \le k}$ denote the subsets of size at most $k$. We always use $1_{E}$ to denote the indicator function of an event $E$, e.g., $1_{x_i=1}=x_i$ for $x_i \in \{0,1\}$. We say $C \subseteq F_q^d$ is a pairwise independent subspace of $F_q$ iff $C$ is a subspace and any 2 variables in the uniform distribution of $C$ are independent.
 
In this work, we always use $\Lambda$ to denote a Constraint Satisfaction Problem and $\Phi$ to denote an instance of the CSP. A CSP $\Lambda$ is specified by a width $d$, a finite field $F_q$ for a prime power $q$ and a predicate $C \subset F_q^d$. An instance $\Phi$ of $\Lambda$ consists of $n$ variables and $m$ constraints such that every constraint $j$ is in the form of $x_{j,1}\cdots x_{j,d}\in C+\vec{b}_j$ for some $\vec{b}_j \in F_q^d$ and $d$ variables $x_{j,1},\cdots,x_{j,d}$.
 
We prove our integrality gaps in the Lasserre hierarchy. It is a variant of the hierarchies that have been studied by several authors including Shor\cite{Shor}, Parrilo\cite{Parr}, Nesterov\cite{Nest} and Lasserre\cite{Las}. For convenience, we adopt to the notations of the Lasserre hierarchy and provide a description of the Lasserre hierarchy in Section 2.2. 

\subsection{Proof Overview}
We outline our approaches in this section. 

To prove the integrality gaps of vertex expansion in the Lasserre hierarchy, we first illustrate the idea to prove the integrality gaps of dispersers and then move to the integrality gaps of expanders. We start with a random graph $G$ that is $d$-regular on the right such that it is a $(\rho N, (1-(1-\rho)^d-o(1))M)$-disperser from $[N]$ to $[M]$ and a $(\le \exp(-d)M,d-1.1)$-expander from $[M]$ to $[N]$ (this happens with high probability). Next we write a natural $\{0,1\}$-programming of vertex expansion among $\rho N$-subsets in $G$ as a CSP with a global constraint $\sum x_i=\rho N$ and an objective function $\min \sum_{j \in [M]}1_{\vee_{i \in \Gamma(j)}x_i}$, which seeks the size of the smallest neighbor set over all $\rho N$-subsets in $[N]$. For convenience, we rewrite the objective function as $\min \sum_{j \in [M]}(1-1_{\wedge_{i \in \Gamma(j)}\bar{x_i}})=M - \max_{j\in [M]}1_{\wedge_{i \in \Gamma(j)}\bar{x_i}}$ such that it looks like a standard CSP of width $d$ that maximizes the objective value.

Now let us turn to the SDP solution in the Lasserre hierarchy for vertex expansion among $\rho N$-subsets. A first try would be to think each vertex $i \in [N]$ corresponds to a variable in the CSP, and each vertex in $M$ corresponds to a constraint $\wedge_{i \in \Gamma(j)}\bar{x_i}$. It is a $(\le \exp(-d)M,d-1.1)$-expander from the right hand side $[M]$ to the left hand side $[N]$ such that it has very good variable expansion property in constraints. Hence it is possible to use the known construction of SDP solution in the Lasserre hierarchy on random instances of MAX-CSPs in \cite{Grig, Sch, Tul, Chan}. However, this approach has one problem and only works for $\rho=1-1/q$. 

The first thing is to verify the SDP solution satisfies the global constraint $\sum \bar{x}_i=(1-\rho) N$, namely the matrix induced by the global constraint is positive semidefinite. An important ingredient in our proof is from the work of Guruswami, Sinop and Zhou\cite{GSZ}, which proves that the matrix induced by the global constraint is positive semidefinite as long as the summation of vectors satisfy the constraint $\sum \vec{v}_i=\rho N \cdot \vec{v}_{\emptyset}$. Actually, it is not difficult to prove $\sum \vec{v}_i=\rho N \cdot \vec{v}_{\emptyset}$ is also necessary in the vertex expansion problem because of the equation $\sum x_i=\rho N$. To obtain $\sum \vec{v}_i=\rho N \cdot \vec{v}_{\emptyset}$, we assume that the number of variables in the CSP is $n$ such that $[N]=[n] \times F_q$ instead of $N$ variables and  each vertex in $[N]$ corresponds to a variable with a label in $F_q$ and $\sum_{i \in [n] \times F_q} \vec{v}_i=n \cdot \vec{v}_{\emptyset}$. We notice that such a reduction also works for the SDP solution in the Lasserre hierarchy and provide a estimation $(1-\frac{1}{(q-1)d+1})M$ for the small neighbor set among $\rho N$-subsets for $\rho=1-1/q$ given $F_q$ as the alphabet of the CSP. $\rho=1-1/q$ comes from the fact that the objective function is in terms of $\bar{x_i}$ and $1/q$ fraction of variables are true in the SDP solution of MAX-CSPs. 

To generalize the integrality gap for any $\rho=1-k/q$ especially for $\rho=1/q$ and $k=q-1$, we introduce list Constraint Satisfaction Problems that allow each variable $x_i$ to take $k$ values in the alphabet and relax the value of one constraint from $\{0,1\}$ to $N^{+}$. Our main technical lemma is to prove an lower bound on the SDP value of list CSPs in the Lasserre hierarchy such that we could provide an integrality gap of vertex expansion for any $\rho$. The method introduced by Grigoriev \cite{Grig} and rediscovered by Schoenebeck \cite{Sch} for CSPs using resolution proofs does not work for list CSPs, because the resolution proofs become difficult when each variable is allowed to take $k$ values. Instead of following the previous method, we study an extra property about the pairwise independent predicate $C \subseteq F^d_q$, which tries to find a $k$-subset $Q$ in the alphabet maximizing~$|Q^d \cap C|$. Then we utilize the SDP solution from standard CSP and redefine $x_i=\alpha+Q$ in the list CSP if $x_i=\alpha$ in the CSP. The rest of the proof is to work out the SDP solution in the Lasserre hierarchy and verify the equation $\sum_i \vec{v}_i=\rho N \vec{v}_{\emptyset}$ in order to satisfy the global constraint. More detail of the pairwise independent subspace can be found in Section 2.3, the formal definition of list CSP and the estimation of SDP value of list CSPs in the Lasserre hierarchy can be found in Section 3. Thus we could obtain an integrality gap for  the disperser problems with any $\rho$. 

To obtain the integrality gap of expander problems, we notice the estimation of SDP value for $\rho N$-subsets in the Lasserre hierarchy is at most $(\rho d - \rho^2 {d \choose 2})M$ when $\rho d<1$. If the bipartite graph is also $D$-regular in $[N]$, we rewrite it as $(1 - \rho(d-1)/2)\rho dM=(1 - \rho(d-1)/2)D \cdot \rho N$ from the equation $dM=DN$. Therefore we get an estimation of $(1-\epsilon)D$ expansion for $\rho N$-subsets in the Lasserre hierarchy. So the proof of Theorem \ref{i_exp} is to follow the above proof of general $\rho$ and generate a bipartite graph with the integrality gap that is almost $D$-regular in $[N]$ and almost $d$-regular in $[M]$.

Our approximation algorithm for a $(\rho N, (1-\Delta)M)$-disperser follows the approach of Hast\cite{Hast} and Charikar et.al.\cite{CMM07} by choosing a deliberate preprocessor. The integrality gap implies that the approximation ratio on $\Delta$ can be at most $O(\frac{\rho d + 1 - \rho}{1-\rho} \cdot (1-\rho)^d)$ in the Lasserre hierarchy. We extend the analysis of \cite{AN04} and \cite{CMM07} to achieve an approximation ratio of $\Omega(\frac{\min\{(\frac{\rho}{1-\rho})^2,1\}}{\log d} \cdot d \cdot (1-\rho)^d)$. The main difficulty of the algorithm is to guarantee that the size of the subset returned is exact $\rho N$. Otherwise, for $\rho=1/2$, a random $(1-\frac{1}{\sqrt{d}})\cdot \frac{1}{2} N$-subset can guarantee a neighbor set of size $\le (1-(\frac{1}{2}+\frac{1}{\sqrt{2d}})^d)M$ that beats the integrality gap if $d$ is large enough, because the size of the subset is smaller than the target. One common method to round the SDP of a CSP is to take the inner product of every vector in the SDP and a Gaussian vector as a real value for every variable \cite{GW,MM12}. However, there is less known about how to satisfy the global constraint. 

We first generalize the rounding algorithm of \cite{CMM07} to guarantee a subset of size $(1 \pm \frac{1}{\sqrt{d}})\rho N$. To further obtain a subset of size exact $\rho N$, let $\vec{v}_i$ denote the vector for each vertex $i$ in the left hand side. We insist on adding a new constraint $\sum_{i \in [N]}\vec{v}_i=\vec{0}$ in the SDP to bound the size of the subset because it provides an extra property $\sum_{i \in [N]} \langle\vec{v}_i, \vec{g}\rangle=0$ for any vector $\vec{g}$. Then we replace the first step in the algorithm of \cite{CMM07} by a more cautious rounding process, which is motivated by the work of Alon and Noar\cite{AN04}. Eventually, our algorithm guarantees a $\rho N$-subset with an extra loss of $\frac{\min \{ \frac{\rho}{1-\rho} , \frac{1-\rho}{\rho} \}}{\log d}$ on the approximation ratio compared to the integrality gap, whose loss $\log d$ is from the preprocessor and $\min \{ \frac{\rho}{1-\rho} , \frac{1-\rho}{\rho}\}$ is from the constraints $\sum_{i \in [N]}\vec{v}_i=\vec{0}$ and $|\vec{v}_i|\le 1$ for optimal solution. However, our algorithm do not work for expanders because of the constant lost in front of $\Delta$.

For $(\rho N, \rho (1+\epsilon)M)$-dispersers, the hardness result is based on the work of Louis, Raghavendra and Vempala \cite{LRV13}, which provide a basic hardness result like $1+\epsilon$ and $1+\sqrt{\epsilon \log d}$. Then we amplify the gap using graph products by enlarging the degrees in the bipartite graph. For the approximation algorithm, we follow the approach of Louis and Makarychev \cite{LM14} and apply the idea of finding balanced cut by the sparsest cut algorithm because the expansion of the graph is very small in this case.

%For $(\rho N, \rho (1+\epsilon)M)$-dispersers, our results are reductions between bipartite graphs and normal graphs. The hardness result is based on the work of Louis, Raghavendra and Vempala \cite{LRV13}. For the approximation algorithm, we follow the approach of Louis and Makarychev \cite{LM14} and apply the idea of finding balanced cut by the sparsest cut algorithm because the expansion of the graph is very small in this case.

\subsection{Lasserre Hierarchy}
We provide a short description the semidefinite programming relaxations from the Lasserre hierarchy \cite{Las} (see \cite{Rot,Barak} for a complete introduction of Lasserre hierarchy and sum of squares proofs). We will use $f \in \{0,1\}^S$ for $S \subset [N]$ to denote an assignment on variables in $\{x_i|i \in S\}$. Conversely, let $f_S$ denote the partial assignment on $S$ for $f \in \{0,1\}^n$ and $S \subset [n]$. For two assignments $f \in \{0,1\}^S$ and $g \in \{0,1\}^T$ we use $f \circ g$ to denote the assignment on $S \cup T$ when $f$ and $g$ agree on $S \cap T$.  For a matrix $A$, we will use $A_{(i,j)}$ to describe the entry $(i,j)$ of $A$ and $A\succeq 0$ to denote that $A$ is positive semidefinite. 

Consider a $\{0,1\}$-programming with an objective function $Q$ and constraints $P_0,\cdots,P_m$, where $Q,P_0,\cdots,P_m$ are from ${[n] \choose \le d}\times \{0,1\}^d$ to $\mathbb{R}$:
\begin{align*}
\max & \sum_{R \subset {[n] \choose \le d}, h\in \{0,1\}^R}Q(R,h)1_{x_R=h} \\
\text{Subject to } & \sum_{R \subset {[n] \choose \le d}, h\in \{0,1\}^R}P_j(R,h)1_{x_R=h} \ge 0 & \forall j \in [m]\\
& x_i\in \{0,1\} & \forall i \in [n] 
\end{align*}
Let $y_{S}(f)$ denote the probability that the assignment on $S$ is $f$ in the pseudo-distribution. This $\{0,1\}$-programming\cite{Las} in the $t$-level Lasserre hierarchy is:
\begin{alignat}{2}
\max & \sum_{R \subset {[n] \choose \le d}, h \in \{0,1\}^R}Q(R,h)y_R(h) \nonumber\\
\text{Subject to } &  \big( y_{S\cup T}(f \circ g) \big)_{(S \subset {[n] \choose \le t}, f \in \{0,1\}^S),(T \subset {[n] \choose \le t},g \in \{0,1\}^T)} \succeq 0 \label{eq:p1}
\\& \big( \sum_{R \subset {[n] \choose \le d}, h\in \{0,1\}^R}P_j(R,h)y_{S \cup T \cup R}(f\circ g \circ h) \big)_{(S \subset {[n] \choose \le t}, f \in \{0,1\}^S),(T \subset {[n] \choose \le t},g \in \{0,1\}^T)} \succeq 0 & ,\forall j \in [m] \label{eq:p2}
\end{alignat}

An important tool in the Lasserre hierarchy to prove that the matrices in \eqref{eq:p2} are positive semidefinite is introduced by Guruswami, Sinop and Zhou in \cite{GSZ}, we restate it here and prove it for completeness. Let $u_{S}(f)$ for all $S \in {[n] \choose \le t}, f\in \{0,1\}^S$ be the vectors to explain the matrix \eqref{eq:p1}.

\begin{lemma}\label{ingre}(Restatement of Theorem 2.2 in \cite{GSZ})
If $\sum_{R,h} P(R,h)\vec{u}_R(h)=\vec{0}$, then the corresponding matrix in \eqref{eq:p2} is positive semidefinite.
\end{lemma}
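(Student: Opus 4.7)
The plan is to prove the stronger conclusion that under the hypothesis the matrix in \eqref{eq:p2} is actually the zero matrix, which is trivially positive semidefinite. The key tool is the Cholesky factorization of the moment matrix \eqref{eq:p1} supplied by the vectors $\{\vec{u}_S(f)\}$, which satisfies
$$\langle \vec{u}_A(\alpha), \vec{u}_B(\beta) \rangle = y_{A \cup B}(\alpha \circ \beta)$$
whenever $\alpha$ and $\beta$ agree on $A \cap B$, and equals $0$ otherwise.

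Fix any two indices $(S, f)$ and $(T, g)$ of the constraint matrix. If $f$ and $g$ disagree on $S \cap T$, then the extended assignment $f \circ g \circ h$ is infeasible for every $(R, h)$, so each moment $y_{S \cup T \cup R}(f \circ g \circ h)$ vanishes and the entry is $0$. Otherwise, I would invoke the extended Cholesky factorization (at a sufficiently high level of the hierarchy, which is precisely the level at which \eqref{eq:p2} is defined) to write
$$y_{S \cup T \cup R}(f \circ g \circ h) = \langle \vec{u}_{S \cup T}(f \circ g),\, \vec{u}_R(h) \rangle.$$
Summing over $(R, h)$ and pulling the inner product out, the entry becomes
$$\sum_{R,h} P(R,h)\, y_{S \cup T \cup R}(f \circ g \circ h) = \left\langle \vec{u}_{S \cup T}(f \circ g),\, \sum_{R,h} P(R,h)\, \vec{u}_R(h) \right\rangle = 0,$$
where the last equality uses the hypothesis $\sum_{R,h} P(R,h)\, \vec{u}_R(h) = \vec{0}$. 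Thus every entry of the matrix is $0$.

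The main technical subtlety is justifying the factorization $y_{S \cup T \cup R}(f \circ g \circ h) = \langle \vec{u}_{S \cup T}(f \circ g),\, \vec{u}_R(h) \rangle$ when $|S \cup T|$ can exceed the apparent level $t$; this is a standard property of the Lasserre hierarchy once one works at a level at least $t + d/2$, so that all moments of degree up to $2t + d$ appearing in \eqref{eq:p2} are inner products of Cholesky vectors. An equivalent ``shift'' argument sidesteps any question about the level: the hypothesis states that the formal polynomial $p := \sum_{R,h} P(R,h)\, 1_{x_R = h}$ lies in the kernel of the pseudo-expectation, and since this kernel is closed under multiplication by any $1_{x_A = \alpha}$, the shifted polynomial $1_{x_{S \cup T} = f \circ g}\cdot p$ also evaluates to $0$ under the pseudo-expectation, which is exactly the statement that the matrix entry vanishes. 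Either formulation yields that the matrix is $\vec{0}$ and hence PSD.
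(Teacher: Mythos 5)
Your proposal is correct and takes essentially the same route as the paper: both arguments show each entry of the constraint matrix \eqref{eq:p2} equals $\langle \vec{u}_{S\cup T}(f\circ g),\ \sum_{R,h}P(R,h)\vec{u}_R(h)\rangle$ via the Cholesky/inner-product identity for the moment matrix, and then invoke the hypothesis to conclude the matrix is identically zero, hence trivially PSD. Your added remarks about the hierarchy level needed to define $\vec{u}_{S\cup T}(f\circ g)$ and the equivalent pseudo-expectation ``kernel'' phrasing are reasonable clarifications of a point the paper leaves implicit, but they do not change the substance of the argument.
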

\begin{proof}
From the definition of $\vec{u}$, we have 
\begin{align*}
\sum_{R,h}P(R,h)y_{S \cup T \cup R}(f \circ g \circ h)&=\langle \sum_{R,h}P(R,h)\vec{u}_{S \cup R}(f \circ h),\vec{u}_T(g)\rangle \\
&=\langle \vec{u}_{S \cup T}(f \circ g),\sum_{R,h}P(R,h)\vec{u}_R(h)\rangle=0.
\end{align*}
\end{proof}

\subsection{Subspace}
We introduce an extra property of pairwise independent subspaces for our construction of integrality gaps of list Constraint Satisfaction Problems. 
\begin{definition}
Let $C$ be a pairwise independent subspace of $F^d_q$ and $Q$ be a subset of $F_q$ with size $k$. We say that $C$ stays in $Q$ with probability $p$ if $\Pr_{x\sim C}[x \in Q^d]=\frac{|C \cap Q^d|}{|C|} \ge p$.
\end{definition}
In \cite{BGP}, Benjamini et.al. proved $p \le \frac{k/q}{(1-k/q) \cdot d+k/q}$ for infinitely many $d$ when $|Q|=k$. They also provided a distribution that matches the upper bound with probability $\frac{k/q}{(1-k/q) \cdot d+k/q}$ for every $d$ with $q|(d-1)$. In our work, we need the property that $C$ is a subspace rather than an arbitrary distribution in $F^d_q$. We provide two constructions for the base cases $k=1$ and $k=q-1$.
\begin{lemma}\label{k=1}
There exist infinitely many $d$ such that there is a pairwise independent subspace $C \subset F^d_q$ that stays in a size 1 subset $Q$ of $F_q$ with probability $\frac{1/q}{(1-1/q)d+1/q}$.
\end{lemma}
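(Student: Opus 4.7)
The plan is to realize $C$ as the simplex code over $F_q$ (equivalently, the dual of the $q$-ary Hamming code), which will give exactly the required pairwise independence and the tight intersection count with $Q=\{0\}$.

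First, for any positive integer $k$ set $d = \frac{q^k - 1}{q-1}$, which yields infinitely many allowed widths. Build a $k \times d$ matrix $G$ over $F_q$ whose columns $G_1, \ldots, G_d$ are a choice of one non-zero representative from each of the $\frac{q^k-1}{q-1}$ one-dimensional subspaces of $F_q^k$ (so no two columns are scalar multiples, and every non-zero vector of $F_q^k$ is a scalar multiple of exactly one column). Define $C := \{vG : v \in F_q^k\} \subseteq F_q^d$. This is a subspace of size $|C| = q^k$, since the columns of $G$ include a basis of $F_q^k$ and so $v \mapsto vG$ is injective.

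Next, verify pairwise independence. Fix any two distinct coordinates $i \neq j$. By construction $G_i$ and $G_j$ span distinct lines in $F_q^k$, so they are linearly independent, and the linear map $v \mapsto (vG_i, vG_j)$ from $F_q^k$ to $F_q^2$ is surjective with every fiber of size $q^{k-2}$. Hence under $x \sim C$, the pair $(x_i, x_j)$ is uniform on $F_q^2$, which is exactly pairwise independence.

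Finally, take $Q = \{0\} \subset F_q$. An element $x = vG$ lies in $Q^d = \{0\}^d$ iff $v G_i = 0$ for all $i$; since the columns of $G$ span $F_q^k$, this forces $v = 0$, so $|C \cap Q^d| = 1$. Therefore
\[
\Pr_{x \sim C}[x \in Q^d] \;=\; \frac{1}{q^k} \;=\; \frac{1}{(q-1)d+1} \;=\; \frac{1/q}{(1-1/q)d + 1/q},
\]
matching the claimed probability. There is no real obstacle here: the only thing to be careful about is that the construction achieves the bound with equality (not just $\ge$), and this is settled by the direct count $|C|=q^k$, $|C\cap Q^d|=1$ using injectivity of $v \mapsto vG$ and spanningness of the columns. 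Everything else is a routine verification of the simplex-code construction.
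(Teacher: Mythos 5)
Your proof is correct and is essentially the same construction as the paper's: the code you build (generator matrix whose columns represent the distinct lines of $F_q^k$) is exactly the $q$-ary simplex code, which is the dual of the Hamming code that the paper invokes. The only difference is that you verify pairwise independence and the intersection count directly from the generator matrix, whereas the paper cites the dual-distance-$3$ property of the Hamming code; the substance is identical.
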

\begin{proof}
Choose $Q=\{0\}$ and $C$ to be the dual code of Hamming codes over $F_q$ with block length $d=\frac{q^l-1}{q-1}$ and distance 3 for an integer $l$. Using $|C|=q^l$, the probability is $\frac{1}{|C|}=\frac{1/q}{(1-1/q)d+1/q}$. It is pairwise independent because the dual distance of $C$ is 3.
\end{proof}
\begin{lemma}\label{k=m-1}
There exist infinitely many $d$ such that there is a pairwise independent subspace $C \subset F^d_q$ staying in a $(q-1)$-subset $Q$ of $F_q$ with probability at least $\Omega( \frac{(q-1)/q}{d/q+(q-1)/q})$.
\end{lemma}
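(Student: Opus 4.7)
The plan is to mirror the construction of Lemma \ref{k=1} but augment the dual Hamming (i.e.\ simplex) code with the all-ones vector, choosing the column representatives so as to create many full-weight codewords. Take $Q = F_q \setminus \{0\}$, $d = (q^l - 1)/(q-1)$ for any integer $l \ge 2$, and let $C_0 \subset F_q^d$ denote the simplex code of dimension $l$. The candidate code is $C = C_0 + \langle \mathbf{1} \rangle$. Every non-zero codeword of $C_0$ has weight $q^{l-1}$, whereas $\mathbf{1}$ has weight $d = (q^l-1)/(q-1) > q^{l-1}$ for $l \ge 2$, so $\mathbf{1} \notin C_0$ and $\dim C = l+1$. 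Pairwise independence of $C$ follows because $C^\perp$ is a subcode of $C_0^\perp$, the $q$-ary Hamming code, so its minimum distance is still at least $3$.

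The key step is a deliberate choice of the column representatives forming the $l \times d$ matrix $H$ whose row space is $C_0$. Fix any non-zero $u^* \in F_q^l$; for every one-dimensional subspace $L \not\subset (u^*)^\perp$, pick the unique $h_L \in L$ with $(u^*)^T h_L = 1$, and for $L \subset (u^*)^\perp$ pick any non-zero representative. With this choice, for each $\lambda \in F_q^*$ the codeword $\lambda (u^*)^T H$ equals $\lambda$ at all $q^{l-1}$ of its non-zero positions and $0$ at the remaining $(q^{l-1}-1)/(q-1)$ positions. Shifting by $\alpha \mathbf{1}$ then produces a full-weight codeword whenever $\alpha \notin \{0,-\lambda\}$, giving $q-2$ codewords per $\lambda$. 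Combined with the $q-1$ scalar multiples $\alpha \mathbf{1}$ for $\alpha \ne 0$, this exhibits at least $(q-1) + (q-1)(q-2) = (q-1)^2$ codewords of $C$ lying inside $Q^d$.

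The probability estimate is therefore $|C \cap Q^d|/|C| \ge (q-1)^2/q^{l+1}$. Comparing with the target $(q-1)/(d+q-1)$, the ratio is $(q-1)(d+q-1)/q^{l+1}$; since $(q-1)(d+q-1) = q^l + q^2 - 2q \ge q^l$ for $q \ge 2$, this ratio is at least $1/q$, a positive constant for fixed $q$. Thus the probability is $\Omega\bigl( \frac{(q-1)/q}{d/q+(q-1)/q} \bigr)$ along the infinite sequence of lengths $d = (q^l-1)/(q-1)$, $l \ge 2$, as required.

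The main obstacle will be the rescaling argument above: without carefully normalizing the representatives so that one direction $u^*$ yields a codeword whose non-zero values are all equal, the $q^{l-1}$ non-zero entries of a typical simplex codeword are spread across $F_q^*$ and no single shift by $\alpha \mathbf{1}$ can eliminate all zeros simultaneously. In that case only the trivial $q-1$ codewords $\alpha \mathbf{1}$ survive, a factor of $\Theta(d)$ short of the target. Everything else in the argument is routine code-theoretic bookkeeping together with a simple ratio estimate.
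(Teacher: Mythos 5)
Your proof is correct, but it takes a genuinely different route from the paper's. The paper's construction for $d = q-1$ is the length-$(q-1)$ code generated over $F_q^*$ by the rows $(\alpha,\alpha^2)$ (i.e.\ evaluations of $a\alpha + b\alpha^2$ at $F_q^*$), with $Q = F_q\setminus\{1\}$; it counts codewords avoiding~$1$ by inclusion--exclusion, using that a quadratic has at most two roots, and then scales to $d=(q-1)q^l$ by tensoring with $q^l$ extra directions, which simply replicates the length-$(q-1)$ codewords $q^l$ times. Your construction uses the simplex code $C_0$ of length $d=(q^l-1)/(q-1)$ augmented by $\mathbf{1}$, takes $Q = F_q\setminus\{0\}$, and engineers the column representatives so that one fixed direction $u^*$ yields a simplex codeword whose nonzero entries are all equal; shifting by $\alpha\mathbf{1}$ then gives a clean count of $(q-1)^2$ full-weight codewords. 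Your pairwise-independence argument via dual distance (noting $C^\perp = C_0^\perp \cap \mathbf{1}^\perp$ sits inside the Hamming code) is arguably cleaner than the paper's direct $2\times 2$ determinant check. The one point worth flagging is the implied constant: the paper's argument achieves a ratio bounded away from zero uniformly in $q$ (roughly $1/4$), whereas yours gives $1/q$. This is perfectly adequate for the lemma as stated, since the $\Omega(\cdot)$ is over $d\to\infty$ with $q$ fixed, but it is a weaker statement than the paper actually uses: in Section~3.1 the paper cites ``$p_0 \ge \tfrac{1}{3}\cdot\tfrac{1-\rho'}{d\rho'+1-\rho'}$'' with a $q$-independent $1/3$ en route to the claimed universal constant $C_0 \ge 1/10$; your version would make that constant decay with $q$.
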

\begin{proof}
First, we provide a construction for $d=q-1$ then generalize it to $d=(q-1)q^l$ for any integer $l$. For $d=q-1$, the generator matrix of the subspace is a $(q-1) \times 2$ matrix where row $i$ is $(\alpha_i,\alpha_i^2)$ for $q-1$ distinct elements $\{\alpha_1,\cdots,\alpha_{q-1}\}=F^*_q$. Because $\alpha_i \neq \alpha_j$ for any two different rows $i$ and $j$, it is pairwise independent. Let $Q=F_q \setminus \{1\}$. Using the inclusion-exclusion principle and the fact that a quadratic equation can have at most 2 roots in $F_q$:
\begin{align*}
\Pr_{x \sim C}[x \in Q^d]&=\Pr_{x \sim C}[\forall \beta \in F_q^*, x_\beta \neq 1]\\
&=1 - \sum_{\beta \in F_q^*}Pr[x_\beta = 1] + \sum_{\{\beta_1,\beta_2\} \in {F_q^* \choose 2}} Pr[x_{\beta_1}=1 \wedge x_{\beta_2}=1]\\
& - \sum_{\{\beta_1,\beta_2,\beta_3\} \in {F_q^* \choose 3}} Pr[x_{\beta_1}=1 \wedge x_{\beta_2}=1 \wedge x_{\beta_3}=1] + \cdots\\ &=1 - \frac{q-1}{q} + \frac{{q-1 \choose 2}}{q^2} - 0 + 0\\
&=\frac{q^2-q+2}{2q^2}\ge \frac{1}{2}-\frac{1}{2q}
\end{align*}
For any $d=(q-1)q^l$, the generator matrix of the subspace is a $d\times (l+2)$ matrix where every row is in the form $(\alpha,\alpha^2,\beta_1,\cdots,\beta_l)$ for all nonzero elements $\alpha \in F^*_q$ and $\beta_1\in F_q,\cdots,\beta_l\in F_q$. The pairwise independence comes from a similar analysis. $\Pr_{x\sim C}[x \in Q^d] \ge \frac{1}{q^l}(\frac{1}{2}-\frac{1}{2q})$ because it is as same as $d=q-1$ when all coefficients before $\beta_1,\cdots,\beta_l$ are 0, which is $\ge \frac{1}{3} \cdot \frac{q-1}{d+q-1}$.
\end{proof}
\begin{remark}
The construction for $d=q-1$ also provides a subspace that stays in $Q$ with probability $1-\frac{d}{q}+\frac{{d \choose 2}}{q^2}$ for any $d<q-1$ by deleting unnecessary rows in the matrix.
\end{remark}

\section{Integrality Gaps}
We first consider a natural $\{0,1\}$ programming to determine the vertex expansion of $\rho N$-subsets in $[N]$ given a bipartite graph $G=([N],[M],E)$:
  \begin{alignat*}{2}
    \min &   \sum_{j=1}^M \vee_{i \in \Gamma(j)}x_i=\min \sum_{j=1}^M (1 - 1_{\forall i\in \Gamma(j),x_i=0})  & \\
    \text{Subject to } & \sum_{i=1}^N x_i \ge \rho N & \\
                       & x_i\in \{0,1\}  \quad & \text{for every } i \in [N]
  \end{alignat*}
We relax it to a convex programming in the $t$-level Lasserre hierarchy. 
\begin{alignat}{2}
    \min &   \sum_{j=1}^M (1 - y_{\Gamma(j)}(\vec{0}) \nonumber)\\
    \text{Subject to } & \big(y_{S\cup T}(f\circ g) \big)_{( (S \in {[N] \choose \le t},f \in \{0,1\}^S) , (T \in {[N] \choose \le t}, g \in \{0,1\}^T) )}\succeq 0 \label{eq:a1} \\
                       & \big( \sum_{i=1}^N y_{S \cup T \cup \{i\}}(f\circ g\ \circ 1) -  \rho N \cdot y_{S \cup T}(f \circ g)\big)_{( (S \in {[N] \choose \le t},f \in \{0,1\}^S) , (T \in {[N] \choose \le t}, g \in \{0,1\}^T) )} \succeq 0 \label{eq:a2}
  \end{alignat}
%In the semidefinite programming, we use $y_{S}(f)$ to denote the probability that the assignment on $S$ is $f\in \{0,1\}^S$ in the local distribution of $S$ given by the Lasserre hierarchy. Naturally we use $\{\vec{v}_{S}(f)\}$ for $S \in {[N] \choose \le t}$ and $f \in \{0,1\}^S$ to denote the vectors explain (\ref{eq:1}).

In this section, we focus on random bipartite graphs $G$ on $[N] \cup [M]$ that are $d$-regular in $[M]$, which are generated by connecting each vertex in $[M]$ to $d$ random vertices in $[N]$ independently. The main technical result we will prove in this section is:
\begin{lemma}\label{int_gap}
Suppose there is a pairwise independent subspace $C \subseteq F^d_q$ staying in a $k$-subset with probability $\ge p_0$. Let $G=([N],[M],E)$ be a random bipartite graph with $M=O(N)$ that is $d$-regular in $[M]$, the $\Omega(N)$-level Lasserre hierarchy for $G$ and $\rho=1-k/q$ has an objective value at most $(1-p_0+\frac{1}{N^{1/3}})M$ with high probability.
\end{lemma}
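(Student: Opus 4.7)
The plan is to lift a Grigoriev--Schoenebeck style Lasserre pseudo-distribution for a random pairwise-independent MAX-CSP to a boolean pseudo-distribution for the disperser programming. First I identify $[N]$ with $[n] \times F_q$ for $n = N/q$ and view the random bipartite graph as an instance $\Phi$ of a base CSP $\Lambda$ with $n$ variables $y_1, \ldots, y_n \in F_q$, width $d$, predicate $C$, and shifts $\vec{\alpha}_j = (\alpha_{j,1}, \ldots, \alpha_{j,d})$ read off from the $d$ neighbors $(i'_{j,\ell}, \alpha_{j,\ell})$ of each $j \in [M]$; constraint $j$ reads $(y_{i'_{j,1}} - \alpha_{j,1}, \ldots, y_{i'_{j,d}} - \alpha_{j,d}) \in C$. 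Since $M = O(N) = O(nq)$ is linear in $n$ and $d$ is constant, the underlying random $d$-uniform constraint hypergraph has $(1 - o(1))d$ boundary expansion on all subsets of up to $\Omega(n)$ constraints with high probability (a standard first-moment calculation), so the Grigoriev--Schoenebeck--Tulsiani--Chan construction yields vectors $\{\vec{u}^{\mathrm{base}}_S(h)\}$ for $|S| \le t = \Omega(N)$ forming a valid Lasserre solution whose marginals are uniform on any single or pair of variables and uniform on $C + \vec{\alpha}_j$ on the scope of each constraint $j$.

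Next I fix a $k$-subset $Q \subseteq F_q$ with $|C \cap (-Q)^d|/|C| \ge p_0$ (available by applying the hypothesis to $-Q$), and encode the list translation $x_{(i',\alpha)} = 0 \Leftrightarrow \alpha \in y_{i'} + Q \Leftrightarrow y_{i'} \in \alpha - Q$. For each $S = \{(i'_1, \alpha_1), \ldots, (i'_s, \alpha_s)\} \subseteq [N]$ and $f \in \{0,1\}^S$, I define $\vec{u}_S(f)$ to be the sum of $\vec{u}^{\mathrm{base}}_{S'}(h)$ over all base assignments $h \in F_q^{S'}$ on $S' = \{i'_1, \ldots, i'_s\}$ that are compatible with $f$ under the translation (the empty sum when $S$ contains two elements sharing a base variable with inconsistent $f$-values). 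Since each $\vec{u}_S(f)$ is a fixed linear combination of base Lasserre vectors, the Gram matrix \eqref{eq:a1} is positive semidefinite at level $t = \Omega(N)$ for free, and the induced pseudo-distribution marginals match the list encoding.

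Verifying the global constraint \eqref{eq:a2} reduces via Lemma \ref{ingre} to $\sum_{i \in [N]} \vec{u}_{\{i\}}(1) = \rho N \cdot \vec{u}_\emptyset$. From the definition $\vec{u}_{\{(i',\alpha)\}}(1) = \sum_{\beta \notin \alpha - Q} \vec{u}^{\mathrm{base}}_{\{i'\}}(\beta)$; summing over $\alpha \in F_q$ collects each $\vec{u}^{\mathrm{base}}_{\{i'\}}(\beta)$ with multiplicity $|\{\alpha : \beta \notin \alpha - Q\}| = q - k$, and summing over $i' \in [n]$ with the Lasserre identity $\sum_\beta \vec{u}^{\mathrm{base}}_{\{i'\}}(\beta) = \vec{u}^{\mathrm{base}}_\emptyset$ gives $(q-k)n \cdot \vec{u}^{\mathrm{base}}_\emptyset = \rho N \cdot \vec{u}_\emptyset$. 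For the objective, for each $j \in [M]$ whose $d$ neighbors hit distinct base variables,
\[
  y_{\Gamma(j)}(\vec{0}) \;=\; \Pr_{\vec{y} \sim C + \vec{\alpha}_j}\bigl[\forall \ell:\; y_{i'_{j,\ell}} \in \alpha_{j,\ell} - Q\bigr] \;=\; \Pr_{\vec{c} \sim C}\bigl[\vec{c} \in (-Q)^d\bigr] \;\ge\; p_0,
\]
so each such $j$ contributes at most $1 - p_0$ to $\sum_j (1 - y_{\Gamma(j)}(\vec{0}))$.

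The main obstacle is absorbing the small defects into the $M/N^{1/3}$ slack. Constraints whose $d$ neighbors collide on a common base variable occur with probability $O(d^2 q/N)$ per constraint, so in expectation only $O(d^2 q^2) = O(1)$ over the whole graph; these are discarded at a cost of $O(1) \ll M/N^{1/3}$, using Chernoff to control the deviation. Similarly, the random-hypergraph boundary-expansion hypothesis needed by the Grigoriev--Schoenebeck construction fails only with exponentially small probability, so a union bound keeps the construction a valid Lasserre solution at level $\Omega(N)$ within the claimed slack, yielding an objective value at most $(1 - p_0 + N^{-1/3})M$ with high probability.
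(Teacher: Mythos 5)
Your proof is correct and follows essentially the same route as the paper's: identify $[N]$ with $[n]\times F_q$, start from the Grigoriev--Schoenebeck--Tulsiani--Chan pseudo-distribution for a random pairwise-independent CSP over $F_q$, translate through a $k$-subset $Q$ so that $x_{(i',\alpha)}=0$ iff $\alpha\in y_{i'}+Q$, verify the global cardinality constraint via $\sum_i\vec{u}_{\{i\}}(1)=\rho N\vec{u}_\emptyset$ and Lemma~\ref{ingre}, and bound the objective per constraint by the staying probability, discarding the $O(1)$-in-expectation colliding constraints into the $M/N^{1/3}$ slack. The only organizational difference is that the paper factors the construction into two reusable steps (Lemma~\ref{Las_val_phi}, which lifts the base-CSP solution to a list-CSP solution over $[n]\times F_q$, and Claim~\ref{c2}, which negates the list solution to obtain the disperser SDP solution), whereas you fuse the list-translation and the complement into a single encoding and define the boolean vectors directly as sums of $F_q$-valued base vectors; your consistency check that $\langle\vec{u}_S(f),\vec{u}_T(g)\rangle$ depends only on $S\cup T$ and $f\circ g$ is the same computation the paper performs inside Lemma~\ref{Las_val_phi}. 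One small imprecision: the expansion condition Chan's theorem needs is that every $\le t$ constraints span at least $(d-1.4)|T|$ variables, which is not literally ``$(1-o(1))d$ boundary expansion'' for small constant $d$, but the standard first-moment calculation you invoke does give exactly this bound for $t=\Omega(n)$ when $M=O(N)$, so the substance is fine.
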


We introduce list Constraint Satisfaction Problems which allow every variable to take $k$ values from the alphabet. Next, we lower bound the objective value of an instance of a list CSP in the Lasserre hierarchy from the objective value of the corresponding instance of the CSP in the Lassrre hierarchy. Then we show how to use list CSPs to obtain an upper bound of the vertex expansion for $\rho=1-k/q$ in the Lasserre hierarchy.

\begin{definition}[list Constraint Satisfaction Problem]
A list Constraint Satisfaction Problem (list CSP) $\Lambda$ is specified by a constant $k$, a width $d$, a domain over finite field $F_q$ for a prime power $q$, and a predicate $C \subseteq F_q^d$. An instance $\Phi$ of $\Lambda$ consists of a set of variables $\{x_1,\cdots, x_n\}$ and a set of constraints $\{C_1,C_2,\cdots,C_m\}$ on the variables. Every variable $x_i$ takes $k$ values in $F_q$, and every constraint $C_j$ consists of a set of $d$ variables $x_{j,1},x_{j,2},\cdots,x_{j,d}$ and an assignment $\vec{b_j} \in F_q^d$. The value of $C_j$ is $|(C+\vec{b}_j) \cap x_{i,1}\times x_{i,2} \cdots x_{i,d}| \in \mathbb{N}$. The value of $\Phi$ is the summation of values over all constraints, and the objective is to find an assignment on $\{x_1,\cdots,x_n\}$ that maximizes the total value as large as possible.
\end{definition}
\begin{remark}
We abuse the notation $C_j$ to denote the variable subset $\{x_{j,1},x_{j,2},\cdots,x_{j,d}\}$. Our definition is consistent with the definition of the classical CSP when $k=1$. The differences between a list CSP and a classical CSP are that a list CSP allow each variable to choose $k$ values in $F_q$ instead of one value and relax every constraint $C_i$ from $F_q^d \rightarrow \{0,1\}$ to $F_q^d \rightarrow \mathbb{N}$.
\end{remark}

The $\{0,1\}$ programming for an instance $\Phi$ with variables $\{x_1,\cdots,x_n\}$ and constraints $\{C_1,\cdots,C_m\}$ of $\Lambda$ with parameters $k,F_q,$and a predicate $C$ states as follows (the variable set is the direct product of $[n]$ and $F_q$ in the $\{0,1\}$ programming):
\begin{align*}
 \max & \sum_{j \in [m]} \sum_{f\in C+\vec{b}_j} 1_{\forall i\in C(j),x_{i,f(i)}=1}\\
\text{Subject to  } & x_{i,\alpha} \in \{0,1\} & \forall (i,\alpha) \in [n]\times F_q\\
& \sum_{\alpha \in F_q} x_{i,\alpha} = k & \forall i \in [n]
\end{align*}
The SDP in the $t$-level Lasserre hierarchy for $\Phi$ succeeds this $\{0,1\}$ programming as follows:
\begin{alignat}{2}
\max & \sum_{j \in [m]} \sum_{f \in C+\vec{b}_j} y_{(C_j,f)}(\vec{1}) \nonumber\\
\text{S.t.} & \big( y_{S\cup T}(f \circ g)  \big)_{(S \subset {[n]\times F_q \choose \le t},f\in \{0,1\}^S) , (T \subset {[n]\times F_q \choose \le t},g\in \{0,1\}^T)} \succeq 0 \label{eq:b1}\\
& \big( k \cdot y_{S\cup T}(f \circ g) - \sum_{\alpha} y_{S\cup T\cup \{(i,\alpha)\}}(f \circ g \circ 1) \big)_{(S \subset {[n]\times F_q \choose \le t},f\in \{0,1\}^S) , (T \subset {[n]\times F_q \choose \le t},g\in \{0,1\}^T)} = 0, & \forall i \in [n] \label{eq:b2}
\end{alignat}

\begin{definition}
Let $\Lambda$ be the list CSP problem with parameters $k,q,d$ and a predicate $C \subset F_q^d$. Let $\Phi$ be an instance of $\Lambda$ with $n$ variables and $m$ constraints. $p(\Phi)$ is the projection instance from $\Phi$ in the CSP of the same parameters $q,d,C \subseteq F_q^d$, and the same constraints $(C_1,\vec{b}_1),(C_2,\vec{b}_2),\cdots,(C_m,\vec{b}_m)$ except $k=1$.
\end{definition}
Recall that a subspace $C\subset F_q^d$ stays in a subset $Q \subset F_q$ with probability $p_0$ if $\Pr_{x \sim C}[x \in Q^d]\ge p_0$. We lower bound $\Phi$'s objective value in the Lasserre hierarchy by exploiting the subspace property of $C$ and $Q$.

\begin{lemma}\label{Las_val_phi}
Let $\Phi$ be an instance of the list CSP $\Lambda$ with parameters $k,q,d$ and a predicate $C$, where $C$ is a subspace of  $F_q^d$ staying in a $k$-subset $Q$ with probability at least $p_0$. Suppose $p(\Phi)$'s value is $\gamma$ in the $w$-level Lasserre hierarchy, then $\Phi$'s value is at least $p_0|C| \cdot \gamma$ in the $w$-level Lasserre hierarchy.
\end{lemma}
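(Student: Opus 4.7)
The plan is to lift the Lasserre solution for $p(\Phi)$ to a Lasserre solution for $\Phi$ via the substitution sketched in the proof overview: whenever $p(\Phi)$'s (pseudo-)assignment sets variable $i$ to $\alpha \in F_q$, the list-CSP $\Phi$'s (pseudo-)assignment sets variable $i$ to the $k$-subset $\alpha + Q$, i.e., $x'_{i,\beta} = 1$ iff $\alpha \in \beta - Q$. Concretely, let $\{\vec{u}_R(\alpha)\}_{R \subseteq [n],\,|R|\le w,\,\alpha \in F_q^R}$ witness $p(\Phi)$'s value $\gamma$ at level $w$. For each $T \subseteq [n] \times F_q$ with $|T| \le w$ and each $h \in \{0,1\}^T$, let $T_{\mathrm{proj}} := \{i : (i,\beta) \in T \text{ for some } \beta\}$ and, for each $i \in T_{\mathrm{proj}}$, let
$$A_i(h) := \bigcap_{(i,\beta)\in T,\,h(i,\beta)=1}(\beta-Q)\;\setminus\;\bigcup_{(i,\beta)\in T,\,h(i,\beta)=0}(\beta-Q)$$
be the set of $\alpha$-values consistent with $h$ at coordinate $i$. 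Define the lifted vectors by
$$\vec{v}_T(h) := \sum_{\alpha \in F_q^{T_{\mathrm{proj}}},\;\alpha_i \in A_i(h)\,\forall i \in T_{\mathrm{proj}}} \vec{u}_{T_{\mathrm{proj}}}(\alpha).$$

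I would then verify the Lasserre requirements \eqref{eq:b1}--\eqref{eq:b2} for $\vec{v}$. For the PSD condition \eqref{eq:b1}, bilinearity expands $\langle \vec{v}_{T_1}(h_1),\vec{v}_{T_2}(h_2)\rangle$ as $\sum_{\alpha_1,\alpha_2}\langle \vec{u}_{T_{1,\mathrm{proj}}}(\alpha_1),\vec{u}_{T_{2,\mathrm{proj}}}(\alpha_2)\rangle$; the Gram structure of $\vec{u}$ (vanishing when $\alpha_1,\alpha_2$ disagree on overlap) collapses this to the pseudo-moment $y'_{T_1\cup T_2}(h_1\circ h_2)$ as required. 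For the global constraint \eqref{eq:b2}, Lemma~\ref{ingre} reduces the task to showing $\sum_{\alpha\in F_q}\vec{v}_{\{(i,\alpha)\}}(1)=k\,\vec{v}_{\emptyset}$ together with its analogs fixing extra coordinates; the singleton case is a direct expansion,
$$\sum_{\alpha}\vec{v}_{\{(i,\alpha)\}}(1) = \sum_{\alpha}\sum_{\gamma \in Q}\vec{u}_{\{i\}}(\alpha-\gamma) = |Q|\cdot \vec{u}_\emptyset = k\,\vec{u}_\emptyset,$$
where the second equality uses $p(\Phi)$'s own global constraint $\sum_{\alpha'}\vec{u}_{\{i\}}(\alpha')=\vec{u}_\emptyset$.

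For the objective bound, I would compute constraint $C_j$'s contribution by expanding
$$\sum_{f \in C+\vec{b}_j} y'_{(C_j,f)}(\vec{1}) = \sum_{f \in C+\vec{b}_j}\sum_{\gamma \in Q^d} y_{C_j}(f-\gamma),$$
then swapping the sums and substituting $g = f-\gamma$. Each $g \in F_q^d$ is counted $|(\vec{b}_j - g + C) \cap Q^d|$ times; since $C$ is a subspace, the coset $\vec{b}_j - g + C$ equals $C$ itself precisely when $g \in C+\vec{b}_j$, contributing $|C \cap Q^d| \ge p_0|C|$, and contributes nonnegatively otherwise. Hence $C_j$'s value under $\vec{v}$ is at least $p_0|C|$ times its value under $\vec{u}$; summing over the $m$ constraints yields $\mathrm{val}(\Phi) \ge p_0|C|\cdot \gamma$.

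I expect the main obstacle to be the global-constraint verification beyond the singleton case: to conclude $\sum_{\alpha}\vec{v}_{T\cup\{(i,\alpha)\}}(h\circ 1)=k\,\vec{v}_T(h)$ one needs casework on whether $i$ already appears in $T_{\mathrm{proj}}$ and, if so, on whether $(i,\alpha)$ is already in $T$ with $h$-value $0$ or $1$. In every case the count of new $\alpha$'s with a prescribed $\alpha_i \in \alpha - Q$ balances against the size of the already-positive part at $i$ to give the required $k$, but the bookkeeping is the least automatic step. In contrast, the PSD check reduces cleanly to bilinearity, and the objective estimate follows immediately from the subspace-coset structure and the definition of $p_0$.
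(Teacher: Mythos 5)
Your proposal is correct and follows essentially the same approach as the paper's proof. The paper defines the lift $z_S(g)=\sum_{T,g':\,S\subseteq T\oplus Q,\;g'\oplus Q(S)=g}y_T(g')$ via an $\oplus Q$ operation; your $A_i(h)$ sets make that same lift explicit in the ``$F_q$-valued'' coordinate system $\vec{u}_{R}(\alpha)$ rather than the paper's ``boolean'' coordinates $y_T(g')$. The objective bound is the same calculation under the substitution $g=f-\gamma$ versus the paper's $f=f'+\gamma$, and both rely on the subspace fact that $\vec{b}_j-g+C=C$ when $g\in C+\vec{b}_j$, dropping nonnegative terms elsewhere.

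One small correction on where the work lives: you flag the global-constraint check ``beyond the singleton case'' as the main obstacle, but Lemma~\ref{ingre} is stronger than you give it credit for. Its hypothesis is only the level-$0$ identity $\sum_{R,h}P(R,h)\vec{u}_R(h)=\vec{0}$ — here $\sum_\alpha\vec{v}_{\{(i,\alpha)\}}(1)=k\,\vec{v}_\emptyset$, which you proved — and the entries at higher levels vanish automatically once the Gram/moment structure \eqref{eq:b1} holds, since $\langle\sum_\alpha\vec{v}_{T\cup\{(i,\alpha)\}}(h\circ 1)-k\vec{v}_T(h),\,\vec{v}_{T'}(h')\rangle=\langle\sum_\alpha\vec{v}_{\{(i,\alpha)\}}(1)-k\vec{v}_\emptyset,\,\vec{v}_{T\cup T'}(h\circ h')\rangle=0$. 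So there is no extra casework on whether $i$ already appears in $T$. The genuine bookkeeping is in the PSD verification (that $A_i(h_1)\cap A_i(h_2)=A_i(h_1\circ h_2)$ on overlapping indices, so the bilinear expansion really collapses to a consistent moment sequence), which you do sketch; that, plus the singleton identity, is all that's needed.
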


\begin{proof}
Let $y_S(f)$ and $\vec{v}_S(f)$ for $S \in {[n]\times F_q \choose \le w}$ and $f \in \{0,1\}^S$ denote the pseudo-distribution and the vectors in the $w$-level Lasserre hierarchy for $p(\Phi)$ respectively. Let $z$ and $\vec{u}$ denote the pseudo-distribution and vectors in the $w$-level Lasserre hierarchy for $\Phi$. The construction of $z$ and $\vec{u}$ from $y$ and $\vec{v}$ are based on the subspace $C$ and $Q$. The intuition is to choose $x_i = \alpha + Q$ in $\Phi$ if $x_i=\alpha$ for some $\alpha \in F_q$ in $p(\Phi)$. 
%For any subset $S \in {[n]\times F_q \choose w}$ and $f \in \{0,1\}^S$, think of $f$ is also a subset of $S \times F_q$ for convenience's sake. 

Before constructing $z$ and $\vec{u}$, define $\oplus$ operation as follows. For any $S\in {[n]\times F_q \choose \le w}, g\in \{0,1\}^S$, and $P \subseteq F_q$, let $S \oplus P$ denote the union of the subset $(i,\alpha+P)$ for every element $(i,\alpha)$ in $S$, which is $\cup_{(i,\alpha) \in S} \{(i,\alpha+P)\}$ in $[n]\times F_q$, and $g \oplus P \in \{0,1\}^{S \oplus P}$ denote the assignment on $S \oplus P$ such that $g \oplus P(i,\alpha+P)=g(i,\alpha)$. If there is a conflict in the definition of $g \oplus P$, namely $\exists (i,\beta)$ such that $(i,\beta) \in (i,\alpha_1+P)$ and $(i,\beta) \in (i,\alpha_2+P)$ for two distinct $(i,\alpha_1),(i,\alpha_2)$ in $S$, define $g \oplus P(i,\beta)$ to be an arbitrary one. Because every variable only takes one value in $p(\Phi)$, $y_S(g)=0$ if there is a conflict on $g \oplus P \in \{0,1\}^{S \oplus P}$. Follow the intuition mentioned above, for any $S \subset {[n] \times F_q \choose \le w}$ and $g \in \{0,1\}^S$, let $R=\{i|\exists \alpha, (i,\alpha)\in S\}$,
\begin{align*}
z_S(g)&=\sum_{T \in {R \times F_q \choose \le w}, g' \in \{0,1\}^T:S \subseteq T\oplus Q,g' \oplus Q(S)=g}y_T(g'),\\
\vec{u}_S(g)&=\sum_{T \in {R \times F_q \choose \le w}, g' \in \{0,1\}^T:S \subseteq T\oplus Q,g' \oplus Q(S)=g}\vec{v}_T(g').
\end{align*} 
The verification of the fact that $\vec{u}$ explains $z$ in \eqref{eq:b1} of $\Phi$ is straightforward. To verify \eqref{eq:b2} is positive semidefinite, notice that every variable $x_i$ takes $k$ values in $F_q$: $$\sum_{\alpha \in F_q}z_{(i,\alpha)}(1)=\sum_{\alpha \in F_q}\sum_{\beta \in Q}y_{(i,\alpha-\beta)}(1)=\sum_{\beta \in Q} \sum_{\alpha \in F_q}y_{(i,\alpha-\beta)}(1)=|Q|=k.$$ By a similar analysis, $\sum_{\alpha \in F_q}\vec{u}_{(i,\alpha)}(1)=k \vec{v}_{\emptyset}$ and apply Lemma \ref{ingre} to prove \eqref{eq:b2} is PSD. 

Recall that $p(\Phi)$'s value is $\sum_{j \in [m]}\sum_{f \in C + \vec{b}_j}y_{(C_j,f)}(\vec{1})=\gamma$, so $\Phi$'s objective value in the $w$-level Lasserre hierarchy is %(only consider $f' \in F^{C_j}_q$ in the following calculation because every variable is only allowed to take one value in $p(\Phi)$)
\begin{align*}
\sum_{j\in [m]}\sum_{f \in C + \vec{b}_j} z_{(C_j,f)}(\vec{1})&=\sum_{j\in [m]}\sum_{f \in C + \vec{b}_j} \sum_{f'\in F^{d}_q:f \in f' \oplus Q} y_{(C_j,f')}(\vec{1})\\
& = \sum_{j\in [m]} \sum_{f' \in F^{d}_q} \sum_{f\in C+\vec{b}_j}y_{(C_j,f')}(\vec{1}) \cdot 1_{f \in f' \oplus Q}\\
& \ge \sum_{j\in [m]} \sum_{f' \in C+\vec{b}_j}  y_{(C_j,f')}(\vec{1}) \cdot |(f' \oplus Q) \cap (C+\vec{b}_j)| \\
& \ge \sum_{j\in [m]} \sum_{f' \in C+\vec{b}_j} y_{(C_j,f')}(\vec{1}) \cdot p_0|C|\\
& \ge p_0|C| \cdot \gamma.
\end{align*}
\end{proof}

Before proving Lemma \ref{int_gap}, We restate Theorem G.8 that is summarized by Chan in \cite{Chan} of the previous works \cite{Grig,Sch,Tul} and observe that the pseudo-distirbution in their construction is uniform over $C$ on every constraint.
\begin{theorem}\label{Las_value_m}(\cite{Chan})
Let $F_q$ be the finite field with size $q$ and $C$ be a pairwise independent subspace of $F_q^d$ for some constant $d\ge 3$. The CSP is specified by parameters $F_q, d, k=1$ and a predicate $C$. The value of an instance $\Phi$ of this CSP on $n$ variables with $m$ constraints is $m$ in the $\Omega(t)$-level Lasserre hierarchy if every subset $T$ of at most $t$ constraints contains at least $(d-1.4)|T|$ variables.
\end{theorem}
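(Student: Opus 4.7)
The plan is to carry out the Grigoriev--Schoenebeck construction adapted to alphabet $F_q$ (as in \cite{Tul,Chan}): I will build explicit vectors $\vec{v}_S(f)$ for all $S \subseteq [n]\times F_q$ with $|S|\le \Omega(t)$ and $f\in\{0,1\}^S$ such that (i) the moment matrix in \eqref{eq:p1} is PSD, and (ii) on every constraint window $C_j$ the induced marginal is the uniform distribution on $C+\vec{b}_j$. Property (ii) immediately makes the objective equal $\sum_{j=1}^m |C+\vec{b}_j|\cdot \tfrac{1}{|C|} = m$, which is the desired bound.

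First I would define, for each small set $S$ of CSP-variables, a local solution space. Let $\Pi(S)$ be the constraints $C_j$ whose variable-sets lie entirely in $S$, and let $\mathrm{Sol}(S)\subseteq F_q^S$ be the set of assignments to $S$ satisfying every constraint in $\Pi(S)$. Since each $C+\vec{b}_j$ is an affine subspace, $\mathrm{Sol}(S)$ is an affine subspace of $F_q^S$. The pseudo-distribution is then $y_S(f) = \tfrac{1}{|\mathrm{Sol}(S)|}\cdot \mathbf{1}[f \in \mathrm{Sol}(S)]$, translated into the $\{0,1\}$-indicator language of Section 2.2.

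The technical heart is to prove that the expansion condition $|\Gamma(T)|\ge (d-1.4)|T|$ for all $|T|\le t$ yields a \emph{no-implied-constraint} property: for every small $S$, the projection of $\mathrm{Sol}(S)$ onto any single constraint window $C_j\in \Pi(S)$ is exactly $C+\vec{b}_j$, and more generally the projection onto any pair of coordinates in $S$ is uniform on $F_q^2$. A double-counting argument shows that every collection $T$ of at most $t$ constraints contains a constraint with at least two boundary variables (variables occurring in no other constraint of $T$): if $b$ counts boundary variables and $I$ interior ones then $b+I\le |\Gamma(T)|$ while $b+2I\le d|T|$, so $b\ge 2|\Gamma(T)|-d|T|\ge (d-2.8)|T|$, which is at least $2|T|$ once $d\ge 4.8$ and yields the same conclusion in the small-$d$ case by a slightly more careful averaging. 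Combined with pairwise independence of $C$, this boundary-peeling lets one remove constraints one at a time while preserving pairwise marginals and the full image on the remaining windows; induction on $|T|$ finishes the ``no implied constraint'' step. The constant $1.4$ is tuned so that this argument works down to $d=3$.

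Once $y_S$ is shown to be a consistent family, i.e., $\sum_{g} y_{S\cup T}(f\circ g) = y_S(f)$ for $|S|+|T|\le \Omega(t)$ (immediate from the uniformity on $\mathrm{Sol}(\cdot)$ together with the previous step), the PSD matrix \eqref{eq:p1} is produced by the standard recipe: pick any sufficiently closed ambient set $\bar S$ of size $\Omega(t)$ containing $S\cup T$, index an orthonormal basis by $\mathrm{Sol}(\bar S)$, and set $\vec{v}_S(f) = \sum_{h\in\mathrm{Sol}(\bar S):\,h|_S=f}\sqrt{y_{\bar S}(h)}\,e_h$. Consistency yields $\langle \vec{v}_S(f),\vec{v}_T(g)\rangle = y_{S\cup T}(f\circ g)$, so the moment matrix is a Gram matrix and hence PSD. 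The main obstacle is the expansion-to-pairwise-independence step; the rest is bookkeeping, and the observation highlighted in the theorem statement (that the induced distribution on every constraint is uniform on $C+\vec{b}_j$) falls out of the construction for free.
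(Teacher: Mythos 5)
The paper does not actually prove this statement: it is imported verbatim as Theorem G.8 from Chan~\cite{Chan}, which in turn summarizes the Grigoriev--Schoenebeck--Tulsiani line of work, and the only new content the paper supplies is Observation~3.7 (that the induced pseudo-distribution on each constraint window is uniform over its satisfying assignments). So you are attempting to reprove a cited black box. Your plan does follow the right genealogy in spirit, but it has two genuine gaps that prevent it from being a proof.

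First, the boundary-peeling step does not actually go through at $d=3$ or $d=4$, which the theorem explicitly allows. Your double-counting gives $b \ge 2|\Gamma(T)| - d|T| \ge (d-2.8)|T|$ total boundary variables. For $d=3$ that is only $0.2|T|$; averaging over $|T|$ constraints cannot then produce a constraint with two boundary variables, and nothing in ``a slightly more careful averaging'' fixes this. The expansion parameter $1.4$ in the cited theorem is tuned to a different and more delicate peeling/closure argument (Tulsiani-style, via resolution width), not to this naive boundary count. Second, and more fundamentally, your Gram-vector construction is not well defined. You set $\vec{v}_S(f)$ in terms of an ``ambient set $\bar S$ containing $S\cup T$,'' but a vector may not depend on which $T$ it is later paired against; different pairs demand different $\bar S$, so this does not produce a single Gram matrix. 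And the obvious global choice $\bar S = [n]$ is useless because $\mathrm{Sol}([n])$ is typically empty -- the instance is \emph{unsatisfiable} integrally, that is the whole point of the integrality gap. Relatedly, the consistency $\sum_g y_{S\cup T}(f\circ g)=y_S(f)$ is not ``immediate'' from your pairwise no-implied-constraint lemma: it requires the projection of $\mathrm{Sol}(S\cup T)$ onto $S$ to equal $\mathrm{Sol}(S)$ with uniform fibers, which needs control over \emph{all} constraints straddling $S$ and $T$, not just pairs of coordinates. The naive local uniform distribution $y_S(f)=\mathbf{1}[f\in\mathrm{Sol}(S)]/|\mathrm{Sol}(S)|$ is precisely the thing the closure operator in~\cite{Sch,Tul,Chan} is introduced to repair.
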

\begin{observation}\label{obs}
Let $y_S(\{0,1\}^S)$ denote the pseudo-distribution on $S$ provided by the solution of the semidefinite programming in the Lasserre hierarchy of $\Phi$. For every constraint $C_j (j \in [m])$ in $\Phi$, $y_{C_j}(\{0,1\}^{C_j})$ provides a uniform distribution over all assignments that satisfy constraint $C_j$.
\end{observation}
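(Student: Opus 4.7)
The plan is to revisit the explicit construction of the SDP solution in Chan's Theorem~G.8, which synthesizes the Grigoriev--Schoenebeck--Tulsiani gadgets, and read off the local structure on the scope of a single constraint. In those constructions the pseudo-moments $y_S(f)$ are defined so that, for any sufficiently ``small'' variable subset $S$, the induced distribution on $S$ is the uniform distribution over assignments locally consistent with every constraint whose scope lies in the constraint-neighborhood of $S$; the expansion hypothesis is what certifies that these local uniform distributions glue into a PSD moment matrix up to level $\Omega(t)$.

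For a single constraint $C_j$ with scope $\{x_{j,1},\ldots,x_{j,d}\}$, the relevant constraint-neighborhood consists only of $C_j$ itself: applying the expansion hypothesis to $T=\{C_j\}$ is trivial, so the scope of any single constraint lies squarely inside the ``base case'' of the construction. Thus the pseudo-distribution on $\{x_{j,1},\ldots,x_{j,d}\}$ is by definition the uniform distribution over the satisfying assignments of $C_j$. Since $C\subseteq F_q^d$ is a subspace and the satisfying assignments of $C_j$ are precisely the coset $C+\vec{b}_j$ of cardinality $|C|$, the pseudo-distribution places mass $1/|C|$ on each assignment in $C+\vec{b}_j$ and mass $0$ on every other element of $F_q^d$, giving the claim.

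The only real bookkeeping is translating between the $F_q$-valued formulation used in the Grigoriev/Schoenebeck/Tulsiani writeups and the $\{0,1\}^{[n]\times F_q}$-valued formulation used in this section, via the indicator variables $x_{i,\alpha}=1_{x_i=\alpha}$. Under this translation a uniform distribution over $C+\vec{b}_j$ corresponds to a uniform distribution on the associated $\{0,1\}$-valued partial assignments $y_{C_j}(\{0,1\}^{C_j})$, and I would close with a one-sentence remark verifying this identification. No substantive obstacle arises, since the uniformity is built into the construction at the base level and only needs to be extracted rather than reproved.
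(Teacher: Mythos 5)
The paper states this Observation without giving any proof; it is offered as a reading of the explicit Grigoriev--Schoenebeck--Tulsiani construction summarized in Chan's survey, which is exactly what you do. Your proposal is correct and takes essentially the same (implicit) route: in that construction the pseudo-distribution on a small variable set $S$ is uniform over assignments consistent with the linear system of derivable constraints on the closure of $S$, and for $S$ equal to the scope of a single constraint $C_j$ the expansion hypothesis guarantees that this system does not over-determine $C_j$'s scope, so the marginal is uniform over the full coset $C+\vec{b}_j$; the translation to the indicator-variable encoding $x_{i,\alpha}=1_{x_i=\alpha}$ is, as you note, pure bookkeeping. The one place you could tighten the writeup is the sentence ``the scope of any single constraint lies squarely inside the base case'': what you actually need is that the closure of $C_j$'s scope introduces no additional linear constraint whose projection onto those $d$ coordinates is nontrivial, which is precisely what the expansion condition (every small constraint set touches at least $(d-1.4)|T|$ variables) certifies, so a one-line appeal to that inequality would close the only soft spot.
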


\begin{proofof}{Lemma \ref{int_gap}}Without lose of generality, we assume $[N]=[n] \times F_q$. It is natural to think $[N]$ corresponding to $n$ variables and each variables has $q$ vertices corresponding to $F_q$. Let $G$ be a random bipartite graph on $[N] \cup [M]$ that is $d$-regular on $[M]$. For each vertex $j \in M$, the probability that $j$ has two or more neighbors in $i \times F_q$ for some $i$ is at most $\frac{d^2 q}{n}$. Let $R$ denote the subset in $M$ that do not have two or more neighbors in any $i \times F_q$ for all $i \in [n]$. With probability at least $1-\frac{1}{\sqrt{n}}$, $R \ge (1-\frac{d^2 q}{\sqrt{n}})M$. 

Because the neighbors of each vertex in $[M]$ is generated by choosing $d$ random vertices in $[N]$. For each vertex in $R$, the generation of its neighbors is as same as first sampling $d$ random variables in $[n]$ then sampling an element in $F_q$ for each variable. By a standard calculation using Chernoff bound and Stirling formula, there exists a constant $\beta=O_{d,M/n}(1)$ such that with high probability, $\forall T \subseteq {R \choose \le \beta n}$, $T$ contains at least $(d-1.4)|T|$ variables.

We construct an instance $\Phi$ based on the induced graph of $[n]\times F_q \cup R$ in the list CSP with the parameters $k,q,d$ and the predicate $\{\vec{0}\}$. For each vertex $j \in R$, let $(i_1,b_1),\cdots,(i_d,b_d)$ be its neighbors in G. We add a constraint $C_j$ in $\Phi$ with variables $x_{i_1},\cdots,x_{i_d}$ and $\vec{b}=(b_1,\cdots,b_d)$. 

Recall that $C$ is a subspace staying a subset $Q$ of size $k$ with probability $p_0$, we use the following two claims to prove the value of the vertex expansion of $\rho N$-subsets in the Lasserre hierarchy is at most $(1-p_0)R+(M-R) \le (1-p_0)(1-\frac{d^2 q}{\sqrt{n}})M+\frac{d^2 q}{\sqrt{n}}M \le (1-p_0+o(1))M$ with high probability.
\begin{claim}\label{c1}
$\Phi$'s value is at least $p_0 |R|$ in the $\Omega(\beta n)$-level Lasserre hierarchy.
\end{claim}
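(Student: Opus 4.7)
The strategy is to re-use the spreading construction from the proof of Lemma~\ref{Las_val_phi}, feeding it a Lasserre solution for an auxiliary classical CSP built from the pairwise independent subspace $C$ of the hypothesis, and then to evaluate the resulting list-CSP pseudo-distribution against $\Phi$'s predicate $\{\vec{0}\}$. The point is that the spreading produces feasibility data (PSD certificates for the moment matrix and for each global constraint) that do not depend on the predicate used in the objective.

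First I would introduce an auxiliary instance $\Psi$ having $n$ variables over $F_q$, the same $|R|$ constraint tuples $(C_j,\vec{b}_j)$ as $\Phi$, and predicate equal to the pairwise independent subspace $C$. By the construction of $R$, its hypergraph meets the $(d-1.4)|T|$-variable expansion condition for every $|T|\le \beta n$, so Theorem~\ref{Las_value_m} delivers a feasible Lasserre solution $(y,\vec{v})$ for $\Psi$ of value $|R|$ at level $\Omega(\beta n)$, and Observation~\ref{obs} tells us that $y|_{C_j}$ is uniform over the $|C|$ tuples in $C+\vec{b}_j$. Next I would apply the spreading formulas from that proof with the $k$-subset $Q$ from the hypothesis: set $z_S(g)=\sum y_T(g')$ and $\vec{u}_S(g)=\sum \vec{v}_T(g')$, summed over pairs $(T,g')$ with $S\subseteq T\oplus Q$ and $g'\oplus Q|_S=g$. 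The verifications already contained in the proof of Lemma~\ref{Las_val_phi} show that $\vec{u}$ explains $z$ (so \eqref{eq:b1} is PSD) and that $\sum_{\alpha\in F_q}\vec{u}_{(i,\alpha)}(1)=k\,\vec{v}_\emptyset$, which via Lemma~\ref{ingre} certifies PSD-ness of the global-constraint matrix \eqref{eq:b2} for every $i\in[n]$. Thus $(z,\vec{u})$ is a feasible $\Omega(\beta n)$-level Lasserre solution for the list CSP on $[n]\times F_q$, independently of which predicate is used to score it.

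Finally I would evaluate $\Phi$'s objective. With predicate $\{\vec{0}\}$, each constraint contributes $z_{(C_j,\vec{b}_j)}(\vec{1})$; unraveling the spreading definition (and using that conflicting terms contribute $0$ because $y$ places each variable at a single value), this simplifies to $\Pr_y[\vec{x}_{C_j}\in\vec{b}_j-Q^d]$. Since $y|_{C_j}$ is uniform on $C+\vec{b}_j$, shifting by $-\vec{b}_j$ turns the probability into $|C\cap(-Q^d)|/|C|$. Because $C$ is a linear subspace, negation is a bijection of $C$, so $|C\cap(-Q^d)|=|C\cap Q^d|\ge p_0|C|$ by the subspace-in-subset hypothesis; hence each constraint contributes at least $p_0$, and summing over $j\in R$ gives the claimed lower bound $p_0|R|$. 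The only genuinely new content over Lemma~\ref{Las_val_phi} is this last evaluation step, where the identity $-C=C$ upgrades the $Q^d$-hypothesis to the $-Q^d$-event required by the predicate $\{\vec{0}\}$; I do not anticipate further obstacles, since feasibility is a direct quotation and the $\oplus$ conflict-handling is already covered in Lemma~\ref{Las_val_phi}'s proof.
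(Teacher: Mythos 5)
Your proposal is correct and takes essentially the same approach as the paper: both build the auxiliary classical CSP with predicate $C$, apply Theorem~\ref{Las_value_m} and Observation~\ref{obs} using the variable-expansion of $R$, spread the resulting solution via Lemma~\ref{Las_val_phi}'s construction, and conclude that each constraint $(C_j,\vec{b}_j)$ of $\Phi$ receives pseudo-probability at least $p_0$. The only cosmetic difference is that you make the $-Q^d$ versus $Q^d$ sign explicit (resolved by $-C=C$ since $C$ is a subspace), whereas the paper folds this into the assertion that every $f+\vec{b}_j$ with $f\in C$ appears with probability exactly $p_0$ — the same bijection argument.
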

\begin{claim}\label{c2}
Suppose $\Phi$'s value is at least $r$ in the $t$-level Lasserre hierarchy, the objective value of the $t$-level Lasserre hierarchy is at most $|R|-r$ for the vertex expansion problem on $[N] \cup R$ with $\rho=1-k/q$.
\end{claim}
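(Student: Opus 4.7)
To prove Claim \ref{c2} I would translate a feasible $t$-level Lasserre solution $(y, \vec u)$ for $\Phi$ of value $\geq r$ into a feasible $t$-level Lasserre solution $(y', \vec u')$ for the vertex-expansion SDP on $[N] \cup R$ by complementing every assignment. Concretely, for $T \subseteq [N] = [n] \times F_q$ with $|T| \leq 2t$ and $g \in \{0,1\}^T$, set
$$y'_T(g) := y_T(\bar g), \qquad \vec u'_T(g) := \vec u_T(\bar g),$$
where $\bar g(v) := 1 - g(v)$. Intuitively, a vertex $(i,\alpha) \in [N]$ lies in the chosen $\rho N$-subset $S$ for vertex expansion iff $\alpha$ is \emph{excluded} from the $k$-list $x_i$ in $\Phi$, and $\rho N = (1 - k/q)nq = (q-k)n = N - kn$ matches the number of such exclusions.

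Next I would verify feasibility of $(y', \vec u')$. The PSD condition \eqref{eq:a1} is immediate: the matrix $(y'_{S \cup T}(f \circ g))$ is obtained from $(y_{S \cup T}(\bar f \circ \bar g))$ by relabeling row/column indices (complementation commutes with $\circ$, since if $f,g$ agree on $S \cap T$ then so do $\bar f, \bar g$), and the latter is PSD by feasibility of $y$. For the cardinality matrix \eqref{eq:a2} corresponding to $\sum_v x_v \geq \rho N$, Lemma \ref{ingre} reduces PSDness to proving the vector identity
$$-\rho N\, \vec u'_\emptyset + \sum_{v \in [N]} \vec u'_{\{v\}}(1) = \vec 0.$$
Using the marginalization identity $\vec u'_{\{v\}}(1) = \vec u_{\{v\}}(0) = \vec u_\emptyset - \vec u_{\{v\}}(1)$ together with the list-CSP constraint $\sum_{\alpha \in F_q} \vec u_{\{(i,\alpha)\}}(1) = k\, \vec u_\emptyset$ for every $i \in [n]$ (this is exactly \eqref{eq:b2} at $S = T = \emptyset$), the sum telescopes to $(N - kn)\, \vec u_\emptyset = (q-k)n\, \vec u_\emptyset = \rho N\, \vec u_\emptyset$, closing the identity. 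This is the most delicate step in the argument: one must use the vector form of the list-CSP global constraint rather than only its scalar consequence, in the same spirit as the verification of \eqref{eq:b2} inside the proof of Lemma \ref{Las_val_phi}.

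Finally I would compute the objective value under $(y', \vec u')$. Since the predicate of $\Phi$ is $\{\vec 0\}$, the coset appearing in constraint $j$ is $C + \vec b_j = \{\vec b_j\}$, and the contribution of $j$ to $\Phi$'s value is the single term
$$y_{(C_j,\, \vec b_j)}(\vec 1) = y_{\{(i_1, b_1), \ldots, (i_d, b_d)\}}(\vec 1) = y_{\Gamma(j)}(\vec 1) = y'_{\Gamma(j)}(\vec 0),$$
where $\Gamma(j) \subseteq [N]$ is the neighborhood of $j \in R$. Summing over $j$ yields $\Phi\text{'s value} = \sum_{j \in R} y'_{\Gamma(j)}(\vec 0)$, so the vertex-expansion objective realized by $(y', \vec u')$ is
$$\sum_{j \in R}\bigl(1 - y'_{\Gamma(j)}(\vec 0)\bigr) = |R| - \Phi\text{'s value} \leq |R| - r,$$
which is exactly Claim \ref{c2}.
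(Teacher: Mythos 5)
Your proof is correct and follows essentially the same route as the paper: both complement the pseudo-distribution ($\vec u'_S(g) = \vec u_S(\vec 1 - g)$), both invoke Lemma \ref{ingre} for the cardinality matrix by verifying $\sum_v \vec u'_{\{v\}}(1) = \rho N\, \vec u'_\emptyset$ via the vector identity $\sum_{\alpha} \vec u_{\{(i,\alpha)\}}(1) = k\,\vec u_\emptyset$ inherited from the construction in Lemma \ref{Las_val_phi}, and both compute the objective by matching the singleton predicate term $y_{(C_j,\vec b_j)}(\vec 1)$ with $y'_{\Gamma(j)}(\vec 0)$. Your remark that the vector form of the list-CSP constraint (not merely its scalar consequence) is what makes Lemma \ref{ingre} applicable is a helpful clarification that the paper leaves implicit.
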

\end{proofof}

\begin{proofof}{Claim \ref{c1}}
Let $\Lambda$ be the list CSP with parameters $F_q,k,d$ and predicate $C$. Let $\Phi'$ be the instance of $\Phi$ in $\Lambda$. From Theorem \ref{Las_value_m}, $P(\Phi')$'s value is $R$ because every small constraint subset contains at least $(d-1.4)|T|$ variables. From Lemma \ref{Las_val_phi}, $\Phi$'s value is at least $p_0 |C| \cdot R$ in the $\Omega(n)$-level Lasserre hierarchy.

Let us take a closer look, for each constraint $j$ in $P(\Phi')$, the pseudo-distribution on $C_j$ is uniformly distributed over $b_j+C$. Therefore every assignment $f+b_j$ for $f \in C$ appears in the pseudo-distribution of $P(\Phi')$ on $C_j$ with probability $1/|C|$. As the same reason, every assignment $f+b_j$ appears in the pseudo-distribution of $\Phi'$ with the same probability $\frac{|Q^d \cap C|}{C}=p_0$. Because $\vec{0} \in C$, the probability $C_j$ contains $\vec{0}+\vec{b}_j$ in the pseudo-distribution of $\Phi'$ is $p_0$ by the analysis. Using the solution of $\Phi'$ in the $\Omega(\beta n)$-level Lasserre hierarchy as the solution of $\Phi$, it is easy to see $\Phi$'s value is at least $p_0 |R|$.
\end{proofof}

\begin{proofof}{Claim \ref{c2}}
Let $y_{S}(f),\vec{v}_{S}(f)$ for all $S \subseteq {[n] \times F_q \choose t}$ and $f \in \{0,1\}^S$ be the solution of pseudodistribution and vectors in the $t$-level Lasserre hierarchy for $\Phi$. We define $z_{S}(f),\vec{u}_S(f)$ for all $S \subseteq {[n]\times F_q \choose t}([N]=[n] \times F_q)$ and $f \in \{0,1\}^S$ to be the pseudodistribution and vectors for the vertex expansion problem as follows:
$$\vec{u}_S(f)=\vec{v}_{S}(\vec{1}-f), z_{S}(f)=y_{S}(\vec{1}-f).$$

The verification of the fact that $\vec{u}$ explains the matrix (\ref{eq:a1}) of $z$ in the Lasserre hierarchy is straightforward. Another property from the construction is $$\sum_{(x_i,b)}\vec{u}_{(x_i,b)}(1)=\sum_{(x_i,b)}\vec{v}_{(x_i,b)}(0)=\sum_{(x_i,b)}(\vec{v}_{\emptyset}-\vec{v}_{(x_i,b)}(1))=\sum_{i\in [n]}\sum_{b\ \in F_q}(\vec{v}_{\emptyset}-\vec{v}_{(x_i,b)}(1))=\sum_i(q\vec{v}_{\emptyset}-k\vec{v}_{\emptyset})=\rho N \cdot \vec{v}_{\emptyset},$$ which implies the matrix in \eqref{eq:a2} is positive semidefinite by Lemma \ref{ingre}.

The value of the vertex expansion problem given $z,\vec{u}$ is $\sum_{j\in [R]}(1-z_{N(j)}(\vec{0}))=\sum_{j\in [R]}(1-y_{(N(j))}(\vec{1}))=R-\sum_{j \in [R]}y_{(N(j))}(\vec{1}) = R-r$.
\end{proofof}

On the other hand, it is easy to prove a random bipartite graph has very good vertex expansion by using Chernoff bound and union bound.
\begin{lemma}\label{integral}
For any constants $d,\rho,\epsilon>0$, and $c \ge \frac{20q}{(1-\rho)^d \cdot \epsilon^2}$, with high probability, a random bipartite graph on $[N] \cup [M] (M=c N)$ that is $d$-regular in $[M]$ guarantees that every $\rho N$-subset in $[N]$ contains at least $1-(1+\epsilon)(1-\rho)^d$ different vertices in $[M]$.
\end{lemma}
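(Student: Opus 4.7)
The plan is to do a straightforward Chernoff bound per subset followed by a union bound over all $\rho N$-subsets. The event we want is that every $\rho N$-subset $S \subseteq [N]$ has $|[M]\setminus\Gamma(S)| \le (1+\epsilon)(1-\rho)^d M$, equivalently $|\Gamma(S)| \ge (1-(1+\epsilon)(1-\rho)^d)M$. I will exploit the crucial independence: since each $j \in [M]$ chooses its $d$ neighbors in $[N]$ independently of the other right vertices, the indicators $X_j^S := \mathbf{1}[j \notin \Gamma(S)]$ (for fixed $S$) are mutually independent.

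First I would fix an arbitrary subset $S\subseteq [N]$ with $|S|=\rho N$ and compute
\[
\Pr[X_j^S = 1] \;=\; \frac{\binom{(1-\rho)N}{d}}{\binom{N}{d}} \;=\; \prod_{i=0}^{d-1}\frac{(1-\rho)N - i}{N - i} \;\le\; (1-\rho)^d,
\]
so $\mu_S := \mathbb{E}\big[\sum_{j \in [M]} X_j^S\big] \le (1-\rho)^d M = c(1-\rho)^d N$. Then a standard multiplicative Chernoff bound for independent Bernoullis gives
\[
\Pr\!\left[\sum_{j \in [M]} X_j^S > (1+\epsilon)(1-\rho)^d M\right] \;\le\; \exp\!\left(-\tfrac{\epsilon^2}{3}(1-\rho)^d M\right).
\]

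Next I would union bound over all $\binom{N}{\rho N} \le 2^N$ choices of $S$. Plugging in $M = cN$ with $c \ge \frac{20q}{(1-\rho)^d\epsilon^2}$ yields
\[
\binom{N}{\rho N}\exp\!\left(-\tfrac{\epsilon^2}{3}(1-\rho)^d cN\right) \;\le\; 2^N\cdot \exp\!\left(-\tfrac{20q}{3}\, N\right),
\]
which is $o(1)$ (in fact exponentially small in $N$) for any $q \ge 1$. Hence with high probability every $\rho N$-subset $S$ satisfies $|\Gamma(S)| \ge (1-(1+\epsilon)(1-\rho)^d)M$, as required.

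There is no real obstacle here: the independence across $j\in[M]$ is built into the random model, the tail bound is textbook Chernoff, and the slack between $\ln 2$ and the constant $20q/3$ absorbs the union bound with room to spare. The only minor care is in bounding the per-vertex probability by $(1-\rho)^d$ — which I do above by showing each factor $\frac{(1-\rho)N-i}{N-i}$ is at most $1-\rho$, since $(1-\rho)i \le i$ — and in noting that the slight arithmetic needed to make the constant $20q$ work is the generous bound $\binom{N}{\rho N}\le 2^N$ rather than the tighter $2^{H(\rho)N}$.
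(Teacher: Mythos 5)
Your proposal is correct and follows essentially the same route as the paper: fix a $\rho N$-subset $S$, bound $\Pr[j\notin\Gamma(S)]$ by $(1-\rho)^d$, apply a multiplicative Chernoff bound over the $M$ independent right-vertices, and union-bound over $\binom{N}{\rho N}\le 2^N$ subsets using the choice of $c$. Your version is in fact slightly cleaner than the paper's — the paper's intermediate claim that the per-subset failure probability is at most $2^{-M}$ does not hold for the stated $c$, but the overall union bound succeeds exactly as you argue by comparing $2^N$ against $\exp(-\Omega(cN))$ with $c\ge 20q/((1-\rho)^d\epsilon^2)$.
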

\begin{proof}
For any subset $S \subseteq [N]$ of size $\rho N$, the probability that a vertex in $[M]$ is not a neighbor of $S$ is at most $(1-\rho)^d+o(1)$. Applying Chernoff bound on $M$ independent experiments, the probability that $S$ contains less than $(1-(1+\epsilon)(1-\rho)^d)$ neighbors in $[M]$ is at most $exp(-\epsilon^2 (1-\rho)^d M/12) \le 2^{-M}$. From union bound, every $\rho N$ subset has at least $(1-(1+\epsilon)(1-\rho)^d)$ neighbors with high probability.
\end{proof}

\subsection{Integrality gaps for the disperser problem}\label{int_gap_dis}
\begin{theorem}
For any $\epsilon>0$ and $\rho \in (0,1)$, there exist infinitely many $d$ such that a random bipartite graph on $[N] \cup [M]$ that is $d$-regular in $[M]$ satisfies the following two properties with high probability:
\begin{enumerate}
\item It is a $\big( \rho N, (1-(1-\rho)^d-\epsilon)M \big )$-disperser.
\item The objective value of the $\Omega(N)$-level Lasserre hierarchy for $\rho$ is at most $\big( 1 - C_0 \cdot \frac{1-\rho}{d\rho+1-\rho} \big)M$ for a universal constant $C_0 \ge 1/10$.
\end{enumerate}
\end{theorem}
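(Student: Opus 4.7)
The plan is to reduce to Lemma \ref{int_gap} combined with Lemma \ref{integral}, handling arbitrary $\rho\in(0,1)$ by approximating it from above by a rational $\rho_0=1-k/q$ of the special form supported by the subspace constructions of Lemmas \ref{k=1} and \ref{k=m-1}, and then arguing that the Lasserre solution with exact cardinality $\rho_0 N$ built in Claim \ref{c2} is automatically feasible under the weaker constraint $\sum x_i\ge\rho N$.

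First I choose the approximating parameter. If $\rho\le 1/2$, Bertrand's postulate picks the largest prime power $q$ with $1/q\ge\rho$, giving $\rho\le \rho_0:=1/q\le 2\rho$; setting $k=q-1$, Lemma \ref{k=m-1} supplies, for infinitely many $d$, a pairwise independent subspace $C\subseteq F_q^d$ staying in a $(q-1)$-subset with probability $p_0\ge \tfrac{1}{3}\cdot\frac{1-\rho_0}{d\rho_0+1-\rho_0}$. If $\rho>1/2$, take the smallest prime power $q$ with $1-1/q\ge\rho$, so $\rho\le\rho_0:=1-1/q$ and $1-\rho_0\ge (1-\rho)/2$; setting $k=1$, Lemma \ref{k=1} supplies $C$ with $p_0=\frac{1-\rho_0}{d\rho_0+1-\rho_0}$. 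In either regime, the monotonicity of $x\mapsto \frac{1-x}{dx+1-x}$ combined with $\rho_0\le 2\rho$ (resp.\ $1-\rho_0\ge (1-\rho)/2$) yields $p_0\ge C_0\cdot\frac{1-\rho}{d\rho+1-\rho}$ for a universal positive constant $C_0$, and careful bookkeeping recovers $C_0\ge 1/10$.

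Next I feed $(q,k,d,C)$ into Lemma \ref{int_gap} applied to a random bipartite graph $G$ on $[N]\cup[M]$ that is $d$-regular on $[M]$, with $M=cN$ for a constant $c=c(\rho,d,\epsilon)$ chosen large enough that Lemma \ref{integral} also applies. This gives, with high probability, an $\Omega(N)$-level Lasserre vector solution $\{\vec u_S(f)\}$ for the program of Section 3 at cardinality $\rho_0 N$ with objective at most $(1-p_0+N^{-1/3})M$, and by inspection of the construction in Claim \ref{c2} this solution satisfies the vector identity $\sum_{i\in[N]}\vec u_i(1)=\rho_0 N\cdot\vec v_\emptyset$. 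Reusing the very same vectors for cardinality $\rho N$, the consistency argument used to prove Lemma \ref{ingre} evaluates the global-constraint matrix \eqref{eq:a2} as
\[
\sum_i y_{S\cup T\cup\{i\}}(f\circ g\circ 1)-\rho N\cdot y_{S\cup T}(f\circ g)=(\rho_0-\rho)N\cdot y_{S\cup T}(f\circ g),
\]
which is a nonnegative multiple of the PSD moment matrix \eqref{eq:a1}; hence feasibility for $\sum x_i\ge\rho N$ holds with the objective unchanged. Simultaneously Lemma \ref{integral} gives, with high probability, that $G$ is a $(\rho N,(1-(1-\rho)^d-\epsilon)M)$-disperser, and a union bound produces a single $G$ meeting both conclusions once the $N^{-1/3}$ slack is absorbed into $C_0$ for $N$ sufficiently large.

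The main obstacle is the vector-level identity $\sum_i\vec u_i(1)=\rho_0 N\cdot \vec v_\emptyset$ that powers the extension from $\rho_0$ to $\rho$: the pseudo-distribution identity $\sum_i y_i(1)=\rho_0 N$ is not enough, and one must verify directly from the construction in Claim \ref{c2} that the equation holds at the vector level, so that the Lemma \ref{ingre} style consistency argument goes through. Tracking the universal constant down to $C_0\ge 1/10$ requires a small amount of case analysis in the two subcases $\rho\le 1/2$ and $\rho>1/2$, but is otherwise routine.
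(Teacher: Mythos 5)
Your proposal is correct and takes essentially the same route as the paper: approximate $\rho$ from above by $\rho_0 = 1-k/q$ using Lemma~\ref{k=1} or Lemma~\ref{k=m-1}, invoke Lemma~\ref{int_gap} at $\rho_0$, and use Lemma~\ref{integral} for the disperser side. The one place you add genuine value is the extension step from cardinality $\rho_0 N$ to the constraint $\sum x_i \ge \rho N$: the paper dismisses this with ``Because $\rho' \ge \rho$, this indicates...'', whereas you correctly observe that the Claim~\ref{c2} construction gives the vector-level identity $\sum_i \vec u_i(1) = \rho_0 N \vec v_\emptyset$, from which the $(S,f),(T,g)$ entry of the global-constraint matrix~\eqref{eq:a2} evaluates to $(\rho_0-\rho)N\,y_{S\cup T}(f\circ g)$, a nonnegative multiple of the PSD moment matrix~\eqref{eq:a1}. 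That is exactly the argument needed to make the paper's one-line assertion rigorous, and your Bertrand's-postulate bookkeeping for $C_0$ is consistent with the paper's unexplained passage from $\frac13\cdot\frac{1-\rho'}{d\rho'+1-\rho'}$ to $\frac19\cdot\frac{1-\rho}{d\rho+1-\rho}$.
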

\begin{proof}
Let $M \ge \frac{20q}{(1-\rho)^d \cdot \epsilon^2} N$, a random bipartite graph $G$ is a $\big( \rho N, (1-(1-\rho)^d-\epsilon)M \big )$-disperser from Lemma \ref{integral} with very high probability.

On the other hand, choose a prime power $q$ and $k$ in the base cases of Lemma \ref{k=1} or Lemma \ref{k=m-1} such that $\rho'=1-k/q>\rho$ and $p_0$ be the probability that the subspace $C$ staying in a $k$-subset. From the construction, $p_0 \ge \frac{1}{3}\frac{1-\rho'}{d\rho'+1-\rho'} \ge \frac{1}{9} \cdot \frac{1-\rho}{d\rho+1-\rho}$. From Lemma \ref{int_gap}, a random graph $G$ that is $d$-regular in $[M]$ has vertex expansion at most $(1-p_0)M$ for $\rho'$ with high probability. Because $\rho'\ge \rho$, this indicates The objective value of the $\Omega(N)$-level Lasserre hierarchy for $\rho$ is at most $( 1 - \frac{1}{9} \cdot \frac{1-\rho}{d\rho+1-\rho})M$.Therefore, a random bipartite graph $G$ satisfies the two properties with high probability.
\end{proof}

We generalize the above construction to $d=\Theta(\log N)$ and prove the Lasserre hierarchy cannot approximate the entropy of a disperser in the rest of this section. Because $d=\Theta(\log N)$ is a super constant, we relax the strong requirement in the variable expansion of constraints and follow the approach of \cite{Tul}. We also notice the same observation has independently provided by Bhaskara et.al. in \cite{BCVGZ}.

\begin{theorem}(Restatement of Theorem 4.3 in \cite{Tul})
Let $C$ be the dual space of a linear codes with dimension $d$ and distance $l$ over $F_q$. Let $\Phi$ with $n$ variables and $m$ constraints be an instance of the CSP $\Lambda$ with $d,k=1,F_q$ and predicate $C$. If for every subset $S$ of at most $t$ constraints in $\Phi$, it contains at least $(1-l/2+.2)d \cdot |S|$ different variables. Then the value of $\Phi$ is $m$ in the $\Omega(t)$-level Lasserre hierarchy.
\end{theorem}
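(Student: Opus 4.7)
The plan is to adapt Tulsiani's Fourier-analytic construction \cite{Tul} to exhibit a feasible $\Omega(t)$-level Lasserre solution $\{\vec v_S(f)\}$ for $\Phi$ whose objective value is $m$, meaning every constraint is satisfied with probability $1$ in the accompanying pseudodistribution. The starting point is the observation that, because $C$ is the dual of a code with minimum distance $l$, the uniform distribution on $C$ is $(l-1)$-wise independent: projecting a uniform codeword onto any $l-1$ coordinates gives the uniform distribution on $F_q^{l-1}$. Hence declaring the assignment to each constraint $C_j$ to be uniform on $C+\vec{b}_j$ yields a local distribution that trivially satisfies $C_j$ while remaining fully random on any $l-1$ of its $d$ coordinates.

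To extend these per-constraint distributions to a coherent pseudodistribution at level $\Omega(t)$, I would use a closure argument driven by the hypothesized expansion. For any family $\mathcal{T}$ of at most $t$ constraints, the assumption $|\Gamma(\mathcal{T})|\ge (1-l/2+0.2)d\,|\mathcal{T}|$ implies that, in an amortized sense, each constraint in $\mathcal{T}$ shares at most $(l/2-0.2)d$ variables with the others. Since $(l-1)$-wise independence permits freely extending any partial assignment on up to $l-1$ coordinates of a codeword, a peeling argument assembles the per-constraint uniform measures into a single joint distribution on the variables of $\mathcal{T}$ consistent with every local marginal; the pseudoprobability $y_S(f)$ is then read off as the marginal under this joint distribution.

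Turning these pseudodistributions into vectors amounts to factoring the candidate Gram matrix $M_{(S,f),(T,g)} = y_{S\cup T}(f\circ g)$ as $\langle \vec v_S(f), \vec v_T(g)\rangle$. I would do so via Fourier decomposition on $F_q^n$: each local distribution on the closure of $\mathcal{T}$ is a uniform measure on an affine translate of a code whose dual has distance at least $l$, so its nonzero Fourier coefficients are supported only on characters of Hamming weight $\ge l$. Combined with the expansion bound, which limits how many heavy characters from different closures can simultaneously be nonzero, this yields a spectral decomposition of $M$ from which $M \succeq 0$ follows.

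The main obstacle is this final PSD verification across all $|S|,|T|\le \Omega(t)$ simultaneously. The threshold $(1-l/2+0.2)d$ is calibrated precisely so that the combinatorial expansion translates into a quantitative spectral gap: the slack $+0.2$ is the margin needed to absorb lower-order terms in the Fourier estimates, and shrinking it further would leave some characters contributing negative eigenvalues. Once $M \succeq 0$ is verified, every constraint being satisfied with probability $1$ by construction gives objective value $m$, and Lemma~\ref{ingre} can be invoked to handle any auxiliary constraints appearing in the Lasserre formulation.
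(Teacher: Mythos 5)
The paper does not actually prove this statement: it is stated as a restatement of Theorem~4.3 of Tulsiani~\cite{Tul} and used as a black box, so there is no internal proof to compare your attempt against. Evaluating your proposal on its own terms, the opening observations are correct and are indeed the ingredients Tulsiani uses: the dual of a distance-$l$ code is $(l-1)$-wise independent, so the uniform distribution on $C+\vec b_j$ satisfies $C_j$ while being fully random on any $l-1$ coordinates, and a closure/peeling argument driven by boundary expansion lets you stitch the per-constraint measures together. Invoking Lemma~\ref{ingre} for the auxiliary constraints at the end is also the right move.

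The gap is in the PSD verification, which you acknowledge is the hard part but then sketch in a way that departs from (and is weaker than) the actual argument. Tulsiani and Schoenebeck do not prove $M\succeq 0$ by estimating a spectral decomposition via Fourier weights; they \emph{construct} the vectors $\vec v_S(f)$ explicitly, in a space whose coordinates are indexed by (assignments to) the closure of $S$, so that $\langle \vec v_S(f),\vec v_T(g)\rangle$ equals, by definition, the probability of $f\circ g$ under the local distribution on the closure of $S\cup T$. Positive semidefiniteness is then automatic because $M$ is literally a Gram matrix. What actually requires proof is a local-global consistency lemma: if $\mathcal T_1\subseteq\mathcal T_2$ are two closures then the distribution on $\mathcal T_2$ marginalizes to the one on $\mathcal T_1$, and this is exactly where the expansion threshold and the $(l-1)$-wise independence enter (when a new constraint is peeled onto the closure, it brings in enough fresh variables that the needed extension stays within the $(l-1)$-wise-independent regime). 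Your Fourier argument — that low-weight characters vanish and the expansion ``limits how many heavy characters from different closures can simultaneously be nonzero, this yields a spectral decomposition'' — is not a proof: there is no stated mechanism by which a count of nonzero Fourier coefficients of the local measures lower-bounds the smallest eigenvalue of a matrix indexed by \emph{all} pairs $(S,f),(T,g)$ up to level $\Omega(t)$, and no such estimate appears in Tulsiani's proof. Without an explicit vector construction or a genuine spectral bound, the key step is unproved.

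One smaller issue: the coefficient as printed, $(1-l/2+.2)d$, is negative for $l\ge 3$; the intended reading (consistent with Tulsiani's Theorem~4.3 and with Theorem~\ref{Las_value_m} in this paper, which uses $(d-1.4)|T|$) is that the constraints in $S$ touch at least $(d-l/2+0.2)|S|$ distinct variables. Your gloss ``each constraint shares at most $(l/2-0.2)d$ variables with the others'' inherits the typo's extra factor of $d$; the correct amortized bound is $(l/2-0.2)$ shared variables per constraint, which is what makes the peeling step fall within the $(l-1)$-wise-independent window.
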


\begin{lemma}\label{Expans_Int}
For any prime power $q, \epsilon>0, \delta>0$, and any constant $c$, a random bipartite graph on $[N] \cup [M]$ that is $d=c \log N$-regular in $M$ has the following two properties with high probability:
\begin{enumerate}
\item It is a $(\delta N,(1-2(1-\delta)^d)M)$-disperser.
\item The objective value of the $N^{\Omega(1)}$-level Lasserre hierarchy for $\rho=\frac{q-1}{q}$ is at most $(1-q^{-\epsilon d}+\frac{1}{N^{1/3}})M$.
\end{enumerate}
\end{lemma}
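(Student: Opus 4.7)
The plan is to mirror the proof of Lemma~\ref{int_gap} with the constant-degree Lasserre lower bound (Theorem~\ref{Las_value_m}) replaced by its super-constant-degree analogue from \cite{Tul} (the restated Theorem~4.3 in the excerpt). For part~1, the natural route is the standard Chernoff-plus-union-bound argument of Lemma~\ref{integral}: for each fixed $\delta N$-subset $S$, a right vertex misses $S$ with probability $(1-\delta)^d+o(1)$, so Chernoff gives failure probability $\exp(-\Omega((1-\delta)^d M))$. With $d=c\log N$ and $M$ chosen polynomially large enough, this falls below $2^{-2N}$, and the union bound over ${N \choose \delta N}\le 2^N$ subsets closes the disperser claim; the looser coefficient ``$2$'' in $(1-2(1-\delta)^d)M$ (compared to ``$1+\epsilon$'' in Lemma~\ref{integral}) is precisely the slack needed because $(1-\delta)^d$ is already polynomially small at super-constant $d$.

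For part~2, I would reproduce the reduction of Lemma~\ref{int_gap} at log-degree. First, choose the predicate $C\subseteq F_q^d$ to be the dual of a linear code over $F_q$ of block length $d$ with parameters calibrated so that $|C|\le q^{\epsilon d}$ while the distance hypothesis of Theorem~4.3 of~\cite{Tul} is satisfied; BCH/Reed--Solomon-style codes (concatenated if necessary to accommodate small $q$) give the required trade-off. Writing $[N]=[n]\times F_q$ with $n=N/q$, let $R\subseteq [M]$ be the subset of right vertices whose $d$ neighbors land in $d$ distinct blocks $\{i\}\times F_q$; a computation as in Lemma~\ref{int_gap} gives $|R|\ge(1-d^2q/\sqrt{n})M$ with high probability. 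Each $j\in R$ induces a CSP constraint of arity $d$ with shift $\vec b_j\in F_q^d$ read off its neighbors' $F_q$-coordinates, and a further Chernoff plus union bound (over subsets of up to $N^{\Omega(1)}$ constraints) shows that with high probability every such subset spans the required $(1-l/2+0.2)d\cdot |T|$ distinct variables. Theorem~4.3 of~\cite{Tul} then says that the CSP on constraints in $R$ has Lasserre value $|R|$ at level $N^{\Omega(1)}$, and Observation~\ref{obs} forces the pseudo-distribution on each $C_j$ to be uniform over $C+\vec b_j$, so the mass on the specific assignment $\vec b_j$—which under the translation used in the proof of Claim~\ref{c2} is exactly the event ``$j$ is not a neighbor of~$S$'' with $\rho=(q-1)/q$—is $1/|C|\ge q^{-\epsilon d}$ per constraint. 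Plugging into Claim~\ref{c2} yields a vertex-expansion SDP value at most $(1-q^{-\epsilon d})|R|+(M-|R|)\le(1-q^{-\epsilon d}+1/N^{1/3})M$, using $|R|\ge(1-o(N^{-1/3}))M$.

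The main technical obstacle is the code choice: one needs $|C|\le q^{\epsilon d}$ (so that $1/|C|\ge q^{-\epsilon d}$) together with enough dual distance to satisfy Theorem~4.3's hypothesis, and these two requirements pull in opposite directions, forcing a careful rate-versus-distance trade-off with possible concatenation to keep $q$ modest. The secondary difficulty is verifying the constraint-variable expansion at the $N^{\Omega(1)}$ scale: this is a standard Chernoff plus union-bound exercise in the style of~\cite{Tul,BCVGZ}, but one has to check that the Chernoff exponent dominates $\log{n \choose s}$ for all subset sizes $s$ up to the target polynomial level, which is where the constant $c$ must be chosen compatible with the code parameters.
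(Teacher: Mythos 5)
Your plan matches the paper's proof in all essential respects: both replace the constant-degree Lasserre lower bound (Theorem~\ref{Las_value_m}) by the restated Theorem~4.3 of~\cite{Tul} for the dual-of-a-linear-code predicate with $|C|=q^{\epsilon d}$ at $d=\Theta(\log N)$, identify $[N]=[n]\times F_q$, restrict to the set $R$ of right vertices landing in distinct blocks, verify $(1-\gamma)d|T|$ expansion for all constraint subsets of size up to $N^{\Omega(1)}$ by a Chernoff-plus-union-bound computation, and finish through the reduction of Claim~\ref{c2} at $\rho=(q-1)/q$ (so $k=1$ and the list-CSP machinery degenerates to the standard CSP, with the per-constraint mass on $\vec b_j$ equal to $1/|C|$), yielding $(1-q^{-\epsilon d}+1/N^{1/3})M$; part~1 is the same Chernoff-plus-union-bound argument as Lemma~\ref{integral} with $M$ chosen $\mathrm{poly}(N)$. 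The only cosmetic difference is that the paper posits a linear code $A$ over $F_q$ of block length $d$, dimension $(1-\epsilon)d$, and constant relative distance (whose existence follows from the Gilbert--Varshamov bound) while you name explicit code families; the logical structure and parameter arithmetic are the same.
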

\begin{proof}
Let $A$ be a linear code over $F_q$ with dimension $d$, rate $(1-\epsilon) d$ and distance $3 \gamma d$ for some $\gamma>0$. $C$ is the dual space of $A$ with size $|C|=q^{\epsilon d}$. Let $M=\frac{20q\cdot N}{(1-\delta)^d}$, which is $poly(N)$ here. From Lemma \ref{integral}, a random bipartite graph $G$ on $[N] \cup [M]$ that is $d$-regular in $M$ is a $(\delta N,(1-2(1-\delta)^d)M)$-disperser with very high probability.

In the rest of proof, it is enough to show that for every subsets $S \subseteq {[M] \choose \le N^{\gamma/2}}$ in $\Phi$, the constraints in $S$ contain at least $(1-\gamma) |S| d$ variables. By union bound, the probability that does not happen is bounded by
$$\sum_{l=1}^{N^{\gamma/2}}{M \choose l}{N \choose (1-\gamma)d \cdot l}(\frac{(1-\gamma)dl}{N})^{dl}\le \sum_{l\le N^{\gamma/2}}M^l N^{(1-\gamma)dl}(\frac{dl}{N})^{dl}\le \sum_{l \le N^{\gamma/2}} \frac{M^l}{N^{\gamma\cdot dl/2}} \frac{(dl)^{dl}}{N^{\gamma \cdot dl/2}}\le 0.1.$$
By Lemma \ref{int_gap}, the value of $G$ with $\rho=\frac{q-1}{q}$ is at most $(1-1/|C|+\frac{d^2 q}{\sqrt{N}})M \le (1-q^{-\epsilon d}+\frac{1}{N^{1/3}})M$ in the $\Omega(n^{\gamma/2})$-level Lasserre hierarchy. 
\end{proof}
We show the equivalence between the vertex expansion problem and the problem of approximating the entropy in a disperser:
\begin{problem}\label{prob1}
Given a bipartite graph $([N],[M],E)$ and $\rho$, determine the size of the smallest neighbor set over all subsets of size at least $\rho N$ in $[N]$.\end{problem}
\begin{problem}\label{prob2}Given a bipartite graph $([N],[M],E)$ and $\gamma$, determine the size of the largest subset in $[N]$ with a neighbor set of size $\le \gamma M$.\end{problem}

We prove the equivalence of these two problems with parameters $\rho+\gamma=1$. For a bipartite graph $([N],[M],E)$ and a parameter $\gamma$, let $T$ be the largest subset in $[N]$ with $|\Gamma(T)| \le \gamma M$. Let $S=[M] \setminus \Gamma(T)$. Then $|S|\ge (1-\gamma)M$ and $\Gamma(S) \subseteq [N] \setminus T$.  Since $T$ is the largest subset with $|\Gamma(T)| \le \gamma M$, $S$ is the subset of size at least $(1-\gamma)M$ with the smallest neighbor set. The converse is similar, which shows the equivalence between these two problems. 

\begin{theorem}
For any $\alpha \in (0,1)$, any $\delta \in (0,1)$ and any prime power $q$, there exists a constant $c$ such that a random bipartite graph on $[N] \cup [M]$ that is $D = c \log N$-regular in $[N]$ has the following two properties with high probability:
\begin{enumerate}
\item It is an $(N^{\alpha}, (1-\delta) M)$-disperser.
\item The objective value of the SDP in the $N^{\Omega(1)}$-level Lasserre hierarchy for obtaining $M/q$ distinct neighbors is at least $N^{1-\alpha/2}$.
\end{enumerate}
\end{theorem}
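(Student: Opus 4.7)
The plan is to take $G$ to be a uniformly random $D=c\log N$ left-regular bipartite graph on $[N]\cup [M]$ with $M=\Theta(N^{\alpha}\log N)$ for a sufficiently large constant $c$ (depending on $\alpha,\delta,q$), and to establish (1) by a direct probabilistic argument and (2) by applying Lemma \ref{Expans_Int} to the transpose $G^{T}$ combined with the Problem \ref{prob1}--Problem \ref{prob2} equivalence proved just above the theorem statement.

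For property (1), each left vertex independently samples $D$ random right neighbors, so for any $S\subseteq [N]$ with $|S|=N^{\alpha}$, each right vertex fails to be adjacent to $S$ with probability at most $\exp(-N^{\alpha}D/M)$, which is at most $\delta/2$ for $c$ large enough. A Chernoff bound together with a union bound over all $\binom{N}{N^{\alpha}}$ size-$N^{\alpha}$ subsets (whose count's logarithm is $\tilde{O}(N^{\alpha})$, dominated by $\delta M$) shows that $G$ is an $(N^{\alpha},(1-\delta)M)$-disperser with high probability.

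For property (2), I would use that $G^{T}$, viewed as a bipartite graph on $[M]\cup [N]$, is a random graph that is $D$-regular on the right side $[N]$, which fits the setup of Lemma \ref{Expans_Int} with $N_{\text{lem}}=M$, $M_{\text{lem}}=N$, and $d_{\text{lem}}=D=(c/\alpha)\log M$ (the required constant in front of $\log M$ is the constant $c/\alpha$, and the lemma's internal parameter is tuned so that its $M_{\text{lem}}$ matches our $N$). By Lemma \ref{Expans_Int}, with high probability the $N^{\Omega(1)}$-level Lasserre value for Problem \ref{prob1} on $G^{T}$ with $\rho=(q-1)/q$ is at most $(1-q^{-\epsilon D}+M^{-1/3})\cdot N$ for any constant $\epsilon>0$. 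Choosing $\epsilon=\alpha/(3c\log q)$ makes the main gap $q^{-\epsilon D}N=N^{1-\alpha/3}$, dominating the error $N/M^{1/3}=O(N^{1-\alpha/3}/\log^{1/3}N)$. I then lift this to a Lasserre solution for Problem \ref{prob2} on $G$ with $\gamma=1/q$ via the polynomial change of variable $z_{i}:=\bigwedge_{j\in \Gamma_{G}(i)}\bar x_{j}$, the Lasserre analogue of the integer-level bijection $T=[N]\setminus \Gamma(S)$ used in the paper's proof of the equivalence. Since $z_{i}$ is a polynomial of degree $D=O(\log N)$ in the $\{x_{j}\}$-variables, a $t$-level Lasserre solution on $G^{T}$ yields an $\lfloor t/D\rfloor=N^{\Omega(1)}$-level Lasserre solution on $G$, whose objective is $\sum_{i}y^{(2)}_{\{i\}}(1)=N-\mathrm{Prob}_1(G^{T},(q-1)/q)\ \text{value}\ \ge N^{1-\alpha/3}(1-o(1))\ge N^{1-\alpha/2}$. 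The $\gamma=1/q$ global constraint $\sum_j y^{(2)}_{\Gamma(j)}(\vec 0)\ge (1-1/q)M$ follows from the pointwise Boolean inequality $\bigwedge_{i\in \Gamma(j)}\bar z_{i}\ge x_{j}$ (since $x_{j}=1$ forces $z_{i}=0$ on every $i\in \Gamma(j)$) combined with the Problem \ref{prob1} global constraint $\sum_{j}x_{j}\ge (1-1/q)M$ on $G^{T}$.

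The main obstacle is verifying that the translated Problem \ref{prob2} global-constraint matrix is truly positive semidefinite. Lemma \ref{ingre} furnishes PSD-ness from a vanishing linear identity $\sum P(R,h)\vec u_{R}(h)=\vec 0$, but here the analogous identity only holds as an inequality because of the nonzero Boolean slack $\bigwedge_{i\in \Gamma(j)}\bar z_{i}-x_{j}\ge 0$. I plan to resolve this by expanding the slack as a sum of $\{0,1\}$-valued monomials in the $\{x_{j}\}$-variables (each of which is its own square modulo the Boolean identities $x_{j}^{2}=x_{j}$), so that the slack's pseudo-expectation against any product $P_{S}(f)P_{T}(g)$ yields a PSD Gram matrix that adds to the Gram matrix supplied by the Problem \ref{prob1} global-constraint PSD certificate, with total Lasserre-degree blow-up only $O(\log N)$ and hence $N^{\Omega(1)}$-level overall.
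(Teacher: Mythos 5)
Your overall plan is the one the paper uses: apply Lemma~\ref{Expans_Int} to the transpose graph, translate the conclusion through the Problem~\ref{prob1}--Problem~\ref{prob2} equivalence, and swap sides. Your direct Chernoff argument for property~(1) is a harmless replacement for the paper's invocation of the equivalence, and your attempt to make the Lasserre-level translation explicit (which the paper glosses over entirely in one sentence) is the right instinct.

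However, there is a gap in the step you flag as ``the main obstacle.'' You write the slack as $\bigwedge_{i\in\Gamma_G(j)}\bar z_i - x_j$, expand it as a sum of $\{0,1\}$-valued indicators in the $\{x_{j'}\}$-variables, and claim the resulting ``total Lasserre-degree blow-up'' is only $O(\log N)$. That bound accounts for replacing each single $z_i$ by a degree-$D$ monomial in $x$, but it does not account for the width of the conjunction over $i\in\Gamma_G(j)$. The slack polynomial depends on all variables in $\{j\}\cup\bigcup_{i\in\Gamma_G(j)}\Gamma_G(i)$, a set of size roughly $d\cdot D$ where $d$ is the \emph{right} degree of the final graph~$G$. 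Since $M=\Theta(N^{\alpha}\log N)$ in your setup, the edge count forces $d=DN/M=\Theta(N^{1-\alpha})$, so the slack has $x$-degree $\Theta(N^{1-\alpha}\log N)$, which is polynomial in $N$, not $O(\log N)$. The Lasserre level available from Lemma~\ref{Expans_Int} (applied to $G^{T}$) is $M^{\Omega(1)}=N^{\Omega(\alpha)}$, so your certificate only fits inside the available level when $\alpha$ is close enough to $1$; for small $\alpha$ the requested pseudo-moments simply are not defined. To repair this you would need a lower-degree certificate for the translated global constraint --- for instance, by formulating Problem~\ref{prob2}'s Lasserre relaxation with auxiliary ``neighbor'' variables $w_j$ and mapping $w_j\mapsto \bar x_j$, so that the only slacks that appear are $\bar x_j - z_i$, each of $x$-degree $O(D)=O(\log N)$, and the global constraint matrix reduces to the exact identity $\sum_j\vec v_j(1)=\rho M\,\vec v_{\emptyset}$ already furnished by Lemma~\ref{ingre}. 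As written, your SOS decomposition does not yield an $N^{\Omega(1)}$-level solution for all $\alpha\in(0,1)$.
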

\begin{proof}
Let $\epsilon=\frac{\log\frac{1}{1-\delta}}{4 \alpha \log q}=O(1)$ and $d= \frac{\log N}{4 \epsilon \log q}$ such that $|C|=q^{\epsilon d} =N^{1/4}$ and $M=\frac{20q \cdot N}{(1-\delta)^d} \ge N^{1/\alpha}$. So $d=O(\log M)$.

From Lemma \ref{Expans_Int}, a random bipartite graph on $[N] \cup [M]$ $d$-regular in $[M]$ is a $(\delta N, M-M^{\alpha})$-disperser, but the value of $N^{\Omega(1)}$-level Lasserre hierarchy for $G$ with $\rho=1-1/q$ is at most $M-M^{1-\alpha/2}$. From the equivalence, any subset of size $M^{\alpha}$ in $[M]$ has a neighbor set of size at least $(1-\delta)N$. On the other hand, it is possible that there exists a $M^{1-\alpha/2}$-subset of $[M]$ with a neighbor set of size at most $[N]/q$ in the Lasserre hierarchy, from the fact that the $N^{\Omega(1)}$-level Lasserre hierarchy has a value at most $M-M^{1-\alpha/2}$ for $\rho=1-1/q$. To finish the proof, swap $[N]$ and $[M]$ in the bipartite graph such that $D=d$ in the new bipartite graph.
\end{proof}
\begin{corollary}(Restatement of Theorem \ref{superdegree})
For any $\alpha \in (0,1)$, any $\delta \in (0,1)$, there exists a constant $c$ such that a random bipartite graph on $[N] \cup [M]$ with $D = c \log N$-regular in $[N]$ has the following two properties with high probability:
\begin{enumerate}
\item It is an $(N^{\alpha}, (1-\delta) M)$-disperser.
\item The objective value of the SDP in the $N^{\Omega(1)}$-level Lasserre hierarchy for obtaining $\delta M$ distinct neighbors is at least $N^{1-\alpha}$.
\end{enumerate}
\end{corollary}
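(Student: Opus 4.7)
The plan is to deduce the corollary directly from the preceding theorem by an appropriate choice of the prime power $q$. Given $\alpha,\delta\in(0,1)$, I would fix a prime power $q$ with $q\ge 1/\delta$; such a $q$ exists and depends only on $\delta$ (for instance, by Bertrand's postulate a prime lies in $[\lceil 1/\delta\rceil,2\lceil 1/\delta\rceil]$, and primes are prime powers, so this choice does not affect the $\log N$ scaling of $D$). Invoking the preceding theorem with the triple $(\alpha,\delta,q)$ yields a constant $c$ such that the random bipartite graph on $[N]\cup[M]$ with $D=c\log N$-regular left side is, with high probability, an $(N^\alpha,(1-\delta)M)$-disperser and has SDP objective value at least $N^{1-\alpha/2}$ for obtaining $M/q$ distinct neighbors in the $N^{\Omega(1)}$-level Lasserre hierarchy. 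Property~(1) of the corollary is then exactly property~(1) of the theorem, so nothing more needs to be said for it.

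For property~(2), I would invoke monotonicity of the SDP objective in the target number of distinct neighbors. The SDP under consideration is a minimization of a linear functional (the relaxed subset size) over pseudo-distributions satisfying the Lasserre PSD conditions together with the linear constraint $\sum_{j\in[M]}\bigl(1-y_{\Gamma(j)}(\vec{0})\bigr)\ge X$ that enforces at least $X$ distinct neighbors. Because the choice $q\ge 1/\delta$ gives $M/q\le \delta M$, replacing the threshold $X=M/q$ by the strictly larger $X=\delta M$ only tightens this linear constraint and shrinks the feasible region; hence the minimum of the objective cannot decrease. Combining this with the theorem's lower bound, the SDP value for obtaining $\delta M$ distinct neighbors is at least the SDP value for obtaining $M/q$ distinct neighbors, which is at least $N^{1-\alpha/2}\ge N^{1-\alpha}$. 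This yields property~(2).

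There is essentially no obstacle in this deduction beyond recording the monotonicity, which is immediate once the SDP is written in its minimization form with a variable threshold on the neighbor count. All genuine technical content is already contained in the preceding theorem and its underlying Lemma~\ref{Expans_Int}; the corollary is a repackaging that trades the discreteness of the parameter $M/q$ (tied to prime powers) for the more natural continuous parameter $\delta M$, at the cost of replacing the exponent $1-\alpha/2$ by the slightly weaker $1-\alpha$.
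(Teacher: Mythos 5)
Your deduction is correct and is surely what the paper intends (the paper gives no separate proof of the corollary): pick a prime power $q\ge 1/\delta$ so that $M/q\le\delta M$, apply the preceding theorem, and observe that a pseudo-solution certifying size $\ge N^{1-\alpha/2}$ with $\le M/q$ neighbors is a fortiori a witness for $\le\delta M$ neighbors, giving SDP value $\ge N^{1-\alpha/2}\ge N^{1-\alpha}$. One small imprecision: the SDP in question is more naturally read as a maximization of the relaxed subset size subject to at most $X$ neighbors (the relaxation gives an upper bound, which is what creates the gap against the integral disperser), not a minimization subject to at least $X$ neighbors as you wrote — but the monotonicity in $X$ points the same way in either reading, so your conclusion stands.
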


\subsection{An integrality gap for the expander problem}\label{int_gap_exp}
We prove that a random bipartite graph is almost $D$-reguar on the right hand side and use the fact $d N \approx D M$.
\begin{theorem}
For any prime power $q$, integer $d<q$ and constant $\delta>0$, there exist a constant $D$ and a bipartite graph $G$ on $[N] \cup [M]$ with the largest left degree $D$ and the largest right degree $d$ has the following two properties for $\rho=1/q$:
\begin{enumerate}
\item It is a $(\rho N,(1-\epsilon'-2 \delta)D)$-expander with $\epsilon'=\frac{(1-\rho)^d-(1-\rho d)}{\rho d}=\sum_{i=1}^{d-1}(-1)^{i-1}\frac{(d-1)\cdots (d-i+1)}{(i+1)!}\rho^i$.
\item The objective value of the vertex expansion for $G$ with $\rho$ in the $\Omega(N)$-level Lasserre hierarchy is at most $(1-\epsilon+\delta)D \cdot \rho N$ with $\epsilon=\frac{\rho (d-1)}{2}$.
\end{enumerate}
\end{theorem}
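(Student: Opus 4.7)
The plan is to specialize the construction of Lemma \ref{int_gap} to the parameters $k=q-1$, $\rho=1/q$, using the pairwise independent subspace from the remark after Lemma \ref{k=m-1}. That subspace $C\subseteq F_q^d$ stays in a $(q-1)$-subset $Q\subseteq F_q$ with probability exactly
\[p_0 \;=\; 1-\tfrac{d}{q}+\tfrac{\binom{d}{2}}{q^2} \;=\; 1-\rho d\bigl(1-\tfrac{\rho(d-1)}{2}\bigr) \;=\; 1-\rho d(1-\epsilon),\]
while by definition $1-(1-\rho)^d=\rho d(1-\epsilon')$; the two quantities differ precisely by the cubic-and-higher order terms in the binomial expansion of $(1-\rho)^d$, which is exactly the source of the integrality gap.

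First I would construct $G$ as follows: choose a sufficiently large constant $D=D(q,d,\delta)$, set $M=\lfloor(1-\delta/4)DN/d\rfloor$, identify $[N]$ with $[n]\times F_q$, and draw a random bipartite graph by picking $d$ uniformly random left-neighbors for each right-vertex independently, so $G$ is exactly $d$-regular on the right. By a Chernoff bound together with a union bound over $[N]$, every left-degree concentrates around $dM/N\le (1-\delta/4)D$ within a $(1\pm\delta/4)$ factor w.h.p.\ once $D$ is large, so the maximum left degree is at most $D$, as required by the theorem.

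Next, for item~1, I would invoke Lemma \ref{integral} (with $D$ chosen large enough to satisfy its hypothesis on $M/N$) to conclude that w.h.p.\ every $\rho N$-subset $S\subseteq[N]$ has $|\Gamma(S)|\ge (1-\delta/4)(1-(1-\rho)^d)M=(1-\delta/4)\rho d(1-\epsilon')M$. Plugging in $dM\ge(1-\delta/4)DN$ yields $|\Gamma(S)|\ge (1-\epsilon'-2\delta)D\rho N$ for $\delta$ small. For item~2, applying Lemma \ref{int_gap} to the same $G$ with $\rho=1/q$ and the above subspace produces an $\Omega(N)$-level Lasserre solution of objective value at most $(1-p_0+N^{-1/3})M=\rho d(1-\epsilon)M+o(M)$; converting via $dM\le DN$ and absorbing the $o(M)$ slack into $\delta$ gives the claimed $(1-\epsilon+\delta)D\rho N$ upper bound.

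The main obstacle is reconciling the bounded-maximum-left-degree requirement (needed for the expansion ratio relative to $D$ to be meaningful) with the random construction used by Lemma \ref{int_gap}, whose argument never controls left degrees directly. The slight shrinkage $M=(1-\delta/4)DN/d$ resolves this cleanly at the cost of a small multiplicative slack absorbed into the $2\delta$ and $\delta$ cushions of the two items. A smaller bookkeeping issue is that Lemma \ref{int_gap}'s SDP is built on the subset $R\subseteq[M]$ of right-vertices with no repeated block, but the discarded fraction is $O(d^2q/\sqrt n)=o(1)$ and is likewise absorbed into the $\delta$ slack.
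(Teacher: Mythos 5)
Your plan to instantiate Lemma \ref{int_gap} with the Lemma \ref{k=m-1} subspace is the right idea and matches the paper's route, but the way you construct $G$ has a genuine flaw. You set $M=\lfloor(1-\delta/4)DN/d\rfloor$ for a \emph{constant} $D$, so the expected left degree is $dM/N\approx(1-\delta/4)D$, a constant. Each left vertex's degree is (essentially) $\mathrm{Binomial}(dM,1/N)$, so the probability that a fixed vertex has degree outside $(1\pm\delta/4)dM/N$ is $\exp(-\Theta(\delta^2 D))$, a \emph{constant} strictly between $0$ and $1$. The union bound over $[N]$ therefore does not go to zero; in fact for constant $D$, with high probability a $\Theta(1)$ fraction of left vertices will have degree exceeding your target bound. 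So you cannot claim that the maximum left degree is at most $D$, which is the entire content of being able to write the expansion as a multiple of $D$.

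This is precisely the obstruction the paper's proof confronts. There, one keeps $M=cN$ for a large constant $c$, writes $D_0=dM/N$, and passes to the \emph{induced} subgraph $G_1$ on $L\cup[M]$ where $L$ is the set of left vertices with degree in $[(1-\delta)D_0,(1+\delta)D_0]$. Chernoff only gives that $|L|\ge(1-\beta^3)N$ for a small constant $\beta$, not $|L|=N$. Passing to the induced graph then forces two further bookkeeping steps your write-up omits: (i) some right vertices $j\in[M]$ lose neighbors and are no longer degree-$d$, so the SDP estimate from Lemma \ref{int_gap} is only applied on the subset $R$ of right vertices that retain degree $d$, and the remaining $M\setminus R$ vertices are charged separately (at most $d\beta M$ to the objective); (ii) one must argue the number of edges removed is $O(\beta dM)$, which requires an additional high-probability bound that no $\beta^3 N$-subset of $[N]$ carries more than $\beta dM$ total degree. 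Without restricting to $L$ and doing this accounting, the expansion-relative-to-$D$ comparison in both items of the theorem does not go through. Your ``slight shrinkage'' of $M$ absorbs an expectation-level slack, but it cannot fix the fact that concentration of all $N$ left degrees fails at constant degree.
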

\begin{proof}
Let $\beta$ be a very small constant specified later and $c=\frac{100q  \cdot \log (1/\beta)}{d (1-\rho)^d \cdot \delta^2}$. Let $G_0$ be a random graph on $[N] \cup [M]$ with $M=cN$ that is $d$-regular in $[M]$. Let $D_0=\frac{dM}{N}$ and $L$ denote the vertices in $[N]$ with degree $[(1-\delta)D_0,(1+\delta)D_0]$. Let $G_1$ denote the induced graph of $G_0$ on $L \cup [M]$. The largest degree of $L$ is $D=(1+\delta)D_0$ and the largest degree of $M$ is $d$. We will prove $G_1$ is a bipartite graph that satisfies the two properties in this lemma with high probability. Because $G_0$ is a random graph, we assume there exists a constant $\gamma=O_{M/N,d}(1)$ such that every subset $S \in {M \choose \le \gamma N}$ has different $(d-1.1)|S|$ neighbors.

In expectation, each vertex in $N$ has degree $D_0$. By Chernoff bound, the fraction of vertices in $[N]$ of $G_0$ with degree more than $(1+\delta)D_0$ or less than $(1-\delta)D_0$ is at most $2exp(-\delta^2 \cdot \frac{d}{N} \cdot M/12) \le \beta^4$. At the same time, with high probability, $G_0$ satisfies that any $\beta^3 N$-subset in $[N]$ has total degree at most $\beta dM$ because ${N \choose \beta^3 N} \cdot exp(-(\frac{1}{\beta^2})^2 (\beta^3 d) \cdot M/12)$ is exponentially small in $N$. Therefore with high probability, $|L| \ge (1-\beta^3) N$ and there are at least $(1-\beta) dM$ edges in $G_1$.

We first verify the objective value of the vertex expansion for $G_1$ with $\rho=1/q$ in the $\Omega(N)$-level Lasserre hierarchy is at most $(1-\epsilon+\delta)D \cdot \rho N)$. Let $R$ be the vertices in $[M]$ that have degree $d$. From Lemma \ref{int_gap}, the objective value of the vertex expansion for $L \cup R$ with $\rho=1/q$ in the $\Omega(\gamma N)$-level Lasserre hierarchy is at most $(1-p_0)|R|$ where $p_0$ is the staying probability of $C$ in a $q-1$ subset. From Lemma \ref{k=m-1}, $p_0=1-d\rho+{d \choose 2}\rho^2$. Therefore $(1-p_0)|R| \ge (1-1+d\rho - {d \choose 2}\rho^2)(1-d \beta) M$. For the vertices in $M\setminus R$, they will contribute at most $d\beta M$ in the objective value of the Lasserre hierarchy. Therefore the objective value for $G_1$ is at most $(d \rho - {d \choose 2}\rho^2 + d\beta)M=(1-\frac{(d-1)\rho}{2}+\frac{\beta}{\rho})\rho d M \le (1-\epsilon+\frac{\beta}{\rho})\rho D M$.

For the integral value, every $\rho N$-subset in $[N]$ has at least $(1-(1+\beta)(1-\rho)^d)M$ neighbors in $G_0$ by Lemma \ref{integral}. Because $G_1$ is the induced graph of $G_0$ on $L \cup [M]$, every $\rho N$-subset in $L$ has at least $(1-(1+\beta)(1-\rho)^d)M \ge (1-\epsilon'-\frac{\beta}{\rho d})\rho d M \ge (1-\epsilon'-\frac{\beta}{\rho d})D_0 \cdot \rho N$ neighbors in $G_1$. By setting $\beta$ small enough, there exists a bipartite graph with the required two properties. %Next we verify every $S\in {L \choose \le \rho N}$ has the desired expansion. For subsets with size at least $\frac{\rho}{10} N$, a calculation similar to Lemma \ref{integral} shows $G_1$ guarantees such an expansion among large subset with high probability. For small subset, because every vertex of $[M]$ is generated independently, for a fixed subset $S$, a vertex in $[M]$ has two neighbors in $S$ is at most $\frac{{d \choose 2}|S|^2}{{n \choose 2}}$. Because every vertex in $L$ has degree at least $(1-\delta)D_0$, $Pr[|\Gamma(S)| \le (1-\epsilon')D|S|]\le Pr[\#(\textit{vertices in [M] have 2 neighbors in }S) \ge (\epsilon'-2\delta)D|S|] \le exp(-((\frac{\epsilon'-2\delta) n}{|S|d}-1)^2 \cdot \frac{{d \choose 2}|S|^2}{{n \choose 2}} \cdot M/12)$. Applying union bound, $G_1$ is a $(\rho N,(1-\epsilon'-2 \delta)D)$-expander with high probability. 
\end{proof}
\begin{corollary}
For any $\epsilon>0$ and any $\epsilon'<\frac{e^{-2\epsilon}-(1-2\epsilon)}{2\epsilon}$, there exist $\rho$ small enough and a bipartite graph $G$ with the largest left degree $D=O(1)$ that has the following two properties:
\begin{enumerate}
\item It is a $(\rho N,(1-\epsilon')D)$-expander.
\item The objective value of the vertex expansion for $G$ with $\rho$ in the $\Omega(N)$-level Lasserre hierarchy is at most $(1-\epsilon)D \cdot \rho N$.
\end{enumerate}
\end{corollary}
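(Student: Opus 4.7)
The plan is to deduce this corollary from the preceding theorem by tuning its parameters $q$, $d$, and $\delta$ so that its guarantees $\epsilon_T := \rho(d-1)/2$ and $\epsilon'_T := \frac{(1-\rho)^d - (1-\rho d)}{\rho d}$ match the corollary's targets $\epsilon$ and $\epsilon'$. Writing $g(c) := \frac{e^{-c}-(1-c)}{c}$, the corollary's threshold $\frac{e^{-2\epsilon}-(1-2\epsilon)}{2\epsilon}$ is exactly $g(2\epsilon)$. Using the standard limit $(1-\rho)^{c/\rho} \to e^{-c}$ as $\rho \to 0$ with $c := \rho d$ held fixed, one has $\epsilon_T \to c/2$ and $\epsilon'_T \to g(c)$; the finite-$\rho$ refinement $\epsilon'_T = g(c) - e^{-c}\rho/2 + O(\rho^2)$ shows moreover that $\epsilon'_T$ sits strictly below $g(c)$ for every $\rho>0$.

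Given $\epsilon$ and $\epsilon' < g(2\epsilon)$, I would first fix $\delta>0$ small (depending on the gap $g(2\epsilon)-\epsilon'$), then choose a large prime power $q$, set $\rho=1/q$, and pick an integer $d<q$ so that $c=\rho d$ lies just above $2\epsilon+2\delta$. This forces $\epsilon_T = c/2 - \rho/2 \ge \epsilon+\delta$, so the theorem's Lasserre bound $(1-\epsilon_T+\delta)D\rho N$ is at most $(1-\epsilon)D\rho N$, yielding property (2); and it simultaneously aims at $\epsilon'_T + 2\delta \le \epsilon'$, so that the theorem's $(\rho N,(1-\epsilon'_T-2\delta)D)$-expander guarantee implies the claimed $(\rho N,(1-\epsilon')D)$-expander property, yielding (1). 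The largest left degree in the graph returned by the theorem is $D=(1+\delta)dM/N$, a constant once $d$, $M/N$, and $\delta$ are chosen, so the claim $D=O(1)$ is automatic.

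The main technical obstacle is the simultaneous calibration of $c$, $\rho$, and $\delta$: $c$ must just exceed $2(\epsilon+\delta)$ for the Lasserre side, while on the expander side the finite-$\rho$ correction of order $\rho$ in $\epsilon'_T$, together with the overshoot $g(c)-g(2\epsilon)$ from $c>2\epsilon$ and the $2\delta$ slack, must be absorbed inside the gap $g(2\epsilon)-\epsilon'$. This reduces to a continuity/calculus check involving $g$ and $g'$ at $c=2\epsilon$, together with the constraint $\epsilon+\delta<1/2$ needed to ensure that the integer $d$ can be taken below $q$ as the theorem requires. Once feasible $(q,d,\delta)$ are pinned down by this calibration, invoking the preceding theorem with those choices delivers the bipartite graph $G$ with both properties of the corollary.
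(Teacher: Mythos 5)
Your proposal takes exactly the paper's route: apply the preceding theorem with $\rho=1/q$ small, set $c=\rho d$ near $2\epsilon$, and use $(1-\rho)^{c/\rho}\to e^{-c}$ to match $\epsilon'_T=\frac{(1-\rho)^d-(1-\rho d)}{\rho d}$ against $g(2\epsilon)$, absorbing the $2\delta$ slack and the $c>2\epsilon$ overshoot in the gap $g(2\epsilon)-\epsilon'$. You correctly flag this calibration as the crux and correctly compute the refinement $\epsilon'_T=g(c)-\tfrac{e^{-c}\rho}{2}+O(\rho^2)$.

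However, you leave the calibration as ``a continuity/calculus check,'' and if you actually perform it, it fails. The Lasserre requirement $\epsilon_T=\tfrac{\rho(d-1)}{2}\ge\epsilon+\delta$ forces $c\ge 2\epsilon+2\delta+\rho$. The function $g(c)=\frac{e^{-c}-1+c}{c}$ is strictly increasing, and in fact $g'(c)=\frac{1-(1+c)e^{-c}}{c^2}>\frac{e^{-c}}{2}$ for every $c>0$ (this is equivalent to $e^{c}>1+c+\tfrac{c^2}{2}$). Consequently
\[
\epsilon'_T-g(2\epsilon)\;\ge\;g'(2\epsilon)\bigl(2\delta+\rho\bigr)-\tfrac{e^{-2\epsilon}}{2}\rho+O(\rho^2+\delta^2)\;=\;\Bigl(g'(2\epsilon)-\tfrac{e^{-2\epsilon}}{2}\Bigr)\rho+2g'(2\epsilon)\delta+O(\rho^2+\delta^2)\;>\;0,
\]
so for every admissible $(\rho,d,\delta)$ one has $\epsilon'_T+2\delta>g(2\epsilon)>\epsilon'$, and the theorem's $\bigl(\rho N,(1-\epsilon'_T-2\delta)D\bigr)$-expander guarantee is strictly weaker than the $\bigl(\rho N,(1-\epsilon')D\bigr)$-expansion claimed by the corollary. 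The finite-$\rho$ correction $-\tfrac{e^{-c}\rho}{2}$ you invoke is real but is always outweighed by the $g'(2\epsilon)(c-2\epsilon)$ overshoot. So the limit $\epsilon'_T\to g(2\epsilon)$ is approached from above, not below, and the argument as proposed does not close. (The paper's own one-line proof — ``$d=\tfrac{2\epsilon}{\rho}+1$ ... the limit of $\epsilon'$ is $g(2\epsilon)$ by decreasing $\rho$'' — has the same gap: it never checks the direction of convergence, and it also leaves the $+\delta$ on the Lasserre side unaddressed, since with $d=\tfrac{2\epsilon}{\rho}+1$ one gets $\epsilon_T=\epsilon$ exactly, giving a Lasserre bound $(1-\epsilon+\delta)D\rho N$ rather than $(1-\epsilon)D\rho N$.)
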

\begin{proof}
Think $\rho$ to be a small constant and $d=\frac{2\epsilon}{\rho}+1$ such that $\epsilon$ is very close to $\frac{\rho d}{2}$. Then the limit of $\epsilon'=\frac{(1-\rho)^d-(1-\rho d)}{\rho d}$ is $\frac{e^{-\rho d}-(1-\rho d)}{\rho d}=\frac{e^{-2\epsilon}-(1-2\epsilon)}{2\epsilon}$ by decreasing $\rho$.
\end{proof}

\section{Approximation Algorithm}
In this section, we will provide a polynomial time algorithm that has an approximation ratio close to the integrality gap in Theorem \ref{i_int_gap}. 
\begin{theorem}\label{approx_alg}
Given a bipartite graph $([N],[M],E)$ with right degree $d$, if $(1-\Delta)M$ is the size of the smallest neighbor set over $\rho N$-subsets in $[N]$, there exists a polynomial time algorithm that outputs a subset $T \subseteq [N]$, such that $|T|=\rho N$ and $\Gamma(T) \le \big( 1 - \Omega(\frac{\min\{(\frac{\rho}{1-\rho})^2, 1\}}{\log d} \cdot d (1-\rho)^d \cdot \Delta) \big)M$.
\end{theorem}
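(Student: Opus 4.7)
My plan is to solve a Lasserre SDP relaxation, apply a Hast/CMM07-style preprocessor to amplify the per-clause contribution, and round by sorting inner products of \emph{centered} SDP vectors against a random Gaussian. The key new ingredient is to force a centering identity $\sum_i \vec v_i = (1-\rho)N\, \vec v_\emptyset$ into the SDP so that the empirical $(1-\rho)$-quantile of the Gaussian projections stays close to the distributional quantile; taking the $\rho N$ smallest inner products as the output set $T$ then guarantees $|T|=\rho N$ exactly, and an Alon--Naor-style correlation inequality will show that switching from a deterministic Gaussian threshold to the order-statistic cutoff loses only the $\min\{\rho/(1-\rho),(1-\rho)/\rho\}$ factor advertised in the theorem.

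Concretely, substitute $y_i=1-x_i$, so $|T|=\rho N$ becomes $\sum_i y_i=(1-\rho)N$ and a right vertex $j$ is missed iff $\wedge_{i\in\Gamma(j)} y_i=1$; this is Max $d$-AND with a cardinality constraint and SDP value at least $\Delta M$ by hypothesis. Write the Lasserre SDP with vectors $\vec v_i$ for ``$y_i=1$'', enforce the cardinality identity $\sum_i \vec v_i=(1-\rho)N\, \vec v_\emptyset$ (valid by Lemma~\ref{ingre}), and pass to the centered vectors $\vec w_i = \vec v_i-(1-\rho)\vec v_\emptyset$, which satisfy $\sum_i \vec w_i = \vec 0$ and $\|\vec w_i\|\le \max\{\rho,1-\rho\}$. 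Then apply the Hast/CMM07 preprocessor (at the cost of a constant increase in the Lasserre level), which boosts the per-clause SDP contribution by a factor $\Theta(d(1-\rho)^d/\log d)$; this is where the $1/\log d$ factor in the final ratio originates.

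The rounding is then: sample $\vec g\sim\mathcal{N}(\vec 0,I)$, form $r_i=\langle \vec w_i,\vec g\rangle$, sort $\{r_i\}$, and output the indices of the $\rho N$ smallest values as $T$. Exact cardinality is automatic, and because $\sum_i \vec w_i=\vec 0$ we have $\sum_i r_i=0$, so the empirical $(1-\rho)$-quantile of $\{r_i\}$ is within $O(\max\{\rho,1-\rho\}/\sqrt{N})$ of the distributional Gaussian $(1-\rho)$-quantile. A preliminary CMM07-style threshold rounding gives $|T|=(1\pm 1/\sqrt d)\rho N$ via Chernoff on $\sum_i r_i$; the sort-based rounding fixes the size exactly, and an Alon--Naor comparison inequality upper-bounds the per-clause survival probability under the order-statistic cutoff by the one under a deterministic Gaussian cutoff, losing at most $\min\{\rho/(1-\rho),(1-\rho)/\rho\}$. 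The square $\min\{(\rho/(1-\rho))^2,1\}$ in the theorem appears because the clause event is a $d$-fold joint upper-tail event, which amplifies the one-dimensional asymmetry. Multiplying the rounding loss by the preprocessor gain $\Omega(d(1-\rho)^d/\log d)$ and by the SDP objective $\Delta M$ yields the claimed neighbor-set bound.

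The main obstacle, and the step I would spend most effort on, is the Alon--Naor-style comparison in the centered setting. For $\rho$ far from $1/2$ the $(1-\rho)$-quantile sits deep in one tail of a single $r_i$, so even a tiny perturbation of the cutoff can shift the per-clause probability by a factor exponential in $d$; the centering identity $\sum_i \vec w_i=\vec 0$ is what controls the deviation of the empirical cutoff from the Gaussian cutoff, but extracting the precise constant $\min\{(\rho/(1-\rho))^2,1\}$ requires a careful $d$-fold correlated Gaussian tail computation that uses the norm bound $\|\vec w_i\|\le\max\{\rho,1-\rho\}$. Checking that this sort-based rounding remains compatible with the Hast/CMM07 preprocessor, which acts on the same $\vec v_i$'s that enter the centering identity, is the other technical point. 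The overall algorithm runs in polynomial time since it solves a constant-level Lasserre SDP and performs a single Gaussian rounding.
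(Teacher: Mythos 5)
Your high-level plan (an SDP with a centering constraint $\sum_i \vec v_i = \vec 0$, a Gaussian projection, and a final step to nail the cardinality exactly) has the right flavor, and the sort-based rounding is a plausible alternative to the paper's approach of first landing in $[\rho(1-1/d^{1.5})N,\ \rho(1+1/d^{1.5})N]$ and then patching the size by randomly adding or deleting $O(N/d^{1.5})$ vertices. But there is a genuine gap in where you claim the $\min\{(\tfrac{\rho}{1-\rho})^2, 1\}$ factor comes from, and the proposal as stated does not reach the theorem.

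In the paper's proof that factor has nothing to do with the rounding: it appears already in the \emph{lower bound on the SDP value}. The paper writes a plain SDP (not a Lasserre level) maximizing $\sum_{j\in[M]}\|\tfrac1d\sum_{i\in\Gamma(j)}\vec v_i\|^2$ subject to $\|\vec v_i\|\le 1$ and $\sum_i\vec v_i=\vec 0$. To show this SDP has value at least $\min\{(\tfrac{\rho}{1-\rho})^2,1\}\Delta M$, one plugs in the integral optimum $S$: for $\rho\ge 1/2$ set $\vec v_i = e_1$ off $S$ and $\vec v_i = -\tfrac{1-\rho}{\rho}e_1$ on $S$ (feasible since $\tfrac{1-\rho}{\rho}\le 1$), giving value $\ge\Delta M$; for $\rho<1/2$ the same vectors violate $\|\vec v_i\|\le 1$, so one must scale to $\vec v_i=\tfrac{\rho}{1-\rho}e_1$ off $S$ and $\vec v_i=-e_1$ on $S$, and the contribution of each non-neighbor of $S$ drops to $(\tfrac{\rho}{1-\rho})^2$. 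That is the entire source of the factor. Your proposal instead claims the factor arises in a rounding comparison between an order-statistic cutoff and a fixed Gaussian cutoff, via an ``Alon--Naor-style comparison inequality'' and a ``$d$-fold joint upper-tail'' heuristic. That mechanism is not established in the proposal (you flag it as the main obstacle yourself), and it is unlikely to reproduce this exact factor, because the sort-based cutoff is within $O(1/\sqrt N)$ of the Gaussian quantile and hence the loss from switching cutoffs is vanishing, not $\min\{(\tfrac{\rho}{1-\rho})^2,1\}$. Without the SDP-value argument your bound simply has no visible source for the asymmetry factor.

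Two smaller misattributions are also worth fixing. The $1/\log d$ loss in the paper is \emph{not} from a Hast/CMM-style ``preprocessor boost''; it comes from a Gaussian-truncation rounding step (Lemma~\ref{round_vector}): one takes $\zeta_i=\langle\vec g,\vec v_i\rangle$, clips at $\pm t$ with $t=3\sqrt{\log d}$, and scales to $z_i=\zeta_i/t\in[-1,1]$, and the $1/t^2=\Theta(1/\log d)$ factor is exactly the scaling loss, while the clipping error is controlled by the truncated tails. The $d(1-\rho)^d$ gain then appears in a second rounding step (Lemma~\ref{z_to_x}): independently set $x_i=1$ with probability $1-\rho\pm\delta z_i$ for $\delta=(1-\rho)\sqrt{2/d}$, and a $\cosh$-type lower bound shows the probability that an entire clause is unsatisfied is $\Omega(d(1-\rho)^d)\cdot y_j^2$ where $y_j=\tfrac1d\sum_{i\in\Gamma(j)}z_i$. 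Your sketch folds both of these into a single ``preprocessor'' acting on the SDP vectors, which makes it impossible to track where the guarantees come from. Finally, you invoke a full Lasserre relaxation, whereas a basic SDP with the single linear constraint $\sum_i\vec v_i=\vec 0$ suffices for this theorem; the Lasserre machinery in the paper is used only for the integrality-gap side, not for the algorithm.
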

We consider a simple semidefinite programming for finding a subset $T \subseteq [N]$ that maximizes the number of unconnected vertices to $T$.
\begin{align*}
&\max \sum_{j \in [M]}\|\frac{1}{d}\sum_{i \in \Gamma(j)}\vec{v}_i\|^2_2 \tag{*}\\
\text{Subject to} \quad & \langle \vec{v}_i , \vec{v}_i \rangle \le 1\\
& \sum_{i=1}^n \vec{v}_i=\vec{0}
\end{align*}
We first show the objective value of the semidefinite programming is at least $\min\{(\frac{\rho}{1-\rho})^2, 1\} \cdot \Delta$. For convenience, let $\delta$ denote the value of this semidefinite programming and $A$ denote the positive definite matrix of the objective function in the semidefinite programming such that $\delta=\sum_{i,j}A_{i,j}(\vec{v}^T_i \cdot \vec{v}_j)$.  If $\rho \ge 0.5$, $\delta \ge \Delta \cdot M$ by choosing $\vec{v}_i=(1,0,\cdots,0)$ for every $i\notin S$ and $\vec{v}_i=(-\frac{1-\rho}{\rho},0,\cdots,0)$ for every $i \in S$. But this is not a valid solution for the SDP when $\rho <0.5$. However, $\delta \ge (\frac{\rho}{1-\rho})^2 \cdot \Delta \cdot M$ in this case by choosing choosing $\vec{v}_i=(\frac{\rho}{1-\rho},0,\cdots,0)$ for every $i\notin S$ and $\vec{v}_i=(-1,0,\cdots,0)$ for every $i \in S$. Therefore $\delta \ge \min\{(\frac{\rho}{1-\rho})^2, 1\} \cdot \Delta M$. Without lose of generality,  both $\delta$ and $\Delta M$ are $\ge \frac{1}{d} M$, otherwise a random subset is enough to achieve the desired approximation ratio. 

The algorithm has two stages: first round $\vec{v}_i$ to $z_i \in [-1,1]$ and keep $\sum_{i} z_i$ almost balanced, which is motivated by the work \cite{AN04}, then round $z_i$ to $x_i$ using the algorithm suggested by \cite{CMM07}. 
\begin{lemma}\label{round_vector}
There exists a polynomial time algorithm that given $\|\vec{v}_i\|\le 1$ for every $i$, $\sum_i \vec{v}_i=\vec{0}$ and $\delta=\sum_j\|\frac{1}{d}\sum_{i \in \Gamma(j)}\vec{v}_i\|^2_2 \ge M/d$, it finds $z_i \in [-1,1]$ for every $i$ such that $|\sum_i z_i| = O(N/d)$ and $\sum_j (\frac{1}{d} \sum_{i \in \Gamma(j)}z_i)^2 \ge \Omega(\frac{\delta}{\log d})$.
\end{lemma}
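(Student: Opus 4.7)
The plan is to round via a truncated Gaussian projection, following the template of Alon--Naor and Charikar--Makarychev--Makarychev. Sample a standard Gaussian vector $\vec g$ in the ambient dimension of the $\vec v_i$'s, fix a threshold $T = \Theta(\sqrt{\log d})$, and define $y_i = \langle \vec v_i, \vec g\rangle$ together with
$$z_i = \begin{cases} y_i/T & \text{if } |y_i| \le T, \\ \operatorname{sign}(y_i) & \text{otherwise.}\end{cases}$$
By construction $z_i \in [-1,1]$, and since $\sum_i \vec v_i = \vec 0$ we have $\sum_i y_i \equiv 0$, so all of the imbalance in $\sum_i z_i$ comes from the truncation.

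For the balance bound, write $z_i = y_i/T + \delta_i$, where $\delta_i$ is supported on $\{|y_i| > T\}$ and satisfies $|\delta_i| \le |y_i|/T$. Since $y_i \sim N(0, \|\vec v_i\|^2)$ with $\|\vec v_i\| \le 1$, the standard Gaussian tail estimate gives $\mathbb E|\delta_i| = O(e^{-T^2/2}/T)$; choosing $T = \Theta(\sqrt{\log d})$ with constant tuned so that $e^{-T^2/2} \le 1/d$ makes this $O(1/d)$, hence $\mathbb E|\sum_i z_i| \le O(N/d)$. For the objective, set $X_j = \tfrac1d \sum_{i\in\Gamma(j)} y_i$ and $\tilde X_j = \tfrac1d \sum_{i\in\Gamma(j)} z_i$; each $X_j$ is Gaussian with variance $\sigma_j^2 := \|\tfrac1d \sum_{i\in\Gamma(j)}\vec v_i\|^2$, and $\sum_j \sigma_j^2 = \delta$. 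On the event $E_j$ that $|y_i|\le T$ for all $i\in\Gamma(j)$ we have $\tilde X_j = X_j/T$, so Cauchy--Schwarz together with the Gaussian fourth moment $\mathbb E X_j^4 = 3\sigma_j^4$ and the union bound $\Pr[E_j^c] \le 2d\,e^{-T^2/2}$ give
$$\mathbb E[X_j^2 \mathbf 1_{E_j^c}] \le \sigma_j^2 \cdot O\bigl(\sqrt{d\,e^{-T^2/2}}\,\bigr) = o(\sigma_j^2),$$
which yields $\mathbb E\sum_j \tilde X_j^2 \ge (1-o(1))\delta/T^2 = \Omega(\delta/\log d)$.

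To turn this into an algorithm, sample $\vec g$ independently $\mathrm{poly}(N)$ times. Markov's inequality on $|\sum_i z_i|$ gives the balance condition with probability $\ge 1/2$ per trial. For the objective, $\sum_j \tilde X_j^2 \le M$ since each $\tilde X_j \in [-1,1]$, so Paley--Zygmund applied to this bounded non-negative random variable yields the condition $\sum_j \tilde X_j^2 \ge \Omega(\delta/\log d)$ with probability at least $\Omega(\delta/(M\log d))$; the hypothesis $\delta \ge M/d$ makes this $\ge 1/\mathrm{poly}(d)$. Intersecting a polynomial number of trials produces a vector satisfying both conditions with high probability, and both quantities are computable in polynomial time.

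The main obstacle is calibrating $T$ to reconcile two competing effects: pushing $T$ up tightens the balance bound to $O(N/d)$ (rather than the naive $O(N/\sqrt d)$ coming from boundedness of $z_i$ alone), but simultaneously shrinks the projected objective by a factor of $1/T^2$. The choice $T = \Theta(\sqrt{\log d})$ is the sweet spot, and the event-based analysis of $\tilde X_j^2$ on $E_j$ (rather than a naive $\mathbb E[\tilde X_j]^2$ computation) is what avoids an extra logarithmic loss per coordinate of the constraint.
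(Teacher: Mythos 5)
Your rounding scheme is the same truncated Gaussian projection the paper uses, but the final step where you combine the two conditions is broken, and the choice of truncation level $T$ is too small even for your own intermediate estimates.

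First, the internal inconsistency: you tune $T$ so that $e^{-T^2/2}\le 1/d$, and then claim $\mathbb{E}[X_j^2\mathbf{1}_{E_j^c}]\le \sigma_j^2\cdot O\bigl(\sqrt{d\,e^{-T^2/2}}\bigr)=o(\sigma_j^2)$. But with $e^{-T^2/2}\le 1/d$ this Cauchy--Schwarz bound is only $O(\sigma_j^2)$, not $o(\sigma_j^2)$ --- the $\sqrt{d\,e^{-T^2/2}}$ factor is $O(1)$, not vanishing. You need $d\,e^{-T^2/2}=o(1)$, i.e., $T$ strictly larger than $\sqrt{2\log d}$ by a constant factor, before your own objective estimate goes through.

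Second, and more fundamentally, the concluding ``intersecting a polynomial number of trials'' step is not a valid argument. You have shown that in a single trial the balance event $A=\{|\sum_i z_i|=O(N/d)\}$ holds with probability $\ge 1/2$ (Markov), and the objective event $B=\{\sum_j\tilde X_j^2\ge\Omega(\delta/\log d)\}$ holds with probability $\Omega(\delta/(M\log d))=\Omega(1/(d\log d))$. But $\Pr[A]+\Pr[B]<1$, so these bounds alone are compatible with $A\cap B=\emptyset$; repeating trials cannot conjure a joint success that a single trial never admits. What is needed --- and what the paper does with $t=3\sqrt{\log d}$, forcing $e^{-t^2/2}=d^{-4.5}$ --- is to make the expected imbalance $O(N/d^{4.5})$ rather than $O(N/d)$. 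Then Markov gives $\Pr[A^c]=O(1/d^{3.5})$, which is far smaller than the $\Omega(1/(d\log d))$ lower bound on $\Pr[B]$, so a union bound yields $\Pr[A\cap B]\ge\Omega(1/(d\log d))-O(1/d^{3.5})>0$, and only \emph{then} does repeating trials help. The objective bound is robust to this larger $T$ since it only costs a constant in the $1/T^2$ factor. Everything else in your argument (the $\delta_i$ decomposition, the use of $\sum_i y_i\equiv 0$, Paley--Zygmund on the bounded objective) is sound, so the fix is simply to retune the constant in $T=\Theta(\sqrt{\log d})$ and then close with a union bound rather than ``intersecting trials.''
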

\begin{proof}
The algorithm works as follows:
\begin{enumerate}
\item Sample $\vec{g} \sim N(0,1)^N$ and choose $t=3\sqrt{\log d}$.
\item Let $\zeta_i=\langle g,\vec{v}_i \rangle$ for every $i=1,2,\cdots,n$.
\item If $\zeta_i > t$ or $\zeta_i < -t$, cut $\zeta_i=\pm t$ respectively.
\item $z_i=\zeta_i/t$.
\end{enumerate}
It is convenient to analyze the approximation ratio in another set of vectors $\{\vec{u}_i|i \in [n]\}$ in a Hilbert space such that $\vec{u}_i(\vec{g})=\langle \vec{v}_i, \vec{g} \rangle$ and $\langle \vec{u}_i, \vec{u}_j \rangle = E_{\vec{g}} [\langle \vec{u}_i, \vec{g} \rangle \cdot \langle \vec{g}, \vec{u}_j \rangle]=\langle \vec{v}_i,\vec{v}_j \rangle$. So $\sum_{i,j}A_{i,j}(\vec{u}^T_i \cdot \vec{u}_j)=\delta$ and $\sum_i \vec{u}_i=\vec{0}$ again. Let $\vec{u'}_i$ be the vector in the same Hilbert space by applying the cut operation with parameters $t$ on $\vec{u}_i$. Namely $\vec{u'}_i(\vec{g})=t$ (or $-t$) when $\vec{u}_i(\vec{g})>t$ (or $<-t$), otherwise $\vec{u'}_i(\vec{g})=\vec{u}_i(\vec{g}) \in [-t,t]$. Therefore the algorithm is as same as sampling a random point $\vec{g}$ and setting $z_i = \vec{u'}_i(\vec{g})/t$.
\begin{fact}
For every $i$, $\|\vec{u'}_i-\vec{u}_i\|_{1}=O(1/d^{4.5})$ and $\|\vec{u'}_i-\vec{u}_i\|^2_{2}= O(1/d^4)$.
\end{fact}
The analysis uses the second fact to bound $\sum_{i,j}A_{i,j}((\vec{u}_i-\vec{u'}_i)^T \cdot \vec{u}_j) \le O(m/d^2)$ as follows. Notice that $A$ is a positive definite matrix and consider $\sum_{i,j}A_{i,j}(\vec{w}_i^T \cdot \vec{w'}_j)$ for any unit vectors $\vec{w}_i$ and $\vec{w'}_j$. It reaches the maximal value when $\vec{w}_i=\vec{w'}_i$ by the property of positive definite matrices. And $\sum_{i,j}A_{i,j}(\vec{w}^T_i \cdot \vec{w}_j) = \sum_{j \in [M]} \|\frac{1}{d}\sum_{i \in \Gamma(j)}\vec{w}_i\|^2_2$ is always bounded by $M$, because $\vec{w}_i$ are unit vectors. So $\sum_{i,j}A_{i,j}(\vec{w}_i^T \cdot \vec{w'}_j) \le \max\{\|\vec{w}_1\|_2,\cdots,\|\vec{w}_n\|_2\} \cdot \max \{ \|\vec{w'}_1\|_2, \cdots, \|\vec{w'}_n\|_2\} \cdot M$.
\begin{align*}
&\sum_{i,j}A_{i,j}(\vec{u}_i^T \cdot \vec{u}_j)- \sum_{i,j}A_{i,j}(\vec{u'}_i^T \cdot \vec{u'}_j) \\ 
=& \sum_{i,j}A_{i,j}(\vec{u}_i^T \cdot  (\vec{u}_j-\vec{u'}_j)) + \sum_{i,j}A_{i,j}((\vec{u}_i-\vec{u'}_i)^T \cdot \vec{u'}_j)\\
\le& O(M/d^2)
\end{align*}
Therefore $\sum_{i,j}A_{i,j}(\vec{u'}_i^T \cdot \vec{u'}_j)\ge 0.99 \sum_{i,j}A_{i,j}(\vec{u}_i^T \cdot \vec{u}_j) \ge 0.99 M/d$. And it is upper bounded by $t^2 \cdot M$. So with probability at least $\frac{.49}{dt^2}$, $g$ satisfies $\sum_{i,j}A_{i,j} \cdot (\vec{u'}_i(g) \cdot \vec{u'}_j(g)) \ge .49 \delta$. On the other hand, $|\sum_i \vec{u'}_i(g)| \ge N/d$ with probability at most $1/d^3$ from the first property $\| \sum_{i} \vec{u'}_i \|_1 \le O(N/d^4)$.

Overall, with probability at least $\frac{.5}{dt^2}-1/d^3$, $z_i$ satisfies $|\sum z_i| = O(N/d)$ and $\sum_j (\frac{1}{d} \sum_{i \in \Gamma(j)}z_i)^2 = \Omega(\frac{\delta}{t^2}) = \Omega(\frac{\delta}{\log d})$.
\end{proof}
It is not difficult to verify that independently sampling $z_i \in \{-1,1\}$ for every $i$ according to its bias $z_i$ will not reduce the objective value but keep the same bias overall $i$. Without lose of generality, let $z_i \in \{-1,1\}$ from now on.
\begin{lemma}\label{z_to_x}
There exists a polynomial time algorithm that given $z_i$ with $|\sum_i z_i| = O(N/d)$, outputs $x_i \in \{0, 1\}$ such that $\sum_i x_i = (1-\rho)(1\pm 1/d^{1.5})N$ and $\sum_j 1_{\forall i\in \Gamma(j):x_i=1} \ge \Omega(d(1-\rho)^d) \cdot \sum_{j} (\frac{1}{d} \sum_{i \in \Gamma(j)} z_i)^2$.
\end{lemma}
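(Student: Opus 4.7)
The plan is to do an independent biased rounding of the $z_i$'s combined with a symmetrization trick, extracting a quadratic gain in $s_j = \frac{1}{d}\sum_{i\in \Gamma(j)} z_i$. For each $i$ independently, I would set $x_i = 1$ with probability $p_+ = (1-\rho)+\alpha$ when $z_i = 1$ and with probability $p_- = (1-\rho)-\alpha$ when $z_i = -1$, choosing $\alpha = c\cdot\min\bigl(\rho, 1-\rho, (1-\rho)/\sqrt{d}\bigr)$ for a small absolute constant $c$; the first two bounds ensure $p_\pm \in [0,1]$. I would also run this rounding together with the \emph{swapped} version (exchange the roles of $z_i = \pm 1$) and output whichever realization produces the larger objective value.

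For the size constraint, I would use the fact that $\mathbb{E}[\sum_i x_i] = (1-\rho)N + \alpha \sum_i z_i = (1-\rho)N \pm O(\alpha N/d) = (1-\rho)N(1 \pm O(d^{-3/2}))$, because $|\sum_i z_i| = O(N/d)$ by hypothesis; a Chernoff bound on the $N$ independent Bernoullis then pins $\sum_i x_i$ within $(1-\rho)N/d^{3/2}$ of its mean with high probability. To lower bound the expected objective, let $k_j^\pm = d(1\pm s_j)/2$; independence gives $\mathbb{E}[\prod_{i\in\Gamma(j)} x_i] = p_+^{k_j^+} p_-^{k_j^-}$. Summing this with its swapped counterpart, applying the identity $a^{k_j^+}b^{k_j^-}+b^{k_j^+}a^{k_j^-}=2(ab)^{d/2}\cosh\bigl(\tfrac{ds_j}{2}\log(a/b)\bigr)$, and invoking $\cosh(y)\ge 1+y^2/2$ yields
\[ \max\bigl(\mathbb{E}[\mathrm{obj}_+],\mathbb{E}[\mathrm{obj}_-]\bigr) \ge (p_+p_-)^{d/2}\Bigl(M + \tfrac{d^2\log^2(p_+/p_-)}{8}\sum_j s_j^2\Bigr). \]
Setting $\beta = \alpha/(1-\rho) \le c/\sqrt{d}$, one has $(p_+p_-)^{d/2} = (1-\rho)^d(1-\beta^2)^{d/2} = \Omega((1-\rho)^d)$, while $\log(p_+/p_-)=2\tanh^{-1}(\beta)\ge 2\beta$ gives $\tfrac{d^2\log^2(p_+/p_-)}{8} \ge \Omega(d)$; combining these delivers the stated $\Omega(d(1-\rho)^d)\sum_j s_j^2$ lower bound on the expected objective.

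The main obstacle is enforcing the size constraint and the objective lower bound simultaneously in a deterministic output; I would handle this by derandomizing through conditional expectations, sequentially fixing each $x_i$ so as to preserve both the conditional expectation of the objective and the conditional mean of $\sum_i x_i$. A subtler point is the small-$\rho$ regime where $\alpha$ is limited by $\rho$ rather than by $(1-\rho)/\sqrt{d}$, forcing $\beta \le \rho/(1-\rho)$; this costs a further factor of $\min\{(\rho/(1-\rho))^2,1\}$ in the objective bound, which is absorbed into the $\rho$-dependent hidden constant of the $\Omega$ notation and is anyway consistent with the matching $\min$ factor surfacing in Theorem \ref{approx_alg}.
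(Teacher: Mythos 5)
Your proposal is essentially identical to the paper's proof: independent biased rounding with shift $\alpha = \Theta((1-\rho)/\sqrt{d})$, a symmetrization between the two sign conventions (the paper flips a coin between them; you take the better of the two, which is at least as good), the factorization $p_+^{k_j^+}p_-^{k_j^-}+p_-^{k_j^+}p_+^{k_j^-}=2(p_+p_-)^{d/2}\cosh\bigl(\tfrac{ds_j}{2}\log\tfrac{p_+}{p_-}\bigr)$, and the lower bound $\cosh(y)\ge 1+y^2/2$, followed by concentration for $\sum_i x_i$. The paper stops at ``with constant probability'' rather than derandomizing, but that is a cosmetic difference.

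One small inaccuracy in your side remark on the small-$\rho$ regime: when $\alpha$ is capped at $c\rho$ (so $\beta=\alpha/(1-\rho)\le c\rho/(1-\rho)$), the quantity $\tfrac{d^2\log^2(p_+/p_-)}{8}\approx \tfrac{d^2\beta^2}{2}$ evaluates to $\Theta\bigl(d^2\rho^2/(1-\rho)^2\bigr)$, so relative to the target $\Omega(d)$ the loss factor is $\min\bigl\{d\rho^2/(1-\rho)^2,\,1\bigr\}$, not $\min\{(\rho/(1-\rho))^2,1\}$. This is immaterial here because the lemma, like the paper, treats $\rho$ as a fixed constant and $d$ as the asymptotic parameter, in which case $\rho\ge (1-\rho)/\sqrt{d}$ for all large $d$ and no loss is incurred; and the $\min\{(\rho/(1-\rho))^2,1\}$ factor in Theorem~\ref{approx_alg} is actually contributed earlier, by the SDP value lower bound, not by this rounding step.
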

\begin{proof}
The algorithm works as follows:
\begin{enumerate}
\item $\delta=(1-\rho)\sqrt{2/d}$. Execute Step 2 or Step 3 with probability 0.5 and 0.5 separately.
\item For every $i \in [N]$, $x_i=1$ with probability $1-\rho+\delta z_i$.
\item For every $i \in [N]$, $x_i=1$ with probability $1-\rho-\delta z_i$.
\end{enumerate}
Let $y_j=\frac{1}{d}\sum_{i \in \Gamma(j)}z_i$. The probability $x_i=1$ for every $i$ in $\Gamma(j)$ is
\begin{align*}
&\frac{1}{2} \big ( (1-\rho+\delta)^{\frac{1+y_j}{2}d} \cdot (1-\rho - \delta)^{\frac{1-y_j}{2}d} + (1-\rho -\delta)^{\frac{1+y_j}{2}d} \cdot (1-\rho + \delta)^{\frac{1-y_j}{2}d} \huge)\\
=&\frac{1}{2} (1-\rho+\delta)^{d/2}(1-\rho-\delta)^{d/2}\big( (\frac{1-\rho+\delta}{1-\rho-\delta})^{y_j \cdot d/2} + (\frac{1-\rho-\delta}{1-\rho+\delta})^{y_j \cdot d/2} \big)\\
=&\frac{1}{2} (1-\rho)^d (1-\frac{\delta^2}{(1-\rho)^2})^{d/2} \cdot \cosh(y_j \cdot d/2 \cdot \ln(\frac{1-\rho+\delta}{1-\rho-\delta}))\\
\ge & \frac{1}{2} (1-\rho)^d (1-2/d)^{d/2} \cdot 0.9 \cdot \big( y_j \cdot d/2 \cdot \ln(\frac{1-\rho+(1-\rho)\sqrt{2/d}}{1-\rho-(1-\rho)\sqrt{2/d}}) \big)^2\\
\ge & \frac{1}{2} (1-\rho)^d (1-2/d)^{d/2} \cdot 0.9 \cdot y_j^2 \cdot (d/2)^2 \cdot (\sqrt{2/d})^2\\
\ge &\Omega( (1-\rho)^d \cdot y_j^2 \cdot d).
\end{align*}
At the same time, $\sum_i x_i$ is concentrated around $E[\sum_i x_i]=\sum_i (1-\rho \pm \delta z_i)=(1-\rho)N \pm \delta \sum_i z_i =(1-\rho)(1 \pm 1/d^{1.5})N$ with very high probability. Therefore $\{x_1,\cdots, x_n\}$ satisfies $\sum_i x_i = (1-\rho)(1\pm 1/d^{1.5})N$ and $\sum_j 1_{\forall i\in \Gamma(j):x_i=1} \ge \Omega(d(1-\rho)^d) \cdot \sum_{j} y_j^2$ with constant probability.
\end{proof}

\begin{proof}[Proof of Theorem \ref{approx_alg}]
Let $\delta$ be the value from SDP $(*)$, which is $\ge \min\{(\frac{\rho}{1-\rho})^2,1\} \cdot \Delta$ from the analysis above. By Lemma \ref{round_vector}, round $v_i$ into $z_i \in [-1,1]$ such that $|\sum_i z_i|=O(N/d)$ and $\sum_j (\frac{1}{d} \sum_{i \in \Gamma(j)}z_i)^2 \ge \Omega(\frac{\delta}{\log d})$. By Lemma \ref{z_to_x}, round $z_i$ into $x_i\in \{0,1\}$ such that $\sum_i x_i =(1-\rho)(1 \pm 1/k^{1.5})N$ and $\sum_{j} 1_{\forall i\in \Gamma(j):x_i=1} \ge \Omega(d(1-\rho)^d \cdot \frac{\delta}{\log d} )$. 

Let $T=\{i|x_i=0\}$. Then $|T|=\rho (1 \pm O(\frac{1}{k^{1.5}}))N$ and $\Gamma(T) \le (1-C \cdot \frac{\min\{(\frac{\rho}{1-\rho})^2, 1\}}{\log d} \cdot d (1-\rho)^d \cdot \Delta)M$ for some absolute constant $C$. At last, adjust the size of $T$ by randomly adding or deleting $O(\frac{N}{k^{1.5}})$ vertices such that the size of $T$ is $\rho N$. Because at most $O(\frac{N}{k^{1.5}})$ vertices are added to $T$, with constant probability, $\Gamma(j) \cap T=\emptyset$ if $\Gamma(j) \cap T=\emptyset$ for a node $j \in [M]$ before the adjustment. Therefore $\Gamma(T) \le (1-C_0 \cdot \frac{\min\{(\frac{\rho}{1-\rho})^2, 1\}}{\log d} \cdot d (1-\rho)^d \cdot \Delta)M$ for some absolute constant $C_0$.
\end{proof}

\section{Hardness and Approximation for $(\rho N, \rho(1+\epsilon)M)$-disperser}
In this section, we assume $G=([N],[M],E)$ is $D$-regular on left and $d$-regular on right. We present our results for $(\rho N, \rho(1+\epsilon)M)$-dispersers when $\epsilon$ is small enough. We first make a reduction from vertex expansion \cite{LRV13} to the disperser problem which gives a hardness result based on Small-Set Expansion hypothesis. Then we show there is a polynomial time algorithm that has an approximation ratio close to the hardness result when $d|D$. Let $e$ be the base of the natural logarithm in this section.

\begin{theorem}\label{vertex_exp}(Restatement of Theorem 1.3 in \cite{LRV13})
For every $\eta>0$ and $\rho=\frac{1}{e\cdot q}$ for a natural number $q$, there exists an absolute constant $C_0$ such that $\forall \epsilon>0$ it is SSE hard to distinguish between the following two cases for a given graph $H=(V,E)$ with maximal degree $d=O(1/\epsilon)$:
\begin{enumerate}
\item There exists a set $S \subset V$ of size $\rho |V|$ such that $|\Gamma(S)\setminus S| \le \epsilon \cdot |S|$.
\item For every subset $S \subset V$ of size $\le \frac{1}{2} |V|$, $|\Gamma(S) \setminus S| \ge (\min\{10^{-10},C_0\sqrt{\epsilon \log d}\}-\eta)|S|.$
\end{enumerate}
\end{theorem}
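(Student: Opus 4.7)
The plan is to follow the standard framework for SSE-hardness of expansion problems, in which one first designs a dictatorship test on an appropriate noise graph and then lifts it via a product construction with the SSE instance. Since Theorem \ref{vertex_exp} concerns vertex expansion, the central object will be a ``Gaussian noise graph'' on $\mathbb{R}^R$ (or a discretization), and the whole proof will hinge on a vertex-isoperimetric inequality for this graph rather than the more familiar edge-isoperimetric (Borell) one.

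First I would construct the template noise graph $\mathcal{G}_\gamma$, where two points $x, y \in \mathbb{R}^R$ are connected when $y$ lies in a ball of appropriate radius around $\sqrt{1-\gamma}\,x$ under the Ornstein--Uhlenbeck kernel, with $\gamma$ tuned so that dictator sets of the form $\{x : x_i \ge \tau\}$ have vertex expansion $\approx \epsilon$. This yields the completeness side: the set $S$ corresponding to a dictator has $|\Gamma(S)\setminus S| \le \epsilon |S|$, establishing case 1. Next I would prove the key pseudo-random soundness lemma: any set $T \subseteq \mathbb{R}^R$ of Gaussian measure at most $1/2$ whose indicator has all low-degree Fourier (Hermite) influences below some threshold must satisfy $|\Gamma(T)\setminus T| \ge \Omega(\sqrt{\epsilon \log d})\cdot |T|$. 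This is a vertex-expansion analog of Borell's noise stability inequality; the natural route is to relate vertex expansion on $\mathcal{G}_\gamma$ to the Gaussian boundary of a tube of thickness $\sim \sqrt{\gamma}$ around $T$, then invoke the Gaussian isoperimetric inequality, with the extra $\sqrt{\log d}$ factor arising from the relationship between $\gamma$ and $d$ when the latter bounds the maximum degree after discretization.

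Third, I would discretize $\mathcal{G}_\gamma$ to obtain a finite bounded-degree graph on a product space $\Sigma^R$ while preserving both the completeness parameter $\epsilon$ and the pseudo-random soundness $\Omega(\sqrt{\epsilon \log d})$; here a majority-is-stablest-style invariance principle for vertex boundaries is needed so that influential functions (and only those) can be decoded into distinguished coordinates. Finally, I would compose this gadget with an SSE instance in the standard way: the vertex set becomes $V(H) \times \Sigma^R$ and edges combine a step in the Gaussian gadget with a random-walk-like step in the SSE instance. Completeness is inherited from a small non-expanding set in the SSE instance paired with a dictator; soundness comes from the pseudo-random vertex-expansion bound together with a list-decoding argument that converts an influential coordinate into a non-expanding set in the SSE instance, contradicting the SSE soundness assumption.

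The main obstacle will be the vertex-expansion lower bound in step two. Vertex expansion lacks a clean spectral proxy, so standard noise-stability proofs (which control $\langle f, T_\rho f\rangle$) do not immediately yield vertex boundaries; one must argue directly about the Gaussian measure of $\Gamma(T)\setminus T$, for example by comparing $T$ to a halfspace of the same measure and showing its $\gamma$-neighborhood grows at the isoperimetric rate, then quantifying how much larger the neighborhood must be when $T$ has small low-degree influences. Getting the sharp $\sqrt{\log d}$ factor (rather than, say, $\log d$) requires careful tracking of how the discretization degree $d$ is linked to the noise parameter $\gamma$, which is the delicate numerical calculation underlying the explicit constant $C_0$ in the statement.
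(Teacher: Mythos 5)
The paper does not actually prove this statement. It is labeled ``(Restatement of Theorem 1.3 in \cite{LRV13})'' and is invoked as a black box; the only original content the paper adds is the one-line remark immediately afterward, which observes that the construction of Louis, Raghavendra and Vempala (giving a non-expanding set of measure $\rho = \tfrac{1}{2e}$) can be pushed to $\rho = \tfrac{1}{qe}$ for any natural number $q$ by enlarging the gadget alphabet from $\{0,1\}$ to $[q]$. Your proposal, by contrast, attempts to reprove the LRV13 theorem from first principles, which is an entirely different and far more ambitious route than what the paper does.

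As a reconstruction of LRV13, your outline captures the right high-level skeleton (Gaussian/noise gadget, dictatorship test, invariance, composition with SSE, list-decoding of influential coordinates), but it passes too quickly over the two steps that are the real content of LRV13. First, the ``vertex-expansion analog of Borell'' is not obtained by the direct route you sketch; vertex boundaries are not controlled by a bilinear form like $\langle f, T_\rho f\rangle$, and LRV13 go through the notion of \emph{symmetric} vertex expansion $\phi^V$ and a tailored analytic functional before an isoperimetric comparison becomes available. Naively tubing $T$ and invoking Gaussian isoperimetry does not by itself yield the $\sqrt{\epsilon \log d}$ soundness, nor the decoding into coordinates with large low-degree influence. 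Second, your proposal never addresses the one piece of the statement that the paper actually contributes: obtaining the completeness set of measure exactly $\tfrac{1}{qe}$ for general $q$. That step is precisely the alphabet enlargement noted in the paper's remark; working over $\Sigma^R$ with $|\Sigma| = q$ and choosing a dictator threshold set of the right measure is what generalizes the $q=2$ case. Your write-up should at minimum incorporate that observation, and should cite the vertex-isoperimetric and symmetric-vertex-expansion machinery of \cite{LRV13} rather than re-deriving it, since the latter is a substantial standalone result.
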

\begin{remark} In \cite{LRV13}, Louis et.al. proved there exists a subset of size $\rho = \frac{1}{2e}$ in the complete case. Their construction can be generalized to $\rho =\frac{1}{q\cdot e}$ by enlarging the alphabet from $\{0,1\}$ to $[q]$.
\end{remark}

\begin{theorem}\label{amplify_gap}
For every small $\delta$ and $C>1$, there exist a small constant $\gamma$ and a large integer $D$ such that it is SSE hard to distinguish a bipartite graph on $[N] \cup [M]$ with left degree $D$ is between the following two cases:
\begin{enumerate}
\item There exists a set $S \subset V$ of size $\gamma N$ such that $|\Gamma(S)| \le (1+\delta) \cdot |S|$.
\item For every subset $S \subset V$ of size $\gamma N, |\Gamma(S)| \ge C |S|$.
\end{enumerate}
\end{theorem}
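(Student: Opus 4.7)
The plan is to start from Theorem~\ref{vertex_exp}, convert its non-bipartite vertex-expansion gap into a bipartite disperser gap, and then boost the resulting small-constant gap via a graph-power construction that enlarges the left degree. First, invoke Theorem~\ref{vertex_exp} with sufficiently small $\eta$ and $\epsilon$ so that the soundness constant $\min\{10^{-10},C_0\sqrt{\epsilon\log d}\}-\eta$ is at least $\alpha:=10^{-10}/2$; this produces a graph $H$ on $V$ vertices of degree $d=O(1/\epsilon)$ for which it is SSE-hard to decide whether some $\rho|V|$-set satisfies $|\Gamma_H(S)\setminus S|\le\epsilon|S|$ or every $S\subseteq V$ with $|S|\le|V|/2$ satisfies $|\Gamma_H(S)\setminus S|\ge\alpha|S|$. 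Convert $H$ into a bipartite graph $G_0=([V],[V])$ by declaring $u\sim_{G_0}v$ iff $u=v$ or $\{u,v\}\in E(H)$, so that $\Gamma_{G_0}(S)=S\cup\Gamma_H(S)$; this converts the vertex-expansion gap into a bipartite disperser gap of $1+\epsilon$ versus $1+\alpha$ where both sides of the bipartition have size $N=|V|$.

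To inflate this small constant gap, fix an integer $r$ and define $G_r=([V],[V])$ by $u\sim_{G_r}v$ iff the $H$-distance between $u$ and $v$ is at most $r$; then $\Gamma_{G_r}(S)=B_H^r(S)$ and the left degree becomes $D=\Theta(d^r)$, which is the large degree promised by the theorem. In the sound case, iterate the one-step guarantee: as long as $|B_H^{j-1}(S)|\le|V|/2$, we obtain $|B_H^j(S)|\ge(1+\alpha)|B_H^{j-1}(S)|$, and once $|B_H^{j-1}(S)|>|V|/2$ we trivially have $|B_H^j(S)|/|S|>1/(2\gamma)$. Thus $|B_H^r(S)|/|S|\ge\min\{(1+\alpha)^r,1/(2\gamma)\}$, so the choices $r=\lceil\log C/\log(1+\alpha)\rceil$ and $\gamma\le 1/(2C)$ give $|\Gamma_{G_r}(S)|\ge C|S|$ for every $\gamma N$-subset. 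For completeness, we want the matching upper bound $|B_H^r(S_0)|\le(1+\delta)|S_0|$ on the specific set $S_0$ that Theorem~\ref{vertex_exp} supplies; opening up the LRV gadget, $H$ is essentially a noisy product graph and $S_0$ is a dictator-type set, so its $r$-ball can be controlled by a union bound over length-$r$ walks issuing from $S_0$, yielding $|B_H^r(S_0)|\le(1+O(r\epsilon))|S_0|$, which is at most $1+\delta$ once $\epsilon$ is chosen smaller than $\delta/(C_1 r)$ for a suitable constant $C_1$.

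The main obstacle is precisely this completeness-amplification step: the black-box statement of Theorem~\ref{vertex_exp} only controls the one-step boundary of $S_0$ and does not by itself propagate through $r$ iterations, so one has to either look inside the LRV construction to verify that its specific completeness set has small iterated boundary, or redo the LRV reduction with $B_H^r$ baked in from the start. Once this is handled, the parameters fall out as $D=(1/\epsilon)^{O(\log C)}$ and $\gamma=\Theta(1/C)$, reproducing the exponential dependence of $\gamma$ and $D$ on the gap that is flagged in the discussion of the introduction.
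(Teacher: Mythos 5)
Your construction diverges from the paper's in a way that introduces a genuine gap, and you have in fact put your finger on it yourself. You amplify by taking the $r$-th graph power $G_r$ where adjacency means $H$-distance at most $r$; completeness then requires bounding $|B_H^r(S_0)|$, the iterated ball around the LRV completeness set. But Theorem~\ref{vertex_exp} only gives a one-step guarantee $|\Gamma_H(S_0)\setminus S_0|\le\epsilon|S_0|$. The boundary set $\Gamma_H(S_0)\setminus S_0$ has no expansion guarantee whatsoever and each of its vertices has degree $d=O(1/\epsilon)$, so already $|B_H^2(S_0)|$ could be as large as $(1+\epsilon d)|S_0|=\Theta(|S_0|)$, which destroys the completeness. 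Your proposed fix --- a union bound over length-$r$ walks inside the LRV gadget showing $|B_H^r(S_0)|\le(1+O(r\epsilon))|S_0|$ --- is unsubstantiated: since the boundary points are sparse in $V$, a further step from them reaches mostly fresh vertices and there is no reason the ball should grow only linearly in $\epsilon$. Making this rigorous would require opening the LRV construction, which the paper deliberately avoids.

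The paper sidesteps this obstruction by using a tensor-product (not a graph-power) amplification. It forms $G_1$ on $(V_1^k,\,V_2^k\cup(V_1\times[W]))$ with a coordinate-wise edge rule inherited from $G_0$ plus a small padding block $V_1\times[W]$. The crucial point is that in the completeness case one can take the product set $S^k$, whose neighborhood in the $V_2^k$ part is exactly $\Gamma_{G_0}(S)^k$, so $|\Gamma_{G_1}(S^k)|\le(1+\epsilon)^k|S|^k+(1+\epsilon)|S|\,W\le(1+\delta)\rho^kn^k$; no iterated boundary control is ever needed, only the one-step $1+\epsilon$ guarantee raised to the $k$-th power. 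The parameters $\epsilon=\delta^3$, $k=\frac{1}{2\delta^2}$ are chosen so $(1+\epsilon)^k\approx 1+\delta/2$, and $W$ pads the right side so that the soundness analysis (either every coordinate expands by $1+\sqrt{\epsilon}$, or some coordinate saturates past $|V|/2$ and the padding dominates) gives expansion factor $C$. Your soundness iteration is fine, but without a correct completeness bound the argument does not go through, and the fix is not a small patch --- it is a different amplification primitive.
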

\begin{proof}
Let $\epsilon=\delta^3,\rho=\frac{1}{q \cdot e}<\frac{\delta}{4c}$ for some integer $q$ and $k=\frac{1}{2 \delta^2}$. We start from a graph $H$ in Theorem \ref{vertex_exp} to amplify the gap between $1+\epsilon$ and $1+\sqrt{\epsilon \log d}$. Let $|V|=n$ and $V_1=V_2=V$ in the bipartite graph $G_0=(V_1,V_2,E)$. There is an edge $(i,j) \in E$ between $i \in V_1$ and $j \in V_2$ iff $(i,j)$ is also an edge in $G$ or $i=j$. For any subset $S \subseteq V_1$, $|\Gamma(S)|=|\Gamma_G(S) \setminus S|+|S|$ because $\Gamma(S)=\Gamma_H(S) \cup S$ in the construction of $G_0$. 

Let $k$ Let $G_1=(V_1^k,V_2^k \cup (V_1 \times [W]),E')$ where $W=\frac{\delta}{2} \rho^{k-1} n^{k-1}$. There is a edge between $(a_1,a_2,\cdots,a_k) \in V_1^k$ and $(b_1,\cdots,b_k) \in V_2^k$ if and only if for every $i \in [k]$, $(a_i,b_i) \in E$ of $G_0$ or $a_i=b_i$. There is a edge between  $(a_1,a_2,\cdots,a_k) \in V_1^k$ and $(i,w)\in V_1\times [W]$ iff $i \in \{a_1,\cdots,a_k\}$.

In the completeness case, there is a subset $S$ of size $\rho N$ such that $\Gamma_{G_0}(S) \le (1+\epsilon)|S|$. So $\Gamma_{G_1}(S^k) \le (1+\epsilon)^k |S|^k + (1+\epsilon) |S| W=(1+\delta/2)\rho^k n^k + (1+\epsilon)\rho n W \le (1+\delta)\rho^k n^k$.

For the sound case, let $T$ be an $(\rho n)^k$-subset of $V_1^k$. there are two cases: one is that each coordinate expands at least $(1+\sqrt{\epsilon})$ as the soundness of $G_0$. Otherwise it reach $\frac{1}{2} |V|$ such that it stops expanding at some moment. So $\Gamma(S)=\min \{(1+\sqrt{\epsilon})^k |T| + (1+\sqrt{\epsilon})\rho n W, \frac{n}{2} W\} \ge C |T|$ from our choices of parameters.
\end{proof}

Our algorithm is based on the approximation algorithm of Louis and Makarychev in \cite{LM14}. We modify their algorithm and outline the analysis. For a graph $H=(V,E)$, let $\phi_H(S)=\frac{|\Gamma_H(S) \setminus S|}{|S|}$ for $S\subset V$ and $\phi_{H,\rho}=\min_{S \subset V:|S| \le \rho N}\{ \phi_H(S) \}$.
\begin{theorem}\label{vertex_ex_alg}(Restatement of Theorem 1.8 in \cite{LM14})
There is a polynomial time algorithm for the Small Set Vertex Expansion problem that for any $\delta >0$ given a graph $H=(V,E)$ with maximal degree $d'$, finds a set $S \subset V$ of size at most $(1+\delta)\rho |V|$ such that $\phi_H(S) \le O_{\delta}(\sqrt{\phi_{H,\rho} \cdot d'} \cdot \frac{1}{\rho}\log \frac{1}{\rho} \log \log \frac{1}{\rho} +\epsilon/\rho)$. 
\end{theorem}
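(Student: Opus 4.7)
The plan is to follow the SDP-rounding framework of Louis--Makarychev for small-set vertex expansion and adapt it so that the cardinality is controlled to $(1+\delta)\rho|V|$ rather than an arbitrary constant blow-up. First I would write an SDP relaxation with a vector $\vec{u}_i$ for each vertex $i\in V$ and an auxiliary vector $\vec{u}_0$ representing the ``outside'' indicator; the squared length $\|\vec{u}_i\|^2$ plays the role of $1_{i\in S}$, the cardinality constraint $\sum_i \|\vec{u}_i\|^2 \le \rho|V|$ enforces the size, and the objective $\sum_{(i,j)\in E}\|\vec{u}_i-\vec{u}_j\|^2$ or a vertex-expansion analogue (where a vertex $j$ is charged as soon as any neighbor is in $S$ but $j$ itself is not) gives a lower bound of $\phi_{H,\rho}\cdot\rho|V|$. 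One also imposes $\ell_2^2$ triangle inequalities and the usual SDP constraints so that the resulting metric is a negative-type metric, which is essential for the rounding.

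Next I would apply the orthogonal-separator (or correlated projection) rounding machinery of Chlamtac--Makarychev--Makarychev, tailored to vertex expansion as in \cite{LRV13,LM14}. Given the SDP solution, one samples a Gaussian (or a biased orthogonal separator at threshold $\sim\rho$) and obtains a candidate set $S'$ whose expected size is $\Theta(\rho|V|)$ and whose expected vertex-boundary-to-size ratio is $O(\sqrt{\phi_{H,\rho}\cdot d'})$, the $d'$-factor arising because a single boundary vertex can be ``paid for'' by up to $d'$ SDP edges incident to it. The $\frac{1}{\rho}\log\frac{1}{\rho}\log\log\frac{1}{\rho}$ factor enters when we boost the success probability of the orthogonal separator across $\Theta(\frac{1}{\rho}\log\frac{1}{\rho})$ independent trials and then combine them via a union/merge procedure, with the $\log\log$ coming from a second level of amplification needed to control the variance of the size.

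After sampling, the cardinality of $S'$ is not exactly $\rho|V|$; it is concentrated around its expectation but can be slightly larger or smaller. I would handle this in the standard way: repeat the rounding $\operatorname{poly}(1/\delta)$ times, keep only outcomes with $|S'|\in[(1-\delta/2)\rho|V|,(1+\delta)\rho|V|]$, and for the ones that are slightly undersized, pad with a few arbitrary vertices — each such vertex contributes at most $d'$ new boundary vertices, which is absorbed into the $O_\delta(\cdot)$ constant. Markov's inequality on the expansion, together with a conditional expectation argument on the sampled size, guarantees that at least one trial is good.

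The main obstacle is the joint control of size and expansion: naive Gaussian rounding gives expansion $\sqrt{\phi\cdot d'}$ but inflates the size by a multiplicative constant, while naive thresholded rounding fixes the size but loses additional $\log$-factors in expansion. Reconciling the two requires the layered orthogonal-separator construction of \cite{LM14}, and verifying that the SDP triangle inequalities survive the transition from edge expansion to vertex expansion (by splitting each high-degree vertex or by introducing the auxiliary $\vec{u}_0$) is the technical heart of the argument.
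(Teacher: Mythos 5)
The paper does not prove this theorem: the label ``(Restatement of Theorem 1.8 in \cite{LM14})'' means it is invoked as a black box from Louis and Makarychev, and the only argument the paper gives in this part of Section 5 is for Corollary \ref{alg_map} --- namely that the \cite{LM14} rounding still works after adding the extra equality constraints $\|\vec{v}_j-\vec{v}_k\|_2^2=0$ for $\psi(j)=\psi(k)$, because the orthogonal-separator rounding assigns equal vectors the same value. So there is no in-paper proof to compare your proposal against; you are proposing a re-derivation of an external result that the paper simply cites. For the purposes of checking this paper, Theorem \ref{vertex_ex_alg} should be taken as given and the thing to scrutinize is the (short) argument for Corollary \ref{alg_map}.

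As a sketch of the \cite{LM14} argument, your outline is directionally consistent with what the paper says about that work (an $\ell_2^2$-metric SDP relaxing symmetric vertex expansion, rounded by orthogonal separators in the spirit of \cite{CMM06,BFKMNNS}), but a few details would not survive a careful write-up. The SDP objective the paper exhibits is $\sum_{j\in V}\max_{j_1,j_2\in\Gamma(j)}\|\vec{v}_{j_1}-\vec{v}_{j_2}\|_2^2$, a per-vertex ``spread of the neighborhood'' term, not a per-edge sum; this is the vertex-expansion analogue and is what makes the $\sqrt{d'}$ appear. The $\frac{1}{\rho}\log\frac{1}{\rho}\log\log\frac{1}{\rho}$ factor is part of the separator's own distortion guarantee in \cite{LM14}, not an amplification overhead layered on top as you describe. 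Finally, the theorem only asserts $|S|\le(1+\delta)\rho|V|$; there is no size lower bound to enforce, so the ``pad undersized sets with arbitrary vertices'' step is unnecessary, and as stated it is also unsound: adding $\Theta(\delta\rho|V|)$ arbitrary vertices can create up to $\Theta(\delta\rho|V|d')$ new boundary vertices, a contribution of order $\delta d'$ to $\phi_H(S)$ that is not absorbed into $O_\delta(\sqrt{\phi_{H,\rho}\cdot d'}/\rho\cdot\log\frac{1}{\rho}\log\log\frac{1}{\rho})$ when $\phi_{H,\rho}$ is small.
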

It is not difficult to generalize this result with a $k$-to-1 map $\psi$ from $V$ to $W$: redefine $\phi^{\psi}_H(T)=\phi_H(\psi^{-1}(T))$ for $T \subset W$ and $\phi^{\psi}_{H,\rho}=\min_{T \subset W:|T| \le \rho |W|}\{ \phi^{\psi}_H(\psi^{-1}(T)) \}$. 
\begin{corollary}\label{alg_map}
There is a polynomial time algorithm for the Small Set Vertex Expansion problem that for any $\delta >0$ given a graph $H=(V,E)$ with maximal degree $d'$ and a $k$-to-1 map $\psi$ from $V$ to $W$, finds a set $T \subset W$ of size at most $(1+\delta)\rho |W|$ such that $\phi^{\psi}_H(T) \le O_{\delta}(\sqrt{\phi^{\psi}_{H,\rho} \cdot d'} \cdot \frac{1}{\rho}\log \frac{1}{\rho} \log \log \frac{1}{\rho} +\epsilon/\rho)$. 
\end{corollary}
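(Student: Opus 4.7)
The plan is to run the SDP-based algorithm of~\cite{LM14} that establishes Theorem~\ref{vertex_ex_alg}, but with the additional constraint that its decision variables are identified across each fiber of $\psi$. Concretely, to the \cite{LM14} SDP on $H$ one adds the equality $\bar{x}_u = \bar{x}_v$ (together with analogous equalities for every auxiliary SDP vector attached to a single vertex) whenever $\psi(u) = \psi(v)$. The natural integral solution -- the indicator of $\psi^{-1}(T^\ast)$, where $T^\ast \subseteq W$ attains $\phi^{\psi}_{H,\rho}$ -- is automatically fiber-constant and hence feasible for the modified SDP, so the modified SDP is still a valid relaxation and its optimum is at most the value attained by $T^\ast$.

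Next, apply the \cite{LM14} rounding verbatim. Because every random-threshold or hyperplane step acts on vectors that have been forcibly identified across each fiber, it produces the same boolean outcome for all vertices in a common fiber; the rounded set $S \subseteq V$ is therefore automatically a union of fibers. Setting $T := \psi(S)$ then gives $\psi^{-1}(T) = S$, $|T| = |S|/k$, and $\phi^{\psi}_H(T) = \phi_H(S)$. The size guarantee $|S| \le (1+\delta)\rho |V|$ from Theorem~\ref{vertex_ex_alg} translates to $|T| \le (1+\delta)\rho |W|$, and the quantitative bound on $\phi_H(S)$, which depends only on $H$, its maximum degree $d'$, and the SDP value, immediately yields the claimed estimate on $\phi^{\psi}_H(T)$.

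The main point to verify is that every analytic step of the \cite{LM14} rounding -- the dimension reduction, the Cheeger-style inequality, and the Gaussian concentration estimates used for thresholding -- remains valid after restricting the SDP to the fiber-constant subspace of $\mathbb{R}^V$. Each of these is a per-vertex bound stated in terms of $H$ and of invariants such as $\|\bar{x}_v\|^2$ that are untouched by the identification, and restricting the SDP to a subspace can only shrink its feasible region, so the analysis transfers without loss. This is the essential reason we obtain dependence $\sqrt{d'}$ rather than $\sqrt{k\, d'}$, the latter being what a naive black-box reduction -- quotienting $H$ by $\psi$ into a graph on $W$ of maximum degree $k d'$ and then invoking Theorem~\ref{vertex_ex_alg} directly -- would produce.
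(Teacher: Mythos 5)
Your proposal is correct and matches the paper's proof essentially step for step: the paper also adds the fiber-constancy constraint $\|\vec{v}_j - \vec{v}_k\|_2^2 = 0$ for all $j,k$ with $\psi(j)=\psi(k)$ (its constraint $(**)$), observes that the orthogonal-separator rounding of \cite{LM14} then assigns the same value to every vertex in a fiber so the output is a union of fibers, and concludes that the size and expansion guarantees carry over via the identity $|T|/|W|=\sum_j x_j/|V|$.
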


We briefly explain why the algorithm in \cite{LM14} works with $\psi$. In \cite{LRV13}, Louis et.al. prove the equivalence of vertex expansion and symmetric vertex expansion, which is defined to be $\phi^V(S)=\frac{|(\Gamma(S)\setminus S) \cup (\Gamma(\bar{S}) \setminus \bar{S})|}{\min\{S,\bar{S}\}}$. In \cite{LM14}, Louis and Makarychev relax symmetric vertex expansion as a semidefinite program with $L_2^2$ inequality. The main tool in the algorithm\cite{LM14} is orthogonal separators which is introduced by \cite{CMM06,BFKMNNS}. We modify the SDP by adding an extra constrain $(**)$:
\begin{align*}
&\min \sum_{j \in V} \max_{j_1\in \Gamma(j),j_2\in \Gamma(j)} \{ \|\vec{v}_{j_1} - \vec{v}_{j_2}\|^2_2\}\\
\text{Subject to }& \sum_j \|\vec{v}_j\|_2^2=1\\
& \sum_{k} \langle \vec{v}_j, \vec{v}_k \rangle \le \rho |V| \cdot \| \vec{v}_j \|^2_2 & \forall j\in V\\
& \| \vec{v}_k - \vec{v}_j \|^2_2 + \| \vec{v}_l - \vec{v}_j \|^2_2 \ge \| \vec{v}_k - \vec{v}_l \|^2_2 & \forall k,l,j \in V\\
& 0 \le \langle \vec{v}_j,\vec{v}_k \rangle \le \|\vec{v}_j\|_2^2 & \forall j,k \in V\\
& \|\vec{v}_j - \vec{v}_k\|_2^2=0 &\forall j,k \in V:\psi(j)=\psi(k) \tag{**}
\end{align*} 
We apply the same rounding process in Theorem \ref{vertex_ex_alg} and choose $i \in [W]$ in $T$ or not according to the value of an arbitrary element in $\psi^{-1}(i)$. The orthogonal separator in Theorem \ref{vertex_ex_alg} rounds $\vec{v}_j$ to $x_j \in \{0,1\}$ for every vector $j\in V$. All vectors in $\psi^{-1}(i)$ for $i\in [W]$ will be rounded into the same value, because $\vec{v}_j=\vec{v}_k$ for all $j,k \in \psi^{-1}(i)$ and the rounding value only depends on the vector in the orthogonal separator. Therefore we choose $i$ to be in $T$ or not according to the value $x_j$ for $j \in \psi^{-1}(i)$. Because $\psi$ is a $k$-to-1 map, $\frac{|T|}{|W|}=\frac{\sum_j x_j}{|V|}$ and the algorithm has the same guarantees.

\begin{theorem}
There exists a polynomial time algorithm that given a regular bipartite graph $G=([N],[M],E)$ with $d|D$ that is not a $(\rho N,\rho(1+\epsilon)M)$-disperser, finds a subset $S$ with size $(1\pm \delta)\rho N$ and $\Gamma(S) \le (1+O_{\delta}\big( \sqrt{\epsilon \log (d+D/d)} \cdot \frac{1}{\rho} \cdot \log \frac{1}{\rho} \log \log \frac{1}{\rho}+\epsilon \cdot \rho^{-1}\big)) |S|$.
\end{theorem}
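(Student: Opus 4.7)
The plan is to reduce the bipartite disperser question to a small-set vertex-expansion instance on an auxiliary graph $H$, and then invoke Corollary~\ref{alg_map}. The divisibility assumption $d\mid D$ is used only to ensure that the natural quotient map is uniformly $k$-to-$1$, as required by the corollary.

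I construct $H$ on the vertex set $V=E(G)$ (so $|V|=DN=dM$), declaring two edges adjacent iff they share a right endpoint. Each $j\in[M]$ then contributes a clique of size $d$, so $H$ is exactly $(d-1)$-regular. Define the projection $\psi\colon V\to W:=[N]$ by $\psi(i,j)=i$; since $G$ is $D$-regular on the left, $\psi$ is a uniform $D$-to-$1$ map. For any $S\subseteq[N]$,
\[
|\Gamma_H(\psi^{-1}(S))\setminus\psi^{-1}(S)|=\sum_{j\in\Gamma_G(S)}\bigl(d-|\Gamma_G(j)\cap S|\bigr)=d\,|\Gamma_G(S)|-D\,|S|,
\]
which rearranges to the key identity
\[
\phi^{\psi}_H(S)=\frac{d\,|\Gamma_G(S)|}{D\,|S|}-1=\frac{|\Gamma_G(S)|}{|S|\cdot M/N}-1.
\]
In particular, $G$ failing to be a $(\rho N,\rho(1+\epsilon)M)$-disperser is exactly the statement $\phi^{\psi}_{H,\rho}\le\epsilon$.

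Now apply Corollary~\ref{alg_map} to $(H,\psi)$ with density $\rho$ and max degree $d'=d-1$. It returns in polynomial time a subset $T\subseteq[N]$ with $|T|\le(1+\delta)\rho N$ and $\phi^{\psi}_H(T)\le O_\delta\bigl(\sqrt{\epsilon\log d'}\cdot\rho^{-1}\log(1/\rho)\log\log(1/\rho)+\epsilon/\rho\bigr)$ (using the $\sqrt{\phi\cdot\log d'}$ form from the underlying Louis--Makarychev bound). Plugging this back into the identity immediately yields $|\Gamma_G(T)|\le\bigl(1+O_\delta(\cdots)\bigr)\,|T|\cdot M/N$, which fits inside the theorem's envelope $\sqrt{\epsilon\log(d+D/d)}$ since $\log(d-1)\le\log(d+D/d)$.

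The main remaining obstacle is enforcing $|T|\ge(1-\delta)\rho N$ from below, since Corollary~\ref{alg_map} only gives the upper bound. I would handle it by strengthening the SDP underlying Corollary~\ref{alg_map} with the equality normalization $\sum_{u\in V}\|\vec{v}_u\|_2^2=\rho|V|$ (rather than the implicit $\le$) and showing, via concentration of the orthogonal-separator rounding of \cite{CMM06,BFKMNNS,LM14}, that the output size remains tightly concentrated around $\rho|V|$ up to the allowed $(1\pm\delta)$ slack; alternatively one can simply iterate the algorithm, removing the current output from $[N]$ and re-running until the running union lies in $[(1-\delta)\rho N,(1+\delta)\rho N]$. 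A secondary subtlety concerns the exact form of the underlying bound: Corollary~\ref{alg_map} as restated writes $\sqrt{\phi\cdot d'}$ in place of the $\sqrt{\phi\cdot\log d'}$ that actually appears in \cite{LM14}; I rely on the latter to match the claimed $\log(d+D/d)$ factor in the theorem.
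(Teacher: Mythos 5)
Your reduction is correct and yields the theorem, but it uses a genuinely different auxiliary graph from the paper's. The paper works on vertex set $[M]$: it fixes a $k$-to-$1$ map $\psi\colon[M]\to[N]$ via Hall's theorem (this is where $d\mid D$ is used), partitions each left neighborhood $\Gamma_G(j)$ into $k$ blocks of size $d$, and connects the $l$th preimage $i_l\in\psi^{-1}(j)$ to all of $N_l$, producing a $2d$-regular graph $H$ on $[M]$. You instead take $V=E(G)$, join two edges when they share a right endpoint (a disjoint union of $d$-cliques, $(d-1)$-regular), and use the canonical projection $\psi(i,j)=i$, which is automatically $D$-to-$1$ because $G$ is left-regular. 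Both constructions produce the same key identity $\phi^{\psi}_{H}(S)=\frac{|\Gamma_G(S)|}{|S|\cdot M/N}-1$, so both reduce the disperser gap directly to small-set vertex expansion with the $\psi$-constraint. Your choice is cleaner in two respects: the graph and its degree $d-1$ are explicit with no arbitrary partitioning, and — although you attribute $d\mid D$ to making $\psi$ uniformly $k$-to-$1$ — your $\psi$ is uniformly $D$-to-$1$ regardless, so your reduction in fact \emph{eliminates} the divisibility hypothesis that the paper's construction needs. You are also right that Theorem~\ref{vertex_ex_alg}/Corollary~\ref{alg_map} as printed should read $\sqrt{\phi\cdot\log d'}$ (consistent with \cite{LM14} and with the final $\sqrt{\epsilon\log(d+D/d)}$ statement); with that correction, $d'=d-1$ gives $\log(d-1)\le\log(d+D/d)$ as needed. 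For the lower bound on $|T|$, the paper does exactly the iteration you sketch as the alternative — apply Corollary~\ref{alg_map}, and if the output is smaller than $(1-\delta)\rho N$ then delete it, shrink $\rho$ accordingly, and repeat, arguing that the residual instance still has $\phi^{\psi}_{H,\rho'}\le\epsilon/\delta$ because at least a $\delta$-fraction of the witness $S_0$ survives — so your plan is sound, though you would want to spell out the shrinking $\rho'$ and the surviving-witness bound as the paper does rather than relying on SDP concentration, which the paper does not establish.
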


\begin{proof}
Recall $D$ and $d$ are the left and right degree of $G=([N],[M],E)$ respectively. Let $k=M/N=D/d$. At first, there exists a $k$-to-1 map $\psi$ from $[M]$ to $[N]$ by Hall's theorem. It is enough to consider an equivalent problem of vertex expansion on $H=([M],E)$ that is not necessarily bipartite:
\begin{enumerate}
\item For every vertex $j \in [N]$, partition $\Gamma(j)$ into $k$ groups $N_1,N_2,\cdots,N_k$.
\item Let $\{i_1,\cdots,i_k\}=\psi^{-1}(j)$.
\item For every $l \in [k]$, connect $i_l$ to every vertex of $N_l$ in $H$(allow self-edge here). So the degree of every vertex in $H$ is $2d$.
\item The new problem in $H$ is to find a subset $T \subset [N]$ with size $\le \rho N$ that minimizes $\phi_H(\psi^{-1}(T))$.
\end{enumerate}
For any subset $T$ in $[N]$, $\Gamma(T)$ in $G$ equals to $\Gamma_H(\psi^{-1}(T))$ in $H$ because if $j\in [M]$ is a neighbor of a vertex $i$ in $T \subseteq [N]$, $j$ is connected to some node of $\psi^{-1}(i)$ in $H$ from the construction. Another property of the construction is $\psi^{-1}(T) \subseteq \Gamma_H(\psi^{-1}(T))$ for any subset $T\subset [N]$. 

Because $G$ is not a $(\rho N, \rho(1+\epsilon)M)$ disperser, there exists $S_0 \subset [N]$ with size $\rho N$ and $\Gamma(S_0) \le \rho(1+\epsilon)M$ in the bipartite graph. Therefore $\phi_H(\psi^{-1}(S_0)) \le \frac{\rho \cdot \epsilon M}{\rho M}\le \epsilon$. Repeat applying the algorithm in Corollary \ref{alg_map} at most $\rho n$ times to find a subset as follows: 
\begin{enumerate}
\item Apply the algorithm on $H$ with $\psi$ and $\rho$ to finding a subset $T_0\le \rho(1+\delta)N$ such that  $\phi^{\psi}_H(T_0) \le O_{\delta}(\sqrt{\epsilon \cdot d'} \cdot \frac{1}{\rho}\log \frac{1}{\rho} \log \log \frac{1}{\rho} +\epsilon/\rho)\le \tilde{O}_{\delta,\rho}(\sqrt{\epsilon \cdot 2d} \cdot \frac{1}{\rho})$. If $|T_0|>\rho(1-\delta)N$, then return $T=T_0$. 
\item Otherwise consider $H'=H\setminus T_0$ and $S_1 = S_0\setminus T_0$, $\phi(\psi^{-1}(S_1)) \le \frac{|\Gamma(S_0\setminus T_0)\setminus S_1|}{|S_1|}\le \frac{|\Gamma(S_0)\setminus S_0|}{|S_1|}\le \frac{\epsilon}{\delta}$. Then apply the algorithm in Corollary \ref{alg_map} on $H'$ with $\psi$ and $\rho'=\rho-\frac{T_0}{N}$ to finding a subset $T_1 \le \rho'(1+\delta)N$. If $|T_0 \cup T_1| \ge \rho(1-\delta)N$, then return $T_0 \cup T_1$. 
\item Otherwise consider $H' \setminus T_1$ and $S_1\setminus T_1$ again.
\item Eventually the algorithm will find a subset $T=T_0 \cup T_1 \cup \cdots \cup T_{\rho N}$ such that $|T| \in [\rho(1-\delta)N,\rho(1+\delta)N]$ and $\Gamma(T) \subseteq \Gamma(T_0)+\Gamma(T_1)+\cdots+\Gamma(T_{\rho N})$. From the guarantee of each $T_i$, $\phi(\psi^{-1}(T)) \le \tilde{O}(\sqrt{2d \frac{\epsilon}{\delta}} \cdot \frac{1}{\rho})$.
\end{enumerate}
\end{proof}

\section{Acknowledgement}
We are grateful to David Zuckerman for his introduction to this problem, as well as for many fruitful discussions without which this work would not have been completed. Thanks for the suggestions and reviews from anonymous reviewers especially for the suggestion of the name ``list Constraint Satisfaction problems''.

%\newpage

\bibliographystyle{99}
\nocite{*}
\bibliography{Las_Dis}

%%%%%%%%%%%%%%%%%%%%%%%%%%%%%%%%%%%%%%%%%%%%%%%%%%%%%%%%%%%%%%%%%%%%%%%%%%%%%%%%%%%

\end{document}